\DeclareRobustCommand{\cev}[1]{%
  \mathpalette\do@cev{#1}%
}
\newcounter{theoremcounter}
\newtheorem{theorem}{Theorem}
\newtheorem{lemma}{Lemma}
\newenvironment{proof}[1][Proof]{\begin{trivlist}
\item[\hskip \labelsep {\bfseries #1}]}{\end{trivlist}}
\newcommand{\qed}{\hfill $\blacksquare$}
\newcommand{\ket}[1]{|#1\rangle}
\newcommand{\braket}[2]{\langle #1|#2\rangle}
\newcommand{\expval}[3]{\langle #1|#2|#3\rangle}
\newcommand{\partder}[2]{\frac{\partial #1}{\partial #2}}
\newcommand{\der}[2]{\frac{\mbox d #1}{\mbox d #2}}
\let\Re\relax
\DeclareMathOperator{\Re}{\text{Re}}
\definecolor{blue(ryb)}{rgb}{0.01, 0.28, 1.0}
\definecolor{scarlet}{rgb}{1.0, 0.13, 0.0}
\newcommand{\vast}{\bBigg@{4}}
\newcommand{\Vast}{\bBigg@{5}}
\newcommand{\snlca}{Quantum Algorithms and Applications Collaboratory, Sandia National Laboratories, Livermore CA 94550, U.S.A.}
\newcommand{\snlnm}{Quantum Algorithms and Applications Collaboratory, Sandia National Laboratories, Albuquerque NM 87185, U.S.A.}
\newcommand{\papertitle}{Self-healing of Trotter error in digital adiabatic state preparation} %
\begin{document}
\title{\papertitle}
\author{Lucas K. Kovalsky}
\email{lkocia@sandia.gov}
\affiliation{\snlca}
\author{Fernando A. Calderon-Vargas}
\affiliation{\snlca}
\author{Matthew D. Grace}
\affiliation{\snlca}
\author{Alicia B. Magann}
\affiliation{\snlnm}
\author{James B. Larsen}
\affiliation{\snlnm}
\affiliation{Department of Mathematics, Brigham Young University, Provo, Utah 84602, U.S.A.}
\author{Andrew D. Baczewski}
\email{adbacze@sandia.gov}
\affiliation{\snlnm}
\author{Mohan Sarovar}
\email{mnsarov@sandia.gov}
\affiliation{\snlca}
\date{\today }
\begin{abstract}
Adiabatic time evolution can be used to prepare a complicated quantum many-body state from one that is easier to synthesize and Trotterization can be used to implement such an evolution digitally.
The complex interplay between non-adiabaticity and digitization influences the infidelity of this process.
We prove that the first-order Trotterization of a complete adiabatic evolution has a cumulative infidelity that scales as $\mathcal O(T^{-2} \delta t^2)$ instead of $\mathcal O(T^2 \delta t^2)$ expected from general Trotter error bounds, where $\delta t$ is the time step and $T$ is the total time.
This result suggests a self-healing mechanism and explains why, despite increasing $T$, infidelities for fixed-$\delta t$ digitized evolutions still decrease for a wide variety of Hamiltonians.
It also establishes a correspondence between the Quantum Approximate Optimization Algorithm (QAOA) and digitized quantum annealing.
\end{abstract}
\maketitle

Preparing the ground state of a quantum many-body Hamiltonian is generically difficult~\cite{kitaev2002,kempe2006complexity,Poulin09}.
Nevertheless, because we frequently observe systems near their ground state in nature, we expect to be able to efficiently prepare these states in laboratories or on quantum computers for a wide range of physical Hamiltonians~\cite{feynman1982simulating,Deutsch85,brooke1999quantum,du2010nmr,preskill2018simulating}.
One approach to ground state preparation is through an adiabatic evolution that interpolates between a Hamiltonian with an easy-to-prepare ground state ($H_1$) and a Hamiltonian with the target ground state ($H_2$).
This has applications in quantum computation~\cite{aharonov2008adiabatic,albash2018adiabatic}, linear algebra~\cite{subacsi2019quantum,costa2022optimal,an2022quantum}, optimization~\cite{farhi2000quantum}, and simulation~\cite{Alan05,jordan2012quantum,lee2023evaluating}.
Realizing these applications requires an understanding of the sources of error in adiabatic state preparation (ASP).

The total error is often quantified as the infidelity, $\mathcal{I}$, of the prepared state relative to the ideal target state.
An ever-present contribution to $\mathcal{I}$ is due to the fact that such an evolution cannot proceed infinitely slowly in practice~\cite{Landau32,Zener32}.
Digitizing the evolution into $r$ time steps via Trotterization~\cite{Trotter59,Suzuki76}, as would be necessary on a gate-based quantum computer~\cite{Lloyd96,van2001powerful}, introduces a second influence on $\mathcal{I}$ due to the fact that we cannot exactly represent the ideal continuous-time dynamics.
This Letter explores the interplay between these effects.

We show that certain errors cancel out over the course of a \emph{complete} adiabatic evolution from $H_1$ to $H_2$ requiring time $T$.
One should expect digitization to degrade $\mathcal{I}$ for larger $T$ with a fixed time step $\delta t=T/r$, \textit{i.e.,} more time steps lead to more accumulation of errors.
Indeed, a generic upper bound on $\mathcal{I}$ for first-order Trotterization suggests that its error scales as $\mathcal{O}(T^2\delta t^2)$.
This would mean that $\delta t$ needs to decrease to realize a fixed $\mathcal{I}$ as $T$ increases.
However, numerical results suggest that $\mathcal{I}$ \emph{decreases} with increasing $T$, even for fixed $\delta t$.

We present a less generic upper bound on $\mathcal{I}$ in Theorem~\ref{th:layden_timedep_informal}, similar to one recently proved for time-independent Hamiltonians by Layden~\cite{Layden21}.
This bound relies on adiabaticity, but not on the evolution being complete.
While it improves the scaling with $T$, it does not explain reductions in $\mathcal{I}$ with increasing $T$ for fixed $\delta t$.

However, Theorem~\ref{th:adiabatic_trotter_error} gives conditions under which $\mathcal{I}$ instead \emph{improves} with the duration of the evolution, consistent with numerics.
This bound relies on both adiabaticity and the evolution being complete.
Critically, it allows for further Trotterization of $H_1$ and $H_2$, as might be required in applications like quantum simulation.
It also does not rely on the ordering of the Trotterization.
Fig.~\ref{fig:sketch} summarizes the primary consequence of Theorem~\ref{th:adiabatic_trotter_error}.

\begin{figure}[t]
\includegraphics[scale=1.0]{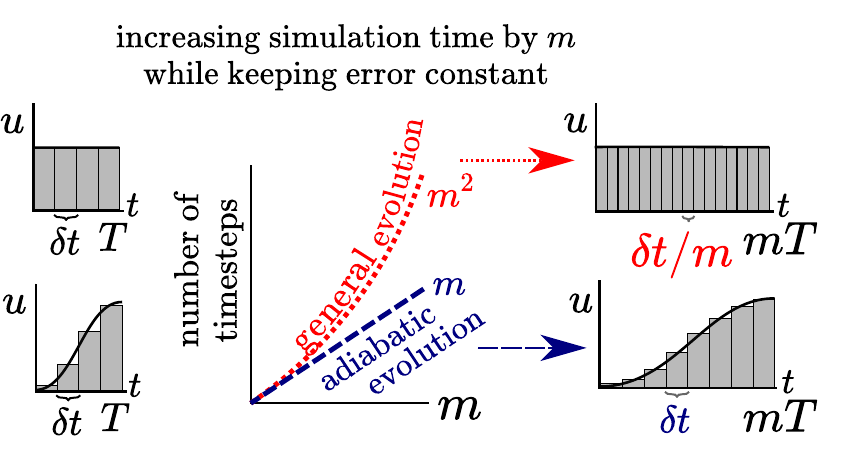}
\caption{
Consider a Hamiltonian with two terms, $H\left[u(t)\right] = (1-u(t))H_1 + u(t) H_2$.
A time evolution from $t=0$ to $t=T$ can be broken into steps of size $\delta t$ through first-order Trotterization (left).
For a general evolution of this sort, previously established bounds suggest that cumulative infidelity should grow with total time $T$ for fixed $\delta t$.
Then, scaling $T$ by $m$ should require dividing $\delta t$ by $m$ to preserve a fixed infidelity (top right).
However, for an adiabatic evolution from $u(0)=0$ to $u(T)=1$, we show that cumulative infidelity actually scales as $\mathcal{O}(T^{-2}\delta t^2) + \mathcal{O}(T^{-2})$ (Th.~\ref{th:adiabatic_trotter_error}).
This allows us to keep $\delta t$ fixed if we scale $T$ by $m$, to achieve a fixed or decreasing infidelity, consistent with the continuous-time limit (bottom right).
Thus, the total number of time steps will be at least quadratically lower than expected (center).}
\label{fig:sketch}
\end{figure}

The crossover between the bounds in Theorems~\ref{th:layden_timedep_informal} and \ref{th:adiabatic_trotter_error} suggest a \emph{self-healing} mechanism for complete adiabatic evolutions.
There is a sense in which $\mathcal{I}$ gets worse before it gets better, and this is supported by numerical results.
We will also show that this explains a relationship between Trotterized quantum annealing and the Quantum Approximate Optimization Algorithm (QAOA)~\cite{Steffen03,Farhi16,Zhou20}.

We begin by considering an adiabatic evolution generated by a time-dependent Hamiltonian, \(H[u(t)] = (1-u(t))H_1 + u(t)H_2\), for \(u(t) \in [0,1]\) where $u(t)$ is $0$ at $t=0$ and $1$ at $t=T$.
The unitary associated with the continuous-time dynamics is $U(t) = \mathcal{T} \exp(-i\int_0^t H[u(t')]{\rm d}t{'})$, where $\mathcal{T}$ is the time-ordering operator.
$U(t)$ is approximated by digitizing the time evolution with first-order Trotterization~\cite{Trotter59,Suzuki76,Lloyd96,Childs19,Childs21}.
We will be particularly interested in the Trotterization of the complete adiabatic evolution,
\begin{equation}
  \label{eq:Utrot}
  U(T) \approx U^{(1)}(T) = \prod_{k=1}^r \prod_{i=1}^2 U_i((k{-}1)\delta t, k\delta t),
\end{equation}
where \(U_1((k{-}1) \delta t, k\delta t) = e^{-iH_1 \int_{(k-1) \delta t}^{k \delta t} \text d t' (1 - u(t'))}\) and \(U_2((k{-}1) \delta t, k\delta t) = e^{-iH_2 \int_{(k-1) \delta t}^{k \delta t} \text d t' u(t')}\)~\footnote{Note that this definition is not unique. The ordering could be reversed in $U^{(1)}$, \textit{i.e.,} $U_2U_1$. While the precise value of the Trotter error will depend on the order, the scaling with $\delta t$ and $T$ is independent of this choice.}.

The error incurred by splitting the exponential this way is typically called the Trotter error and scales as $||U(T) - U^{(1)}(T)||=\mathcal{O}(T \delta t||[H_1,H_2]||)$, where $||\cdot ||$ is the operator norm.
Note that $T\delta t = r\delta t^2 = T^2/r$, and this scaling represents the leading-order contribution to the error in $\delta t$.
It can be derived by bounding the error in a single time step and applying the triangle inequality to aggregate the error over all $r$ steps~\cite{Trotter59,Suzuki85}.
When \(\mathcal I\) is defined as \(\mathcal{I}(T)\equiv 1 - |\expval{\psi}{U^\dagger(\infty) U^{(1)}(T)}{\psi}|\)~\footnote{Notice that our definition of infidelity differs a bit from the more common \(1 - |\expval{\psi}{U^\dagger(\infty) U^{(1)}(T)}{\psi}|^2\).}, then it accounts for infidelity from both digitization and non-adiabaticity.
Specifically, \(\mathcal I\) is upper bounded by the squared sum of the Trotter error \(\|U(T) - U^{(1)}(T)\|\) and an energy gap-dependent \(\mathcal O(T^{-1})\) term accounting for non-adiabaticity~(see Lemma 4~\cite{SMref}). Theorems~\ref{th:layden_timedep_informal} and~\ref{th:adiabatic_trotter_error} improve on this bound for less generic adiabatic evolutions.

That this is possible is motivated by numerical investigations.
Fig.~\ref{fig:scaling} shows $\mathcal{I}$ as a function of $\delta t$ for a simple two-level system described in terms of Pauli matrices.
For this example, $H_1=X, H_2=Z$, and the schedule $u(t)=t/T$ is a linear ramp.
While this is the simplest possible example, the phenomenology that it captures generalizes to more complicated choices for $H_1$ and $H_2$, some of which are considered in the Supplemental Materials (SM)~\cite{SMref}.
In the red-shaded region of Fig.~\ref{fig:scaling}, $\delta t \|H(t)\| \in \Omega(1)$ and the Trotter product formula is non-convergent in this region.
We do not expect predictable scaling of error with $\delta t$ in this region and focus on the behavior in the white and green-shaded regions.

We draw attention to two interesting features in Fig.~\ref{fig:scaling}.
First, the observed scaling of $\mathcal{I}$ is much more favorable than the scaling suggested by the generic $\mathcal{O}(T^2 \delta t^2)$ upper bound given above.
We instead see a $\mathcal{O}(T^{-2}\delta t^2)$ scaling, which is completely inconsistent with the expectation that error should increase with $T$.
This inconsistency is due to the use of the triangle inequality in deriving the generic Trotter error bound, which neglects error cancellation effects evident in Fig.~\hyperref[fig:scaling]{\ref*{fig:scaling}(c)}.
The second feature is that $\mathcal{I}$ asymptotes to a $\delta t$-independent quantity as $\delta t\rightarrow 0$ in Fig.~\hyperref[fig:scaling]{\ref*{fig:scaling}(a)}.
In this green-shaded region the digitization error is smaller than the finite-$T$ non-adiabaticity error, and thus $\mathcal{I}$ is independent of $\delta t$ and consistent with $\mathcal O(T^{-2})$ bounds on the continuous-time adiabatic evolution~\cite{Mackenzie06,Cheung11}.

\begin{figure}[t]
\includegraphics[scale=1.0]{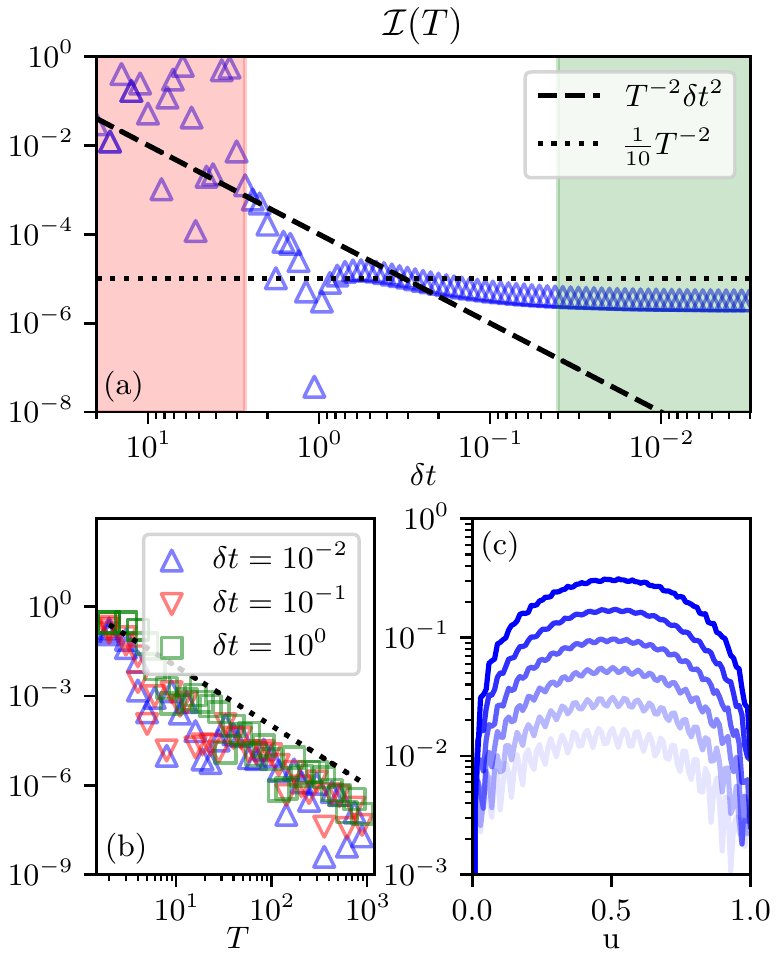}
\caption{
Infidelity of digitized ASP for a two-level system.
(a) Scaling of $\mathcal{I}(T)$ with Trotter time step size $\delta t$, for fixed evolution time $T=100$. %
The dotted/dashed lines show scalings that dominate the upper bound on $\mathcal{I}(T)$ for different values of $\delta t$.
(b) Scaling of $\mathcal{I}(T)$ with $T$ for different values of $\delta t \leq 1$.
The dashed line shows $T^{-2}$ scaling, rather than a generic ${\sim}T^2$ scaling that might be expected due to digitization error.
(c) Intermediate infidelities as a function of $u$, with increasing $\delta t$ corresponding to increasing opacity.
The larger the value of \(\delta t\), the higher the peak of intermediate infidelity at \(u=1/2\).
A self-healing mechanism for complete adiabatic evolutions is evident, as the infidelity comes back down for $u=1$.
}
\label{fig:scaling}
\end{figure}

Our first result is an improved upper bound on the Trotter error associated with the approximation in Eq.~\ref{eq:Utrot}.
\begin{theorem}[Informal]
\label{th:layden_timedep_informal}
Given a gapped Hamiltonian \(H[u(t)] = (1-u(t)) H_1 + u(t) H_2\) and a unitary \(U(T) = \mathcal{T} e^{-i\int_0^T H[u(t')] {\rm d} t'} \), where \(u(t) = s(t/T)\) and $s\!:\![0,1]\!\rightarrow\![0,1]$, if this unitary is first-order Trotterized into \(H_1\) and \(H_2\) terms with \(r\) time steps of size \(\delta t\), as in Eq. \eqref{eq:Utrot}, then as \(T \rightarrow \infty\), $\mathcal{I}(T)$ is upper bounded by
\(\min\{1, C_2T^2 \delta t^2, (C_1 \delta t + C_3 T \delta t^2)^2\} + \mathcal O(\delta t) + \mathcal O(T^{-2})\).
\end{theorem}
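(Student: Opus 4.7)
The overall plan is to decompose the infidelity into a non-adiabatic piece and a digitization piece, bound each separately, and then show that the digitization piece admits a much tighter bound than $\mathcal{O}(T^2\delta t^2)$ by exploiting the fact that throughout the evolution the state remains close to the instantaneous ground state $|n(u(t))\rangle$ of $H[u(t)]$. Concretely, I would write
\begin{equation*}
\mathcal{I}(T) \le \bigl|\langle\psi|U^\dagger(\infty)U(T)|\psi\rangle - \langle\psi|U^\dagger(\infty)U^{(1)}(T)|\psi\rangle\bigr| + \bigl(1 - |\langle\psi|U^\dagger(\infty)U(T)|\psi\rangle|\bigr),
\end{equation*}
so that the second term is the continuous-time non-adiabatic error, controlled by the standard gapped adiabatic theorem to give the $\mathcal{O}(T^{-2})$ contribution (using that the schedule $u(t)=s(t/T)$ and its derivatives vanish appropriately at the endpoints), while the first term is purely a Trotter error acting on (approximately) the instantaneous eigenstate.

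For the Trotter piece I would adapt Layden's time-independent argument. Writing a single step using Baker--Campbell--Hausdorff gives
\begin{equation*}
e^{-iH_1\alpha_k}e^{-iH_2\beta_k} = e^{-iH[u_k]\delta t}\bigl(\mathds{1} - \tfrac{1}{2}\alpha_k\beta_k[H_1,H_2] + \mathcal{O}(\delta t^3)\bigr),
\end{equation*}
with $\alpha_k$ and $\beta_k$ the integrals of $(1-u)$ and $u$ over the $k$th interval. Projecting onto the instantaneous eigenbasis, the commutator decomposes as $[H_1,H_2] = D_k + O_k$, where $D_k$ is diagonal (contributing only a phase, hence not to infidelity) and $O_k$ is off-diagonal. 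Because $\dot H = \dot u\,(H_2-H_1)$, the off-diagonal matrix elements of $O_k$ are essentially matrix elements of $\dot H$ divided by gaps, so they are smooth functions of $t$ with time derivatives of order $1/T$ (from $\dot u = \dot s/T$). This reduces the problem to controlling the coherent sum $\sum_k\alpha_k\beta_k\langle m(u_k)|O_k|n(u_k)\rangle$ of smoothly varying per-step amplitudes.

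The central technical step, and the main obstacle, is to show that this coherent sum scales as $C_1\delta t + C_3 T\delta t^2$ rather than as $T\delta t$. I would treat the sum as a Riemann-sum approximation to $\int_0^T g(t)\,dt$, where $g(t)$ is the smooth off-diagonal integrand whose prefactor is $\mathcal{O}(\delta t)$. A summation-by-parts (or Euler--Maclaurin) argument then separates the sum into an integral over the schedule phase $\exp(i\int_0^t\omega_{mn}(t')dt')$ times $g$, plus endpoint terms; since $\omega_{mn}$ is bounded below by the gap while $\dot g = \mathcal{O}(1/T)$, integration by parts in the oscillatory integral produces a bulk term of order $T\cdot(1/T)\cdot\delta t=\delta t$ times $\delta t$ (yielding $T\delta t^2$) and a boundary term of order $\delta t$, giving the desired off-diagonal amplitude $C_1\delta t + C_3T\delta t^2$. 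Squaring this amplitude produces the Layden-style bound $(C_1\delta t+C_3T\delta t^2)^2$; retaining the cruder triangle-inequality bound $C_2T^2\delta t^2$ and the trivial bound $1$ yields the minimum; and the remaining $\mathcal{O}(\delta t)$ correction arises from cross-terms between the leading $\mathcal{O}(T^{-1})$ nonadiabatic amplitude and the Trotter amplitude. The most delicate point is handling the time dependence of the eigenbasis in the summation-by-parts step, which is where the gap assumption and smoothness of $s$ are essential.
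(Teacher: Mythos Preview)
Your route is genuinely different from the paper's, and while the scaling you are after is attainable along your lines, two of your intermediate claims are wrong as stated and the accounting that produces $C_1\delta t + C_3 T\delta t^2$ is misattributed.

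\textbf{How the paper actually proceeds.} The paper never works in the adiabatic eigenbasis for Theorem~1. It is an operator-norm argument that directly adapts Layden's time-independent trick: one observes that the first-order product $U^{(1)}(T)$ and the second-order product $U^{(2)}(T)$ differ only by half-step endpoint factors, so
\[
\|U(T)-U^{(1)}(T)\| \le \|U(T)-U^{(2)}(T)\| + \max_k\bigl\|\,[\,U(T),\,e^{-iH_2\int_{(k-1)\delta t}^{k\delta t}u}\,]\,\bigr\|.
\]
The first term is bounded by a time-dependent second-order Trotter estimate (Lemma~3), which in the adiabatic limit reduces to the familiar $\tfrac{1}{12}(\tfrac12\|[H_1,[H_1,H_2]]\|+\|[H_2,[H_2,H_1]]\|)\,T\delta t^2$; the endpoint commutator is bounded via Kubo's identity by $\delta t\min\{\|H_1\|,\|H_2\|\}$. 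Squaring and adding the non-adiabatic $\mathcal{O}(T^{-1})$ correction (Lemma~4) gives the stated infidelity bound. No oscillatory cancellation or eigenbasis projection is used; adiabaticity enters only to kill the $H'$, $H''$ terms in the second-order estimate.

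\textbf{Where your proposal slips.} First, your identification ``the off-diagonal matrix elements of $O_k$ are essentially matrix elements of $\dot H$ divided by gaps'' is incorrect: $[H_1,H_2]$ is a fixed, $T$-independent operator, and its off-diagonal matrix elements in the instantaneous basis are $\mathcal{O}(1)$, not $\mathcal{O}(\dot u)$. What \emph{is} true (and is all you need) is that these matrix elements are smooth in $t$ with $t$-derivatives of order $1/T$. Second, your bookkeeping for $C_3 T\delta t^2$ is off. Writing the accumulated excited amplitude as $\tfrac{\delta t}{2}\sum_k f(t_k)e^{i\phi_k}\delta t$ with $f=\mathcal{O}(1)$, $f'=\mathcal{O}(1/T)$, the oscillatory \emph{integral} $\int_0^T f e^{i\phi}$ is $\mathcal{O}(1)$ after one integration by parts (boundary $\mathcal{O}(1)$, bulk $\mathcal{O}(1)$), which gives the $C_1\delta t$ piece. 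The $C_3 T\delta t^2$ term does not come from the bulk of that integral as you wrote; it comes from the \emph{Riemann-sum discrepancy} between $\sum$ and $\int$, since $(fe^{i\phi})'$ contains an $\mathcal{O}(1)$ piece $if\omega e^{i\phi}$, giving a discretization error of order $\delta t\cdot T$ which, after the overall $\delta t/2$ prefactor, yields $T\delta t^2$.

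\textbf{Comparison.} Your strategy is essentially the mechanism the paper reserves for Theorem~2 (adiabatic-basis perturbation theory with harmonic/oscillatory analysis). It is state-specific and would not directly give the operator-norm constants $C_1,C_2,C_3$ in terms of $\|H_i\|$ and nested commutators that the paper states. The paper's argument is shorter, yields an operator-norm bound with explicit constants, and cleanly separates the ``endpoint'' $C_1\delta t$ from the ``bulk'' $C_3 T\delta t^2$ as a structural fact about first- versus second-order product formulas rather than as an oscillatory-cancellation phenomenon.
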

The formal version of Theorem~\ref{th:layden_timedep_informal} defines the coefficients $C_i$ that determine the relative magnitudes of the contributions to the error~\cite{SMref}. The coefficients of the non-adiabatic \(\mathcal O(\delta t)\) and \(\mathcal O(T^{-2})\) terms depend on the energy gap~\cite{Mackenzie06,Cheung11}.
We note that Theorem~\ref{th:layden_timedep_informal} only applies to two-term first-order Trotterization, which means that $U_1$ and $U_2$ cannot be further Trotterized.

Theorem~\ref{th:layden_timedep_informal} tightens the generic \(\mathcal O(T^2 \delta t^2)\) scaling of \(\mathcal I(T)\) to \(\mathcal O(T^2 \delta t^4)\) and \(\mathcal O(\delta t^2)\) in some regimes dependent on the \(C_i\) coefficients.
The proof involves combining the first-order Trotter error from subsequent time steps into cumulative second-order Trotter error, similar to a recent bound for evolution under a time-independent Hamiltonian~\cite{Layden21}.
The main ingredients are bounds on the time-dependent first- and second-order Trotter expansion errors that do not require treating discretization error explicitly, in contrast to prior approaches, and they sidestep explicit Magnus expansion~\cite{Yi21,yi2021success,Sahinouglu21}.
As in the time-independent case, the two contributions \(C_1 \delta t + C_3 T \delta t^2\) dominate in different parameter regimes.
\(\mathcal O (T \delta t^2)\) scaling occurs when the evolution is long enough that the endpoints are insignificant, while $T$-independent \(\mathcal O(\delta t)\) scaling occurs when the evolution is short enough that the endpoints dominate~\cite{Layden21}.

While Theorem~\ref{th:layden_timedep_informal} introduces a scaling independent of \(T\) in the short-time regime unlike generic bounds, it does not capture the decrease in $\mathcal{I}$ with increasing $T$ for fixed-$\delta t$ evolutions, evident in Fig.~\hyperref[fig:scaling]{\ref*{fig:scaling}(b)}.
This is because the proof technique relies on adiabaticity (\textit{i.e.,} that $u'$ and $u''$ go to 0 and the Hamiltonian is gapped) but it does not rely on the fact that the adiabatic evolution is complete (\textit{i.e.,} that $u$ goes from $0$ to $1$).
Our second result shows that accounting for this leads to a bound with the anticipated behavior.
\begin{theorem}
  Given a gapped Hamiltonian \(H[u(t)] = (1-u(t)) H_1 + u(t) H_2\) and a unitary \(U(T) = \mathcal T  e^{-i\int_0^T H[u(t')]{\rm d} t'}\), where \(u(t) = s(t/T)\) is smooth and $s\!:\![0,1]\!\rightarrow\![0,1]$, if $U(T)$ is first-order Trotterized with fixed time steps \(\delta t \in \mathcal O(\min_t \|H[u(t)]\|^{-1})\), then as \(T \rightarrow \infty\) with \(u(0) \rightarrow 0\) and \(u(T) \rightarrow 1\) the final state infidelity is bounded by \(\mathcal O ( T^{-2} \delta t^2) + \mathcal O(T^{-2})\).
  Moreover, initially at a given fixed \(t/T \ll 1\), state infidelity increases as \(\mathcal O(t^2 \delta t^2)\).
\end{theorem}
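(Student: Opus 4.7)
The plan is to recast the Trotterized propagator as, approximately, the time-ordered exponential of an effective time-dependent Hamiltonian and then analyze the resulting perturbation in the adiabatic interaction picture. A single Trotter step $e^{-iH_2 b_k}e^{-iH_1 a_k}$, with $a_k = \int_{(k-1)\delta t}^{k\delta t}(1-u(t'))\,dt'$ and $b_k = \int_{(k-1)\delta t}^{k\delta t} u(t')\,dt'$, satisfies by Baker--Campbell--Hausdorff
\begin{equation*}
 e^{-iH_2 b_k}e^{-iH_1 a_k} = \exp\!\bigl( -i(H_1 a_k + H_2 b_k) + \tfrac{a_k b_k}{2}\,[H_1,H_2] + \mathcal{O}(\delta t^3)\bigr),
\end{equation*}
so the chained product is generated at leading order by the effective continuous Hamiltonian $\tilde H(t) = H[u(t)] + \delta t\, V(t)$ with $V(t) \propto u(t)\bigl(1-u(t)\bigr)\, i[H_2,H_1]$. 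The crucial observation is that the complete-evolution boundary conditions $u(0)=0$ and $u(T)=1$ force $V$ to vanish at \emph{both} endpoints; moreover every higher-order BCH correction contains at least one factor of $a_k$ (which vanishes when $u=1$) and one of $b_k$ (which vanishes when $u=0$), so the endpoint-vanishing property is inherited at all orders in $\delta t$.

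To bound the final-state infidelity, I would work in the adiabatic frame of $H[u(t)]$, where the unperturbed dynamics tracks the instantaneous ground state $\ket{0_t}$ up to the standard $\mathcal{O}(T^{-2})$ non-adiabatic correction (valid given the smoothness of $s$ and the gap hypothesis). Applying first-order time-dependent perturbation theory in $\delta t\, V(t)$ yields a transition amplitude into $\ket{n_t}$ of the schematic form
\begin{equation*}
 a_n(T) \sim \delta t \int_0^T u(t)\bigl(1-u(t)\bigr)\, \expval{n_t}{i[H_2,H_1]}{0_t}\, e^{i\phi_n(t)}\, dt,
\end{equation*}
with dynamical phase $\phi_n(t) = \int_0^t \bigl(E_n(t') - E_0(t')\bigr)\,dt'$. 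Rescaling $\tau = t/T$ exhibits this as an oscillatory integral with large phase parameter $T$. A first integration by parts produces a boundary term that vanishes identically, since $u(1-u)\vert_{\tau=0,1} = 0$; a second integration by parts extracts an additional factor of $T^{-1}$ from the residual boundary evaluation, giving $a_n(T) = \mathcal{O}(\delta t/T)$ and hence a Trotter-induced infidelity of $\mathcal{O}(\delta t^2/T^2)$. Combining with the non-adiabatic $\mathcal{O}(T^{-2})$ piece yields the stated bound.

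For the intermediate-time claim, truncating the upper limit at $\tau = t/T \ll 1$ leaves a nonvanishing boundary term of order $u(t)(1-u(t)) \approx s'(0)\, t/T$ after the first integration by parts, so the self-healing cancellation has not yet engaged; bounding the partial-evolution Trotter error by the standard triangle-inequality estimate $\|U^{(1)}(t) - U(t)\| = \mathcal{O}(t\,\delta t\, \|[H_1,H_2]\|)$ then delivers $\mathcal I(t) = \mathcal O(t^2 \delta t^2)$ directly. The main obstacle will be rigorously controlling the higher-order BCH residuals so that they respect rather than spoil the improved scaling: per-step remainders of size $\mathcal O(\delta t^3)$ could naively triangle-sum to $\mathcal O(T\delta t^2)$ in operator norm, which would overwhelm the claimed $\mathcal O(\delta t^2 / T^2)$. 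Overcoming this requires showing, iteratively at each BCH order, that the endpoint-vanishing and oscillatory-cancellation mechanism carries over to the sub-leading nested commutators, in the spirit of the Layden-type analysis underlying Theorem~\ref{th:layden_timedep_informal} but now exploiting the complete-evolution structure.
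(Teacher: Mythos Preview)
Your proposal is correct in spirit and identifies the same crucial mechanism the paper exploits: the leading Trotter perturbation vanishes at both endpoints of a complete adiabatic evolution, so an oscillatory-integral (off-resonant driving) argument converts what would naively be $\mathcal O(\delta t)$ per step into a cumulative $\mathcal O(\delta t/T)$ amplitude. Your integration-by-parts extraction of the $T^{-1}$ is the continuous analogue of what the paper does via a sine-series decomposition of the perturbation followed by an explicit Rabi-problem computation; the two are equivalent ways of exploiting endpoint vanishing in a rapidly oscillating integral.

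The principal technical difference is how the two arguments handle the ``all orders in $\delta t$'' issue you flag as your main obstacle. You build an effective continuous Hamiltonian via BCH and then must argue, order by order, that every nested commutator inherits the $a_k b_k$ endpoint-vanishing structure. The paper instead works directly in the discrete adiabatic basis with the \emph{exact} single-step transition matrix elements $A_{ij}=\langle\phi_i(t/T)|U^{(1)}(t,\delta t,T)|\phi_j(t/T)\rangle$, and observes that the dangerous off-diagonal piece $\mathcal R_{ij}\propto A_{ij}$ vanishes at $t=0$ and $t=T$ simply because $U^{(1)}$ collapses to $e^{-iH_1\delta t}$ or $e^{-iH_2\delta t}$ there, which are diagonal in the instantaneous eigenbasis. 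This captures all BCH orders in one stroke and sidesteps the order-by-order control you anticipate needing; you could shortcut your own argument the same way by noting that the full $e^{-iH_2 b_k}e^{-iH_1 a_k}$ is already diagonal at the endpoints, so your effective $V(t)$ (defined to all orders) vanishes there without appealing to the nested-commutator structure. The paper then separates the remaining couplings into the endpoint-vanishing $\mathcal R_{ij}\in\mathcal O(\delta t)$ and basis-rotation terms $\mathcal Q_{ij},\mathcal S_{ij}\in\mathcal O(1/T)$, treating the latter as constant perturbations that reproduce the ordinary $\mathcal O(T^{-2})$ adiabatic error. Your short-time $\mathcal O(t^2\delta t^2)$ argument via the triangle inequality is sound and matches the paper's conclusion, though the paper obtains it by evaluating the same perturbation-theory expression at $t\neq T$ rather than reverting to the generic norm bound.
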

The \(\mathcal O(T^{-2})\) term is again due to energy gap-dependent non-adiabaticity error associated with finite-\(T\) evolution. %
Note that the short-time $\mathcal{O}(t^2 \delta t^2)$ bound matches the generic \(\mathcal O(T^2 \delta t^2)\) Trotter error bound. %

The proof technique~\cite{SMref} involves analyzing the coefficients of the discretized time-evolution of the ground state in the adiabatic basis using first-order time-dependent perturbation theory.
This approach reveals that the leading-order error from Trotterization is due to an off-diagonal harmonic perturbation with amplitude $\mathcal{O}(\delta t)$.
While its amplitude is independent of \(T\), its frequency scales as \(T^{-1}\).
Thus, as $T$ increases this low-frequency perturbation becomes increasingly off-resonant and it induces transitions out of the ground state with a probability \(\mathcal O (T^{-2} \delta t^2)\), \textit{i.e.,} similar to the Lorentzian tail that appears in the solution of the Rabi problem~\cite{allen1987optical}.

Theorem~\ref{th:adiabatic_trotter_error}'s scaling holds for all gapped Hamiltonians, even with simple linear control ramps \(u(t)\), as corroborated by numerical results~\cite{SMref}.
We believe this generality explains the widespread inverse-in-\(T\) scaling reported elsewhere~\cite{Zhou20,Brady21}.

\begin{figure}[t]
\includegraphics[scale=1.0]{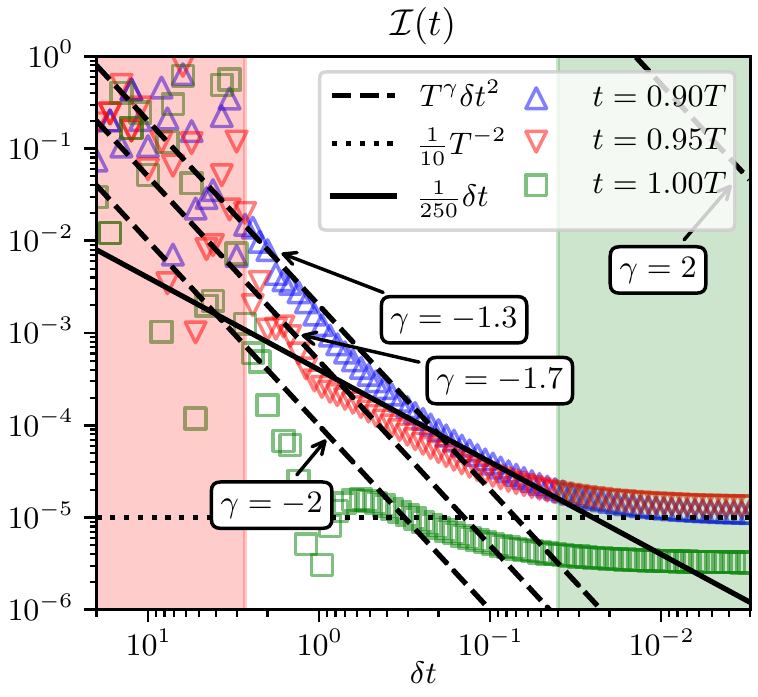}
\caption{Infidelity of digitized ASP for the same two-level system as in Fig. \ref{fig:scaling}, but including intermediate evolutions $U(t)$ for $t\leq T$.
$T=100$ is fixed for all curves.
Here $\mathcal{I}$ considers the target state as the instantaneous ground state of $H\left[u(t)\right]$ for the indicated values of $t$.
The dotted/dashed lines show various scalings with $\delta t$ and illustrate how the infidelity transitions from being dependent (\({\sim} \delta t^2 T^2\)) to inversely dependent (\({\sim} T^{-2} \delta t^2\)) on total time \(T\) as the evolution is completed.
}
\label{fig:partialramps}
\end{figure}

Fig.~\ref{fig:partialramps} illustrates the error bounds in Theorems~\ref{th:layden_timedep_informal} and \ref{th:adiabatic_trotter_error} for the same simple two-level system studied in Fig.~\ref{fig:scaling}.
To illustrate the bounds in Theorem~\ref{th:layden_timedep_informal} it is necessary to consider incomplete adiabatic evolutions, as to avoid the superior scaling that complete evolutions achieve according to Theorem~\ref{th:adiabatic_trotter_error}.
Thus we examine $\mathcal{I}(t)$ for $t \leq T$, in which the states relative to which infidelities are evaluated are the instantaneous ground states of $H[u(t)]$.

For the $t\ll T$ evolution, $\mathcal{I}$ initially scales as the generic bound $\mathcal{O}(T^2 \delta t^2)$ (dashed line) in Theorem~\ref{th:layden_timedep_informal} and then crosses over to $\mathcal{O}(\delta t)$ scaling (solid line).
The constant coefficients in Theorem~\ref{th:layden_timedep_informal} dictate the size of the $\mathcal{O}(\delta t)$-scaling region, which is due to a cross-term involving both Trotter error and non-adiabatic error, and we see that this region dominates the $t = 0.95T$ scaling for most of the relevant \(\delta t\) values~\cite{SMref}. %
The $t=T$ curve achieves $\mathcal{O}(T^{-2}\delta t^2)$ scaling because it is a complete evolution.
As in Fig. \ref{fig:scaling}, all three curves plateau at small $\delta t$ (large $r$) once the digitization error is dominated by the finite-$T$ non-adiabaticity error.
We briefly note that the size of the \(\mathcal O(T^2 \delta t^2)\)-scaling region compared to \(\mathcal O(\delta t)\) can be changed if variable time steps are used~\cite{SMref}.

Fig.~\ref{fig:partialramps} also reveals an interesting crossover in the dependence of $\mathcal{I}(t)$ on $\delta t$ as $t\rightarrow T$.
This is evident in the white region, in which the error bounds given by Theorems \ref{th:layden_timedep_informal} and \ref{th:adiabatic_trotter_error} are $\mathcal{O}(T^2\delta t^2)$ and $\mathcal{O}(T^{-2}\delta t^2)$, respectively.
The $T$-dependent part of the error can be written as \(\mathcal O(T^\gamma \delta t^{2})\), where $\gamma$ transitions from \(2\) to \(-2\) as $t\rightarrow T$.
As this limit is approached, the upper bound in Theorem~\ref{th:layden_timedep_informal} becomes looser and looser as the scaling transitions to the tighter upper bound in Theorem~\ref{th:adiabatic_trotter_error}. %
It appears that the prefactor of the \(\mathcal O(T^{\gamma} \delta t^2)\) term remains relatively constant during this transition, and so the different $T$ behavior is apparent by the shifted dashed curves in Fig.~\ref{fig:partialramps} as its power changes.
For $t=T$, Theorem~\ref{th:adiabatic_trotter_error}'s bound becomes valid and $\mathcal{I}(t=T)$ decreases with $T$ as \(\mathcal O(T^{-2} \delta t^2)\) for all relevant values of \(\delta t\), as seen in Fig.~\hyperref[fig:scaling]{\ref*{fig:scaling}(b)}.

Moreover, the transition from the initial \(\mathcal O(t^2 \delta t^2)\) scaling to the \(\mathcal O(T^{-2}\delta t^2 )\) cumulative scaling implies a cancellation of errors incurred at intermediate times, evident in Fig.~\hyperref[fig:scaling]{\ref*{fig:scaling}(c)}.
We find this \(\mathcal O(t^2 \delta t^2)\) upper bound to be tight and that increasing ground state infidelity at intermediate times is reversed as $t\rightarrow T$ after traversing the system's avoided crossing~\cite{SMref}.
We do not prove the mechanism of the reversal, but the same phenomenon can be observed in more complex systems, where transitions to many excited states at intermediate times are reversed as $t\rightarrow T$~\cite{SMref}.
This remarkable property of \emph{self-healing} digitized adiabatic evolutions has been empirically observed, but had otherwise defied explanation~\cite{Honda22,Albash22}.

This begs the question of how self healing impacts higher-order Trotterization for adiabatic evolutions and resource requirements for ASP~\cite{SMref}.
A $p$th-order generalization of Theorem~\ref{th:adiabatic_trotter_error} will still include a $\delta t$-independent $\mathcal{O}(T^{-2})$ term, and the interplay of this diabatic error with improved $\mathcal{O}(\delta t^{p})$ digitization error requires further study.
Our analysis also implies reductions in circuit depths for Trotterized ASP relative to generic bounds, with potentially significant consequences for resource estimates of ASP.
However, the optimal approach to ground-state preparation is likely to be problem dependent and a comparison of Trotterized ASP to alternatives~\cite{Reiher17,Ge19,Lin20,Wan20,Lemieux21} is a topic for future work.

In the context of optimization algorithms, our results establish a bijective correspondence between QAOA and digitized quantum annealing~\cite{Zhou20,Brady21}. As in Fig.~\ref{fig:QAOA_anneal}, it is often possible to find an injective correspondence between a set of optimal angles for QAOA and a Trotterization of an optimized quasi-adiabatic (or annealing) evolution between the driver ($H_d$)  and problem ($H_p$) Hamiltonians. But justification for the surjective correspondence (QAOA angles from Trotterizing a given quasi-adiabatic evolution) has remained elusive.

\begin{figure}[ht]
\includegraphics[scale=0.6]{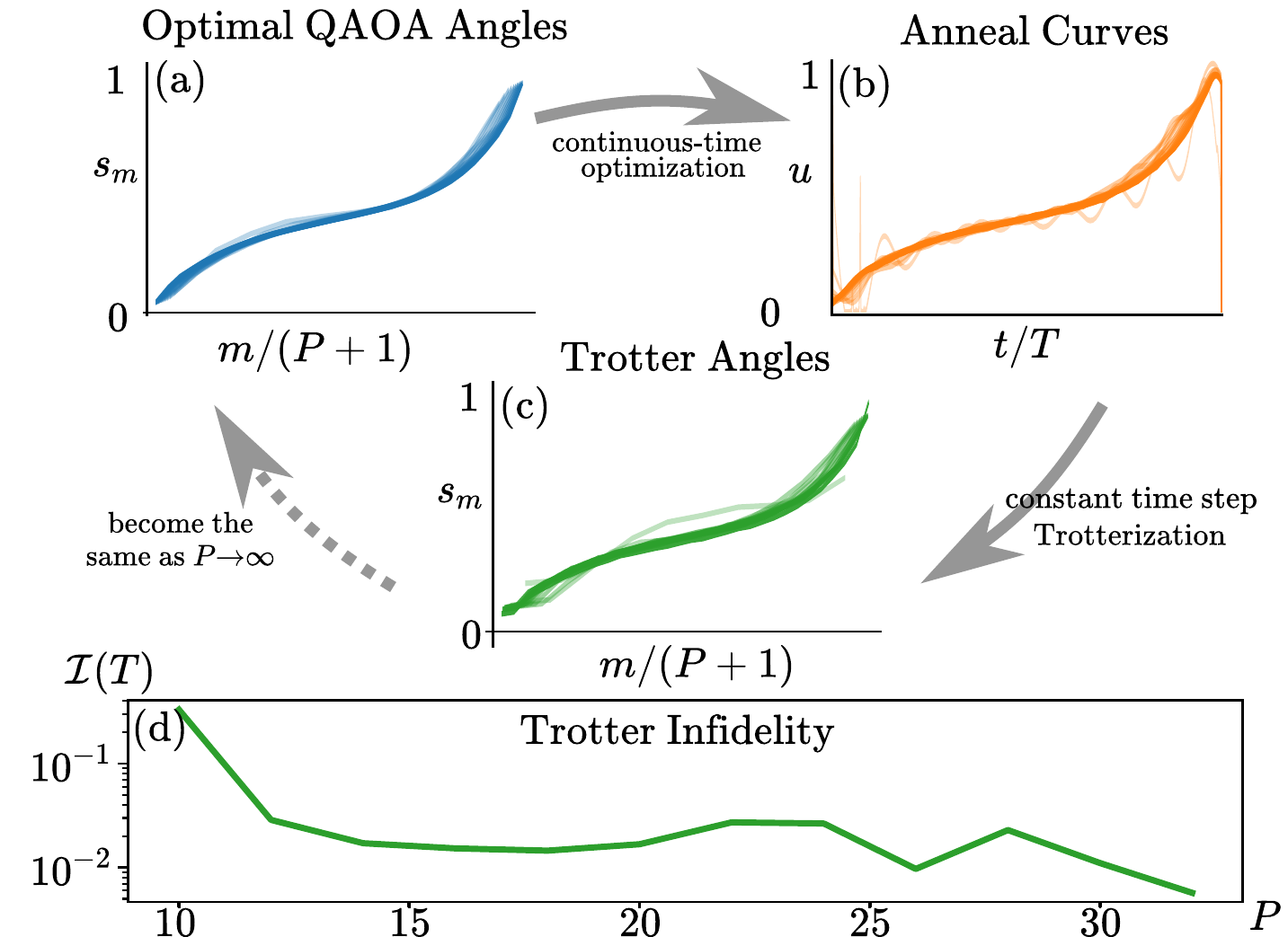}
\caption{Correspondence between QAOA and digitized quantum annealing.
We consider a 3-regular Ising model (MAXCUT) on $N=8$ qubits with periodic boundary conditions.
\(s_m = \gamma_m/(\gamma_m+\beta_m)\), where \(m \in [1,\,\ldots,\,P]\) and \(\gamma_m\) (\(\beta_m\)) are the \(m\)th QAOA angle corresponding to \(H_p\) (\(H_d\)).
(a) QAOA curves found for \(P=\{10,\,\dots,\, 32\}\) layers by bootstrapped seeding~\cite{Pichler18,Mbeng19,Mbeng19_2} from smaller \(P\) solutions (darker curves correspond to larger \(P\)).
(b) Optimal anneal curves found by seeding from the corresponding QAOA curves in (a).
(c) Constant time step Trotterization of these anneal curves.
(d) The infidelity with the ground state of $H_p$ after evolution under the digitized annealing schedule.
The decrease in infidelity demonstrates an instance when QAOA optimal angles can be considered to be proximal to a \emph{constant non-vanishing $\delta t$} Trotterization of a universal family of anneal curves.
This correspondence becomes exact as \(P\rightarrow \infty\)~\cite{SMref}.}
\label{fig:QAOA_anneal}
\end{figure}

This is because for a given set of \(P\) pairs of optimal QAOA angles (Fig.~\hyperref[fig:QAOA_anneal]{\ref*{fig:QAOA_anneal}(a)}) the corresponding continuous anneal control curve often has a similar total integrated time \(T\) (Fig.~\hyperref[fig:QAOA_anneal]{\ref*{fig:QAOA_anneal}(b)}).
This forces the timestep of the QAOA to scale as \(\delta t \propto T/P\) when viewed as a Trotterization.
Since \(P\) is generally found to be proportional to \(T\) in unrestricted QAOA, this becomes a fixed-\(\delta t\) Trotterization.
Such a Trotterization has been phenomenologically found to be the best discretization to match the oscillating curves of the adiabatic anneal, whose period scales as the ($T$-independent) energy gap~\cite{Brady21}.
Prior efforts were unable to prove the surjective correspondence between Fig.~\hyperref[fig:QAOA_anneal]{\ref*{fig:QAOA_anneal}(a)} and~\hyperref[fig:QAOA_anneal]{\ref*{fig:QAOA_anneal}(c)} because generic bounds suggest that error increases with $T$ for fixed $\delta t$.

Here we have shown that for the broad class of Hamiltonians that satisfy Theorem~\ref{th:adiabatic_trotter_error}, $\mathcal{I}$ is constant or decreasing with fixed $\delta t$ and increasing $T$.
This facilitates proving the final relationship between Fig.~\hyperref[fig:QAOA_anneal]{\ref*{fig:QAOA_anneal}(c)} and~\hyperref[fig:QAOA_anneal]{\ref*{fig:QAOA_anneal}(a)}, producing a fully bijective relationship between optimal QAOA angles and continuous anneal curves.
Trotterization of these curves with $\delta t = T/P$, produces angles (Fig.~\hyperref[fig:QAOA_anneal]{\ref*{fig:QAOA_anneal}(c)}) that, as \(P \rightarrow \infty\), approach the original QAOA angles and evolve the initial state to the ground state with increasing fidelity (Fig.~\hyperref[fig:QAOA_anneal]{\ref*{fig:QAOA_anneal}(d)}).

Thus at sufficiently large depth, QAOA can become a fixed-$\delta t$ digitization of an underlying set of quantum annealing curves, a limit that differs from the $\delta t\rightarrow 0$ digitization of the adiabatic limit traditionally considered~\cite{Farhi14}.
These annealing curves approach a single asymptotic curve as the total integrated time goes to infinity.
This correspondence allows for high-depth QAOA instances to be seeded by interpolating low-depth instances~\cite{Pichler18,Mbeng19,Mbeng19_2,Zhou20,Brady21} that converge quickly to the high-depth instance's minimum.
Theorem~\ref{th:adiabatic_trotter_error} justifies this widely used ``bootstrap'' procedure.

Based on prior Trotter bounds, it might have been expected that digitization error would dominate the cumulative infidelity of digitized ASP with increasing $T$.
For a fixed $\delta t$, more time steps should lead to more error.
However, thanks to a self-healing property of complete adiabatic evolutions this is not the case.
We have applied this to establish a correspondence between QAOA and digitized quantum annealing, but future work remains in exploring the mechanism of the self-healing property and consequences for other quantum algorithms that rely on ASP.

\vskip5pt
\noindent--

We acknowledge useful conversations with Jonathan Wurtz and Tameem Albash about portions of this work.
We are also grateful to the anonymous referees whose thorough comments greatly improved its presentation.
This material is based upon work supported by the U.S. Department of Energy, Office of Science, Office of Advanced Scientific Computing Research, under the Quantum Computing Application Teams (QCAT) and Accelerated Research in Quantum Computing (ARQC) programs, the National Nuclear Security Administration's Advanced Simulation and Computing program, and the National Science Foundation under Grant No. NSF PHY-1748958.
A.D.B., A.B.M., and J.B.L. acknowledge support from the Sandia National Laboratories Truman Fellowship Program, which is funded by the Laboratory Directed Research and Development (LDRD) program.
Sandia National Laboratories is a multimission laboratory managed and operated by National Technology \& Engineering Solutions of Sandia, LLC, a wholly owned subsidiary of Honeywell International Inc., for the U.S. Department of Energy's National Nuclear Security Administration under contract DE-NA0003525.
This paper describes objective technical results and analysis.
Any subjective views or opinions that might be expressed in the paper do not necessarily represent the views of the U.S. Department of Energy or the United States Government.

\clearpage
\widetext
\begin{center}
\textbf{\large Supplemental Materials: \papertitle}
\end{center}

\setcounter{secnumdepth}{2}
\setcounter{section}{0}
\setcounter{page}{1}
\setcounter{theorem}{0}

The Supplemental Materials include proofs of the major theorems in the main body of the paper, as well as further technical details and numerical evidence relevant to both.
We also include a third Theorem that isn't included in the main body of the paper that describes the scaling of Trotter error with variable time steps.

\begin{itemize}
	\item Appendix~\ref{app:timedep_trotter} provides and proves Lemmas that bound the first- and second-order Trotter error for time-dependent adiabatic evolutions between $H_1$ and $H_2$. It also includes a short proof that $\mathcal{I}(T)$ is upper bounded by the Trotter error and an \(\mathcal O(T^{-2})\) term.
	These Lemmas are used to prove Theorem~\ref{th:layden_timedep}, which bounds the first-order Trotter error via a technique similar to Ref.~\cite{Layden21}.
	\item Appendix~\ref{app:adiabatic_trotter_error} provides and proves Theorem~\ref{th:adiabatic_trotter_error}.
	It also includes numerical results that corroborate the associated short-time scaling of $\mathcal{I}$.
	\item Appendix~\ref{app:counterdiabatic} provides a third theorem that describes bounds for variable time step Trotterization.
	It also includes numerical results that corroborate the predicted scaling.
	\item Appendix~\ref{app:QAOA} contains more details concerning the correspondence between QAOA and Trotterized annealing, including numerical results.
	\item Appendix~\ref{app:numerics} 	demonstrates that the scalings predicted in Theorems 1 and 2 are also evident in a few more general Hamiltonians than the two-level system considered as the motivation in the main text.
	\item Appendix~\ref{app:resource_estimates} discusses the implications for our improved scaling that are relevant to ground state preparation in quantum simulation.
\end{itemize}

\appendix

\section{Theorem 1 - Statement and proof}
\label{app:timedep_trotter}
\renewcommand{\theequation}{A\arabic{equation}}
\renewcommand{\thefigure}{A\arabic{figure}}
\setcounter{figure}{0}

Here we bound the first-order Trotter (operator norm) error for a time-dependent Hamiltonian.

The operator norm of the Hilbert space \(\mathbb H\) is defined as
\begin{equation}
    \|A \| \equiv \inf \{c > 0: \|A v\|_\text{vec} \le c \|v\|_\text{vec} \, \forall v \in \mathbb H\},
\end{equation}
for any norm \(\|\cdot\|_\text{vec}\) on \(\mathbb H\).

Lemma~\ref{le:first_order_trotter} first establishes a ``na\"ive'' bound on first-order Trotter error for time-dependent Hamiltonians and Lemma~\ref{le:second_order_trotter} subsequently establishes a bound on second-order Trotter error for time-dependent Hamiltonians. Theorem~\ref{th:layden_timedep} then combines these two results for a tighter bound on first-order Trotter error, in the same manner as~\cite{Layden21} accomplished for the time-independent case.

We begin with a ``na\"ive'' bound on first-order Trotter error for time-dependent Hamiltonians.
\begin{lemma}
\label{le:first_order_trotter}
Given a unitary, \begin{equation}U(T) = \mathcal T \exp \left(-i\int^T_0 \text d t' H(t')\right),\end{equation} %
and its first-order Trotterization, \begin{eqnarray}U^{(1)}(T) &=& \prod_{k=1}^r \exp \left(-i \int_{(k-1) \delta t}^{k \delta t} \text d t' H_2(t')\right)\\ &\times&\exp \left(-i \int_{(k-1) \delta t}^{k \delta t} \text d t' H_1(t')\right),\nonumber \end{eqnarray} where \(\delta t = T/r\), then
  \begin{equation}
      \|U(T) - U^{(1)}(T)\| \le \|[H_1,H_2]\| T\delta t  + \mathcal O(T \delta t^2).
  \end{equation}
\end{lemma}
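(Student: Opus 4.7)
The plan is to establish the bound in two stages: first bound the per-step Trotter error on a single interval of length $\delta t$, then aggregate via a telescoping identity and the triangle inequality. Denote the exact step unitary by $V_k = \mathcal{T}\exp(-i\int_{(k-1)\delta t}^{k\delta t} H(t')\,dt')$ and the Trotterized step by $W_k = \exp(-iB_k)\exp(-iA_k)$, where $A_k = \int_{(k-1)\delta t}^{k\delta t} H_1(t')\,dt'$ and $B_k = \int_{(k-1)\delta t}^{k\delta t} H_2(t')\,dt'$, both of operator norm $\mathcal O(\delta t)$. In the setting of the paper $H_j(t)$ is a time-dependent scalar multiple of the constant operator $H_j$, so that $A_k = \alpha_k H_1$ and $B_k = \beta_k H_2$ with $\alpha_k,\beta_k \in [0,\delta t]$, and the full evolutions factorize as $U(T) = V_r\cdots V_1$ and $U^{(1)}(T) = W_r\cdots W_1$.

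The single-step comparison follows by matching the Taylor expansions of $V_k$ and $W_k$ to second order in $\delta t$. Dyson's series gives $V_k = I - i(A_k+B_k) - \int_{(k-1)\delta t}^{k\delta t} ds_1 \int_{(k-1)\delta t}^{s_1} ds_2\, H(s_1)H(s_2) + \mathcal O(\delta t^3)$, and because $H$ is smooth on a short interval the time-ordered double integral equals $\tfrac12(A_k+B_k)^2$ up to an $\mathcal O(\delta t^3)$ remainder. Direct expansion of the Trotter factors yields $W_k = I - i(A_k+B_k) - \tfrac12 A_k^2 - \tfrac12 B_k^2 - B_k A_k + \mathcal O(\delta t^3)$. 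Subtracting, the symmetric quadratic pieces cancel and leave $V_k - W_k = -\tfrac12[A_k,B_k] + \mathcal O(\delta t^3) = -\tfrac12 \alpha_k\beta_k[H_1,H_2] + \mathcal O(\delta t^3)$, so $\|V_k - W_k\| \le \tfrac12 \alpha_k\beta_k \|[H_1,H_2]\| + \mathcal O(\delta t^3)$.

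To assemble the final bound I would use the standard telescoping identity $V_r\cdots V_1 - W_r\cdots W_1 = \sum_{k=1}^r V_r\cdots V_{k+1}(V_k - W_k)W_{k-1}\cdots W_1$. Unitary invariance of the operator norm together with the triangle inequality then gives $\|U(T) - U^{(1)}(T)\| \le \sum_{k=1}^r \|V_k - W_k\|$. Using $\alpha_k\beta_k \le \delta t^2$ and $r\delta t = T$ produces $\tfrac12\|[H_1,H_2]\|\, r\delta t^2 + \mathcal O(r\delta t^3) = \|[H_1,H_2]\|\, T\delta t + \mathcal O(T\delta t^2)$ (the stated constant being loose by a trivial factor).

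The main obstacle I anticipate is making the per-step $\mathcal O(\delta t^3)$ remainders uniform in $k$ so that they sum cleanly. Specifically, one must show that both (i) the replacement of the time-ordered double integral in the Dyson series by $\tfrac12(A_k+B_k)^2$, and (ii) the truncation of the Trotter factors at second order, each leave a remainder bounded by a constant times $\delta t^3$, with the constant depending only on $\sup_t\|H(t)\|$ (and on $\sup_t\|\dot H(t)\|$ for the time-ordering correction). This uniformity is what ensures the accumulated subleading error is $\mathcal O(r\delta t^3) = \mathcal O(T\delta t^2)$ rather than swamping the leading $\mathcal O(T\delta t)$ bound.
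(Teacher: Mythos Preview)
Your argument is correct and yields the stated bound (in fact with the sharper leading constant $\tfrac12$ that you note). The per-step analysis via matching the Dyson and product expansions to second order, followed by telescoping, is sound; the time-ordering correction you flag is exactly $\tfrac12\int_{s_1<s_2}[H(s_1),H(s_2)]\,ds_1\,ds_2$, which for $H(t)=(1-u(t))H_1+u(t)H_2$ equals $\tfrac12[H_1,H_2]\int_{s_1<s_2}(u(s_2)-u(s_1))\,ds_1\,ds_2 = \mathcal O(\delta t^3\|\dot u\|_\infty)$, uniformly in $k$. So the concern you anticipate is resolved precisely by the dependence on $\sup_t\|\dot H(t)\|$ that you identify.

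The paper does not write out a proof but defers to Huyghebaert--De~Raedt and indicates that the intended route is a single application of Kubo's identity (Lemma~\ref{le:kubo}). In that approach one differentiates $F_k(t)=W_k(t)^\dagger V_k(t)$ to obtain $\partial_t F_k = C_k F_k$ with $C_k$ already expressed as a unitarily conjugated commutator of $H_1(t)$ and $H_2(t)$; integrating and taking norms gives the $\mathcal O(\delta t^2\|[H_1,H_2]\|)$ per-step bound directly, without expanding either propagator in powers of $\delta t$. The advantage of the Kubo route is that no $\mathcal O(\delta t^3)$ bookkeeping is needed and the commutator appears structurally rather than as the residue of a second-order cancellation; your Taylor-matching route is more elementary and self-contained, at the cost of the remainder tracking you point out. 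Both reach the same leading term and both aggregate identically via the telescoping step.
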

See \cite{Huyghebaert90} and Appendix B3 in~\cite{Brady21} for a derivation.
Note that this result is the same as that obtained from bounding first-order Trotter error for time-independent Hamiltonians.
In fact, there is no actual dependence on adiabaticity in the proof of Lemma~\ref{le:first_order_trotter}, which means that it holds for all time-dependent Hamiltonians.
We will see that this does not remain true for the traditional second-order Trotter error.

The main tool that is used in proving Lemma~\ref{le:first_order_trotter} is a single application of a variant on Kubo's formula~\cite{Suzuki85,kubo1957statistical}.
This identity will play a critical role in the second-order generalization in Lemma~\ref{le:second_order_trotter} and thus we provide a self-contained statement and proof.
\begin{lemma}
  \label{le:kubo}
  The following identity is true,
\begin{equation}
  \left[\exp\left(i \int_{t_{k-1}}^{t_k} \text d t' A(t')\right), B\right] = \int_{t_{k-1}}^{t_k} \text d s \exp\left(i \int_{t_{k-1}}^{s} \text d t' A(t')\right) \left[iA(s),B\right] \exp\left(-i \int_{t_k}^{s} \text d t' A(t')\right).~\label{eq:kubo_formula}
\end{equation}
\end{lemma}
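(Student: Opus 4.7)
The plan is to introduce an $s$-dependent auxiliary operator whose values at the endpoints $s=t_{k-1}$ and $s=t_k$ reproduce the two orderings contributing to the commutator on the left-hand side, and whose derivative in $s$ matches the integrand on the right-hand side. The critical preliminary observation is that in every downstream application of Lemma~\ref{le:kubo}, the generator $A(t)$ has the factored form $A(t)=f(t)\,H$ with $f$ a scalar schedule and $H$ a fixed Hermitian operator (c.f.\ the integrands in $U_1$ and $U_2$ in Eq.~\eqref{eq:Utrot}). Hence $[A(t_1),A(t_2)]=0$ for all $t_1,t_2$, the bare exponential $\exp(i\!\int\! A\,\text d t')$ coincides with its time-ordered counterpart, and the ordinary differentiation identity
\[
  \frac{\text d}{\text d s}\exp\!\Bigl(i\!\int_{t_{k-1}}^{s}\!A(t')\,\text d t'\Bigr)
    \;=\;iA(s)\exp\!\Bigl(i\!\int_{t_{k-1}}^{s}\!A(t')\,\text d t'\Bigr),
\]
with $A(s)$ commuting freely through the exponential, holds without Magnus-type corrections.

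Setting $V(s)\equiv\exp\!\bigl(i\!\int_{t_{k-1}}^{s}A(t')\,\text d t'\bigr)$, I would define the auxiliary function
\[
  h(s)\;\equiv\;V(s)\,B\,V(t_k)\,V(s)^{-1}.
\]
Since $V(t_{k-1})=\mathbb 1$, one reads off $h(t_{k-1})=B\,V(t_k)$ and $h(t_k)=V(t_k)\,B$, so by the fundamental theorem of calculus $[V(t_k),B]=h(t_k)-h(t_{k-1})=\int_{t_{k-1}}^{t_k}h'(s)\,\text d s$. Differentiating $h$ by the product rule and using that $A(s)$ commutes with each of $V(s)$, $V(s)^{-1}$, and $V(t_k)$ (by the previous paragraph), the two resulting terms factor through $V(s)$ on the left and $V(t_k)V(s)^{-1}$ on the right, combining into a single commutator sandwiched between them,
\[
  h'(s)\;=\;V(s)\,[iA(s),B]\,V(t_k)\,V(s)^{-1}.
\]
Finally, the commutativity of the $A(t')$'s gives $V(t_k)V(s)^{-1}=\exp\!\bigl(i\!\int_s^{t_k}\!A(t')\,\text d t'\bigr)=\exp\!\bigl(-i\!\int_{t_k}^{s}\!A(t')\,\text d t'\bigr)$, placing the right-hand exponential in precisely the form required by Eq.~\eqref{eq:kubo_formula}. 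Substituting back into $\int h'(s)\,\text d s$ completes the identification.

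The only real obstacle is being honest about the commutativity hypothesis. Without $[A(t_1),A(t_2)]=0$, the derivative $\frac{\text d}{\text d s}V(s)$ picks up a Magnus-type correction, $A(s)$ no longer slides freely past $V(t_k)V(s)^{-1}$, and the recombination of product-rule terms into the single commutator $[iA(s),B]$ breaks down. A genuine time-ordered analogue with $\mathcal T\exp(\cdot)$ flanking the bracket can be obtained by the same auxiliary-function argument applied to time-ordered exponentials, but since the two first-order Trotter generators $A_i(t)=f_i(t)H_i$ arising in the proof of Lemma~\ref{le:first_order_trotter} commute with themselves at different times, the simpler stated form already suffices for all subsequent use.
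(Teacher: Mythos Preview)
Your argument is correct and is essentially the paper's own proof: both introduce the auxiliary function $s\mapsto \exp\!\bigl(i\!\int_{t_{k-1}}^{s}A\bigr)\,B\,\exp\!\bigl(-i\!\int_{t_k}^{s}A\bigr)$ (which is exactly your $h(s)$ once $V(t_k)V(s)^{-1}$ is rewritten), evaluate it at the endpoints to recover the commutator, and apply the fundamental theorem of calculus together with the product rule to obtain the integrand. The only difference is that you make the commutativity hypothesis $[A(t_1),A(t_2)]=0$ explicit, which the paper uses tacitly when differentiating the un-time-ordered exponentials; your observation that every downstream application has $A(t)=f(t)H$ is a welcome clarification rather than a departure from the paper's method.
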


\begin{proof}
  \begin{subequations}
  \begin{align}
    \left[\exp\left(i \int_{t_{k-1}}^{t_k} \text d t' A(t')\right), B\right] = & \exp\left(i \int_{t_{k-1}}^{t_k} \text d t' A(t')\right)B - B \exp\left(i \int_{t_{k-1}}^{t_k} \text d t' A(t')\right) \\
    =& \left(\exp\left(i \int_{t_{k-1}}^{s} \text d t' A(t')\right) B \exp\left(-i \int_{t_k}^{s} \text d t' A(t')\right)\right) \Biggr|_{s=t_{k-1}}^{s=t_k} \\
     =& \int_{t_{k-1}}^{t_k} \text{d} s \der{}{s} \left[\exp\left(i \int_{t_{k-1}}^{s} \text d t' A(t')\right)\right]B \exp\left(-i \int_{t_k}^{s} \text d t' A(t')\right) \nonumber \\
     &+\int_{t_{k-1}}^{t_k} \text{d} s \exp\left(i \int_{t_{k-1}}^{s} \text d t' A(t')\right)B \der{}{s} \left[\exp\left(-i \int_{t_k}^{s} \text d t' A(t')\right)\right] \\
    =& \int_{t_{k-1}}^{t_k} \text d s \exp\left(i \int_{t_{k-1}}^{s} \text d t' A(t')\right) \left[iA(s),B\right] \exp\left(-i \int_{t_k}^{s} \text d t' A(t')\right)
  \end{align}
  \end{subequations}
  \qed
\end{proof}

We next obtain a bound similar to the one in Lemma~\ref{le:first_order_trotter}, but for second-order Trotter error.
\begin{lemma}
\label{le:second_order_trotter}
Given a unitary, \begin{equation}U(T) = \mathcal T \exp \left(-i\int^T_0 \text d t' H(t')\right),\end{equation} where \(H(t) = H_1(t) + H_2(t)\), \(H_1(t) = (1-u(t))H_1\), \(H_2(t) = u(t) H_2\), \(u(t) = s(t/T)\) and \(s\) is a function from \([0,1]\) to \([0,1]\), and its second-order Trotterization, \begin{eqnarray}U^{(2)}(T) &=& \prod_{k=1}^r \exp \left(-i \int_{(k-1/2) \delta t}^{k \delta t} \text d t' H_1(t')\right)\\ &&\times \exp \left(-i \int_{(k-1) \delta t}^{k \delta t} \text d t' H_2(t')\right)\nonumber\\
&& \times \exp \left(-i \int_{(k-1)\delta t}^{(k-1/2) \delta t} \text d t' H_1(t')\right),\nonumber\end{eqnarray} where \(\delta t = T/r\), then in the limit that \(T\rightarrow \infty\) and \(H'_1(t),\, H_1''(t), \, H'_2(t),\, H''_2(t) \in \mathcal O(T^{-1})\),
  \begin{align}
    &\|U(T) - U^{(2)}(T)\|\nonumber\\
    &\le \left(\frac{1}{2}\|[H_1,[H_1,H_2]]\|+\|[H_2,[H_2,H_1]]\|\right) \frac{1}{12} T \delta t^2 + \mathcal O(T\delta t^3).
  \end{align}
\end{lemma}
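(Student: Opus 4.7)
The plan is to mirror the derivation of Lemma 1 but extend it one order deeper in the $\delta t$ expansion. As in the first-order case, I would (i) establish a per-step bound on the interval $[(k{-}1)\delta t, k\delta t]$ and then (ii) sum over the $r = T/\delta t$ steps via a telescoping argument combined with unitarity and the triangle inequality. The workhorse of the proof is Lemma 3 (Kubo's identity), now applied twice rather than once, so that all operator non-commutativity is extracted as nested commutators $[H_1,[H_1,H_2]]$ and $[H_2,[H_2,H_1]]$.

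First I would freeze the slow time dependence of $H_1(t)$ and $H_2(t)$ on each step about its midpoint $t_k^{\star} = (k{-}\tfrac{1}{2})\delta t$. Writing $\int_{(k-1)\delta t}^{k\delta t} H_i(t')\,dt' = H_i(t_k^{\star})\,\delta t + R_{i,k}$, a Taylor expansion combined with the adiabatic hypothesis $H_i'', H_i' \in \mathcal{O}(T^{-1})$ gives $\|R_{i,k}\| = \mathcal{O}(\delta t^3/T)$, and analogous estimates hold for the two half-step integrals appearing in $U^{(2)}$. Summed over all $r$ steps, these replacements contribute a correction that is subleading to $T\delta t^2$ in the $T\to\infty$ limit and is absorbed into the stated $\mathcal{O}(T\delta t^3)$ remainder.

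With both $U_{\text{step}}$ and $U^{(2)}_{\text{step}}$ replaced by their frozen counterparts $e^{-i(\tilde H_1+\tilde H_2)\delta t}$ and $e^{-i\tilde H_1\delta t/2}\, e^{-i\tilde H_2 \delta t}\, e^{-i\tilde H_1\delta t/2}$, where $\tilde H_i = H_i(t_k^{\star})$, I would expand their difference using Kubo's identity. One application of Lemma 3 pushes $\tilde H_2$ past an $e^{-i\tilde H_1 \delta t/2}$ factor at the cost of an integrated commutator $[\tilde H_1,\tilde H_2]$; because the Strang splitting is symmetric, the two single-commutator contributions cancel at order $\delta t^2$ and only their order-$\delta t^3$ remainders survive. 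A second application of Lemma 3 to those remainders produces the nested commutators with the Simpson-rule coefficients $\tfrac{1}{24}$ and $\tfrac{1}{12}$, matching the usual Baker–Campbell–Hausdorff expansion for time-independent $\tilde H_1 + \tilde H_2$. Taking operator norms and summing over $r$ steps yields the leading $T\delta t^2$ bound with the stated coefficients, and collects the BCH tail (order $\delta t^4$ per step) into $\mathcal{O}(T\delta t^3)$.

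The main obstacle is the bookkeeping: unlike the first-order case, a single application of Kubo's identity no longer suffices, and each iteration in a time-dependent setting spawns extra terms involving $H_i'$ and $H_i''$ evaluated at intermediate times. The assumption $H_i', H_i'' \in \mathcal{O}(T^{-1})$ is exactly what is needed to ensure that every such derivative contribution enters at an order in $\delta t/T$ that can be absorbed into $\mathcal{O}(T\delta t^3)$ as $T\to\infty$. In particular, dropping the hypothesis on $H_i''$ would break the cubic-per-step scaling that makes the bound tighter than a direct application of Lemma 1, which explains why — in contrast to Lemma 1 — this result genuinely requires adiabaticity and is not a direct import from the time-independent second-order Trotter bound.
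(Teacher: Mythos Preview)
Your approach is essentially correct but takes a genuinely different route from the paper. You propose to \emph{freeze} $H_i(t)$ at the step midpoint $t_k^\star$, reduce each step to the time-independent Strang comparison $e^{-i(\tilde H_1+\tilde H_2)\delta t}$ versus $e^{-i\tilde H_1\delta t/2}e^{-i\tilde H_2\delta t}e^{-i\tilde H_1\delta t/2}$, and then import the standard second-order BCH coefficients $\tfrac{1}{24}$ and $\tfrac{1}{12}$. The paper instead never freezes: it sets $F=(U^{(2)})^\dagger U$ on a single step, derives a differential equation $\partial_{t_k}F=CF$ by direct differentiation of the four time-dependent exponential factors, and bounds $\int\|C\|$. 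Kubo's identity is applied once to $C$, and the nested commutators emerge from expanding $e^{A}Be^{-A}$; the single-commutator piece ($\Gamma_1$) is then shown to cancel at zeroth order via a Taylor expansion, leaving only derivative terms that are $\mathcal O(T^{-1})$ and hence subleading. Your approach is conceptually cleaner in that it reduces to a known result; the paper's approach is more systematic and makes the derivative corrections (e.g.\ $\|[H_1,H_2']\|$) explicit rather than burying them in a freezing remainder.

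One point of imprecision: your claim that ``analogous estimates hold for the two half-step integrals'' with $\|R_{i,k}\|=\mathcal O(\delta t^3/T)$ is not literally true at the level of the integrals. Since $t_k^\star$ is an \emph{endpoint} (not midpoint) of each half-interval, the quadrature error in each half-step exponent is $\mathcal O(H_1'\,\delta t^2)=\mathcal O(\delta t^2/T)$, not $\mathcal O(\delta t^3/T)$. The argument is rescued at the unitary level because these $\mathcal O(\delta t^2/T)$ corrections are proportional to $H_1$, commute with the outer $e^{-i\tilde H_1\delta t/2}$ factors, and enter with opposite signs on the two half-steps; they therefore conjugate $e^{-i\tilde H_2\delta t}$ and pick up an extra factor of $\delta t$ through the commutator $[H_1,H_2]$, restoring $\mathcal O(\delta t^3/T)$ per step. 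You should make this cancellation explicit rather than asserting the analogous estimate.
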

A proof where \(H_2\) is assumed to be not time-independent can be found in~\cite{Huyghebaert90}.
Otherwise, the general proof follows.

\begin{proof}

We begin by considering a single arbitrary time step on the interval $\left[(k-1)\delta t,k \delta t\right]=\left[t_{k-1},t_k\right]$, being careful to explicitly indicate that $t_j=j \delta t$, \textit{e.g.,} $t_{k-1/2} = (k-1/2)\delta t$.
The Trotter error accumulated in this interval is $\epsilon(t_k,t_{k-1})=\|U(t_k,t_{k-1})-U^{(2)}(t_k,t_{k-1})\|$, where
\begin{equation}
U(t_k,t_{k-1}) = \mathcal T \exp \left(-i\int^{t_k}_{t_{k-1}} \text d t' H(t')\right),\end{equation}
and
\begin{align}
U^{(2)}(t_k, t_{k-1}) =& \exp\left(-i \int^{t_k}_{t_{k-1/2}} \text d t' H_1(t') \right) \nonumber\\
                         & \times \exp\left(-i \int^{t_k}_{t_{k-1}} \text d t' H_2(t') \right) \\
                         & \times \exp\left(-i \int^{t_{k-1/2}}_{t_{k-1}} \text d t' H_1(t') \right). \nonumber
\end{align}
We are also careful to note that we are working with the ``natural'' time ordering for which $t_k \geq t_{k-1}$.

By unitarity we can rewrite $\epsilon(t_k,t_{k-1})$ as
\begin{equation}
\epsilon(t_k,t_{k-1}) = \|\left(U^{(2)}(t_k,t_{k-1})\right)^{\dagger} U(t_k,t_{k-1})-\mathds{1}\|, \label{eq:single_step_error}
\end{equation}
noting that it is intuitive to think of this as reframing the error as the extent to which the inverse of $U^{(2)}$ fails to invert $U$.
Next, we introduce the notation $F(t_{k},t_{k-1}) = (U^{(2)}(t_k,t_{k-1}))^\dagger U(t_k,t_{k-1})$.
Because $F(t_{k},t_{k-1})$ is unitary, we expect (and subsequently verify) that it  obeys a first-order differential equation of the form
\begin{equation}
 \frac{\partial F(t_{k},t_{k-1})}{\partial t_k} = C(t_{k},t_{k-1}) F(t_{k},t_{k-1}). \label{eq:F_eom}
\end{equation}
Here, the notation $C$ is intended to presage that this will involve many commutators.

Integrating both sides of Eq.~\ref{eq:F_eom} we find that
\begin{equation}
 F(t_k, t_{k-1}) - \mathds{1} = \int^{t_k}_{t_{k-1}} dt' C(t',t_{k-1}) F(t',t_{k-1}) = \left(U^{(2)}(t_k,t_{k-1})\right)^{\dagger} U(t_k,t_{k-1})-\mathds{1},
\end{equation}
which can be inserted into Eq.~\ref{eq:single_step_error} to yield
\begin{equation}
\epsilon(t_k,t_{k-1}) = \|\int^{t_k}_{t_{k-1}} dt' C(t',t_{k-1}) F(t',t_{k-1})\|. \label{eq:single_step_error_integral}
\end{equation}
This is upper bounded by
\begin{equation}
\epsilon(t_k,t_{k-1}) \leq \int^{t_k}_{t_{k-1}} dt' \|C(t',t_{k-1}) F(t',t_{k-1})\| \leq \int^{t_k}_{t_{k-1}} dt' \|C(t',t_{k-1})\| \|F(t',t_{k-1})\| \leq \int^{t_k}_{t_{k-1}} dt' \|C(t',t_{k-1})\|, \label{eq:single_step_C_bound}
\end{equation}
where we have invoked the triangle inequality, submultiplicativity of the operator norm, and unitarity of $F$.
We are thus left with the problem of expressing $C$ consistent with Eq.~\ref{eq:F_eom} and bounding the integral of its norm.

To determine $C(t',t_{k-1})$ we explicitly differentiate $F(t_k,t_{k-1})$ with respect to $t_k$ and substitute $t_k\rightarrow t'$.
It is then convenient to first consider $F(t_k,t_{k-1})=f_1(t_k,t_{k-1}) f_2(t_k,t_{k-1}) f_3(t_k,t_{k-1}) f_4(t_k,t_{k-1})$ in full,
\begin{equation}
\exp\left(i \int^{t_{k-1/2}}_{t_{k-1}} \text d t' H_1(t') \right) \times \exp\left(i \int^{t_k}_{t_{k-1}} \text d t' H_2(t') \right) \times \exp\left(i \int^{t_k}_{t_{k-1/2}} \text d t' H_1(t') \right) \times \mathcal T \exp \left(-i\int^{t_k}_{t_{k-1}} \text d t' H(t')\right), \label{exp:F_long}
\end{equation}
where \(t_{k-1/2} = (t_k - t_{k-1})/2\) and so is treated as a function of \(t_k\) and \(t_{k-1}\). It is straightforward to apply the product rule to differentiate this expression, yielding a sum of four terms with the first being the one in which the leftmost factor ($f_1(t_k,t_{k-1})$) is differentiated and the fourth term being the one in which the rightmost factor ($f_4(t_k,t_{k-1})$) is differentiated.
While the first term will then be proportional to $F(t_{k},t_{k-1})$, the second through fourth terms will need to be further manipulated to appear proportional to $F(t_k,t_{k-1})$ due to the lack of commutativity among the factors in expression \ref{exp:F_long}.
Such a manipulation is realized by applying the identity $AB=BA+[A,B]$,
\begin{align}
\frac{\partial F(t_{k},t_{k-1})}{\partial t_k} = & i\left(\frac{1}{2}H_1(t_{k-1/2})+H_2(t_k)+H_1(t_k)-\frac{1}{2}H_1(t_{k-1/2})-H_1(t_k)-H_2(t_k)\right) F(t_{k},t_{k-1}) \nonumber \\
+ &\left[f_1(t_k,t_{k-1}),iH_2(t_k)\right]f_2(t_k,t_{k-1})f_3(t_k,t_{k-1})f_4(t_k,t_{k-1}) \nonumber \\
+ &\left[f_1(t_k,t_{k-1})f_2(t_k,t_{k-1}),i\left(H_1(t_k)-\frac{1}{2}H_1(t_{k-1/2})\right)\right]f_3(t_k,t_{k-1})f_4(t_k,t_{k-1}) \nonumber \\
+ &\left[f_1(t_k,t_{k-1})f_2(t_k,t_{k-1})f_3(t_k,t_{k-1}),-i\left(H_1(t_k)+H_2(t_k)\right)\right]f_4(t_k,t_{k-1}),
\end{align}
where we have retained a combination of $H_1$ and $H_2$ that obviously cancel in the first line, as to be suggestive of where the other three lines came from.
Eliminating the first line, applying the identities $[AB,C]=A[B,C]+[A,C]B$ and $[ABC,D]=AB[C,D]+A[B,D]C+[A,D]BC$, and suppressing arguments on $f_j$,
\begin{align}
\frac{\partial F(t_{k},t_{k-1})}{\partial t_k} = &\left[f_1,iH_2(t_k)\right]f_2f_3f_4 + \left( f_1\left[f_2,i\left(H_1(t_k)-\frac{1}{2}H_1(t_{k-1/2})\right)\right]f_3f_4 + \left[f_1,i\left(H_1(t_k)-\frac{1}{2}H_1(t_{k-1/2})\right)\right]f_2f_3f_4\right) \nonumber \\
& - f_1f_2\left[f_3,i\left(H_1(t_k)+H_2(t_k)\right)\right]f_4 - \left( f_1\left[f_2,i\left(H_1(t_k)+H_2(t_k)\right)\right]f_3f_4 + \left[f_1,i\left(H_1(t_k)+H_2(t_k)\right)\right]f_2f_3f_4 \right).
\end{align}
We can simplify this expression by eliminating commutators that involve commuting quantities (\textit{e.g.,} $\left[f_1,H_1\right]=0$),
\begin{align}
\frac{\partial F(t_{k},t_{k-1})}{\partial t_k} = &\left[f_1,iH_2(t_k)\right]f_2f_3f_4 +f_1\left[f_2,i\left(H_1(t_k)-\frac{1}{2}H_1(t_{k-1/2})\right)\right]f_3f_4 \nonumber \\
& - f_1f_2\left[f_3,iH_2(t_k)\right]f_4 - f_1\left[f_2,iH_1(t_k)\right]f_3f_4 - \left[f_1,iH_2(t_k)\right]f_2f_3f_4,
\end{align}
and among the remaining six terms all but the third and fourth cancel,
\begin{equation}
\frac{\partial F(t_{k},t_{k-1})}{\partial t_k} = -\frac{i}{2} f_1\left[f_2,H_1(t_{k-1/2})\right]f_3f_4 - i f_1f_2\left[f_3,H_2(t_k)\right]f_4.
\end{equation}

We now apply Eq.~\ref{eq:kubo_formula}, which leads to
\begin{align}
\frac{\partial F(t_{k},t_{k-1})}{\partial t_k} = &\frac{1}{2} f_1 \int^{t_k}_{t_{k-1}} \text d s \exp\left(i \int^{s}_{t_{k-1}} \text d t' H_2(t') \right) \left[H_2(s),H_1(t_{k-1/2})\right] \exp\left(-i \int^s_{t_k} \text d t' H_2(t') \right) f_3 f_4 \nonumber \\
&+ f_1 f_2 \int^{t_k}_{t_{k-1}} \text d s \exp\left(i\int^{s}_{t_{k-1/2}} \text d t' H_1(t') \right) \left[H_1(s),H_2(t_k)\right] \exp\left(-i\int^s_{t_k} \text d t' H_1(t') \right) f_4. \label{eq:F_eom_post-Kubo}
\end{align}
We see that the first (second) line of Eq.~\ref{eq:F_eom_post-Kubo} can be made proportional to $F(t_k,t_{k-1})$ by inserting $(f_1 f_2)^{-1}f_1 f_2$ ($(f_1 f_2 f_3)^{-1} f_1 f_2 f_3$).
This leaves us with an expression for $C(t_k,t_{k-1})$,
\begin{equation}
C(t_k,t_{k-1}) = f_1 I_{A} (f_1 f_2)^{-1} + f_1 f_2 I_{B} (f_1 f_2 f_3)^{-1}, \label{eq:C_def}
\end{equation}
where
\begin{subequations}
\begin{align}
I_A = &\frac{1}{2} \int^{t_k}_{t_{k-1}} \text d s \exp\left(i \int^{s}_{t_{k-1}} \text d t' H_2(t') \right) \left[H_2(s),H_1(t_{k-1/2})\right] \exp\left(-i \int^s_{t_k} \text d t' H_2(t') \right)~\text{and} \\
I_B = &\int^{t_k}_{t_{k-1}} \text d s \exp\left(i\int^{s}_{t_{k-1/2}} \text d t' H_1(t') \right) \left[H_1(s),H_2(t_k)\right] \exp\left(-i\int^s_{t_k} \text d t' H_1(t') \right).
\end{align}
\end{subequations}
It is convenient to further simplify this by again exploiting the unitary invariance of the relevant norm,
\begin{equation}
\|C(t_k,t_{k-1})\| = \|f_2^{-1} I_A + I_B f_3^{-1}\|. \label{eq:C_norm_def_simple}
\end{equation}

However, we need $\|C(t',t_{k-1})\|$ in order to bound the integral in Eq.~\ref{eq:single_step_C_bound}.
It is straightforward but tedious to substitute $t_k$ with the integration variable from Eq.~\ref{eq:single_step_C_bound}, noting the need to disambiguate this use of $t'$ from its use in integrals internal to the definition of $C(t_k,t_{k-1})$ in Eq.~\ref{eq:C_norm_def_simple}.
This leads us to
\begin{subequations}
\begin{align}
\epsilon(t_k,t_{k-1}) \leq \int^{t_k}_{t_{k-1}} dt' &\|C(t',t_{k-1})\|, \\
\leq \int^{t_k}_{t_{k-1}} dt' &\| \frac{1}{2} \int^{t'}_{t_{k-1}} \text d s \exp\left(i \int^{s}_{t'} \text d t'' H_2(t'') \right) \left[H_2(s),H_1((t'+t_{k-1})/2)\right] \exp\left(-i \int^s_{t'} \text d t'' H_2(t'') \right) \nonumber \\
+ &\int^{t'}_{(t'+t_{k-1})/2} ds \exp\left(i \int_{(t'+t_{k-1})/2}^{s} dt'' H_1(t'')\right) \left[H_1(s),H_2(t')\right] \exp\left(-i \int_{(t'+t_{k-1})/2}^{s} dt'' H_1(t'')\right) \|.
\end{align}
\end{subequations}

The remaining work of bounding this integral requires using the identity
\begin{equation}
\exp\left(A\right)B\exp\left(-A\right) = B + \left[A,B\right] + \frac{1}{2}\left(A\left[A,B\right] + \left[B,A\right]A\right)+\ldots \label{eq:commutator_series}
\end{equation}
We can truncate beyond singly nested commutators to arrive at a bound that is $\mathcal{O}(\delta t^4)$, \textit{e.g.,} the terms involving doubly-nested commutators will involve a 4-dimensional integral, and so on.
Retaining the indicated terms we see that
\begin{equation}
\epsilon(t_k,t_{k-1}) \leq \int^{t_k}_{t_{k-1}} dt' \left(\Gamma_1(t') + \Gamma_2(t')\right) + \mathcal{O}(\delta t^4),
\end{equation}
where $\Gamma_1$ involves integrals with a single commutator and $\Gamma_2$ involves integrals with singly nested commutators, specifically
\begin{subequations}
\begin{align}
 \Gamma_1(t') = & \|\int_{t_{k-1}}^{t'} \text d s \frac{1}{2}\left[H_2(s),H_1\left(\frac{t'+t_{k-1}}{2}\right)\right] + \int_{(t'+t_{k-1})/2}^{t'} \text d s \left[H_1(s),H_2(t')\right]\|\label{eq:gamma1_original}~\text{and} \\
 \Gamma_2(t') = & \|\int_{t_{k-1}}^{t'} \text d s \int_{t'}^{s} \text d s_1 \frac{1}{2}\left[H_2(s_1),\left[H_2(s),H_1\left(\frac{t'+t_{k-1}}{2}\right)\right]\right] + \int_{(t'+t_{k-1})/2}^{t'} \text d s \int_{(t'+t_{k-1})/2}^{s} \text d s_1 \left[H_1(s_1),\left[H_1(s),H_2(t')\right]\right]\|.
\end{align}
\end{subequations}

In Eq.~\ref{eq:gamma1_original}, the limits of integration in the two integrals can be made identical through a change of variables
\begin{equation}
\Gamma_1 = \|\frac{1}{2}\int_{t_{k-1}}^{t'} \text d s \lbrace\left[H_2(s),H_1\left(\frac{t'+t_{k-1}}{2}\right)\right] +  \left[H_1\left(\frac{s+t'}{2}\right),H_2(t')\right]\rbrace\|,
\end{equation}
which will also be useful in bounding $\Gamma_2$.
The factor of $H_2(s)$ can be Taylor expanded to second order about $s=t'$ and the factor of $H_1((s+t')/2)$ to second order about $s=t_{k-1}$, such that the commutators can be combined.
The zeroth-order terms cancel, leaving only contributions that depend on derivatives of the Hamiltonians (\textit{i.e.,} $H_1'$, $H_2'$, $H_1''$, and $H_2''$).
These terms can be replaced by constants that take the largest value that they would have over the domain of integration, and the remaining integral can be evaluated and used as an upper bound.
This same general strategy can be applied to upper bound $\Gamma_2$.

After upper bounding $\Gamma_1$ and $\Gamma_2$ on an arbitrary interval, we can apply the triangle inequality to upper bound the cumulative error over all steps to arrive at
\begin{align}
	\|U(T) - U^{(2)}(T)\| \le &\left(\frac{1}{2}\|[H_1,[H_1,H_2]]\|+\|[H_2,[H_2,H_1]]\|\right) \frac{1}{12} T \delta t^2 \nonumber \\
	&+\left(\|[H_1,H_2']\|+\|[H_1,H_2''\| + \frac{1}{2}\|[H_1',H_2]\|+\frac{1}{2}\|[H_1'',H_2]\|\right)\frac{1}{12} T \delta t^2 \nonumber \\
	&+\left(\frac{1}{2}\|[H_1',[H_2,H_1]]\|+\|[H_2',[H_2,H_1]]\|\right) \frac{1}{12} T\delta t^2 \nonumber \\
    &+\mathcal O(T \delta t^3).
\end{align}
The second line comes from $\Gamma_1$ and the third line comes from $\Gamma_2$.
\(H_1', \, H_2',\, H_1'', \, H_2'' \in \mathcal O(T^{-1})\), and so the first two terms are \(\mathcal O(\delta t^2 T)\) while the remaining terms contribute \(\mathcal O(\delta t^2)\). Here we are considering the limit \(T \rightarrow \infty\) and a fixed \(\delta t\) that produces a convergent Trotterization (i.e.~\(\delta t \in \mathcal O(\min_t \|H[u(t)]\|^{-1})\)). Therefore, in the adiabatic limit (\(T \rightarrow \infty\)), the first two terms dominate and we arrive at the desired upper bound,
\begin{align}
	\|U(T) - U^{(2)}(T)\| \le &\left(\frac{1}{2}\|[H_1,[H_1,H_2]]\|+\|[H_2,[H_2,H_1]]\|\right) \frac{1}{12} T \delta t^2 \nonumber \\
    &+\mathcal O(T \delta t^3).
\end{align}

\end{proof}

Note that the leading order terms in the adiabatic limit are the same bound as the traditional second-order Trotter error bound for time-independent Hamiltonians.
However, unlike Lemma~\ref{le:first_order_trotter}, this is only due to the adiabatic limit allowing for dropping higher order derivatives of the Hamiltonian; unlike Lemma~\ref{le:first_order_trotter}, Lemma~\ref{le:second_order_trotter} is not valid outside of that limit for general time-dependent Hamiltonians.

\begin{lemma}[Trotter Error and Infidelity]
\label{le:adiabatic_fidelity}
Given a unitary $U(T) = \mathcal T \exp \left(-i\int^T_0 \text d t' H(t')\right)$ where \(H[u(t)] = (1-u(t)) H_1 + u(t) H_2\) is gapped, and its first-order Trotterization $U^{(1)}(T)$, then in the limit that \(T \rightarrow \infty\) state infidelity \(\mathcal{I}(T)\equiv 1 - |\expval{\psi}{U^\dagger(\infty) U^{(1)}(T)}{\psi}|\) is upper bounded by the squared sum of its Trotter operator-norm error $\|U(T) - U^{(1)}(T)\|$ and an \(\mathcal O(T^{-1})\) non-adiabatic correction:
\begin{equation}
\mathcal I(T) \le \|U(T) - U^{(1)}(T)\|^2 + \mathcal O(\|U(T) - U^{(1)}(T)\| T^{-1}) + \mathcal O(T^{-2}).
\end{equation}
\end{lemma}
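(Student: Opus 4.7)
The plan is to reduce $\mathcal{I}(T)$ to a squared Hilbert-space distance, split that distance into a Trotter piece and a non-adiabatic piece via the triangle inequality, and then re-square. Define $\ket{\psi_T} := U^{(1)}(T)\ket{\psi}$, $\ket{\phi_T} := U(T)\ket{\psi}$, and $\ket{\phi_\infty} := U(\infty)\ket{\psi}$. The first step is to invoke the elementary overlap-distance identity
\begin{equation}
 \min_{\theta\in\mathbb R} \bigl\| \ket{\psi_T} - e^{i\theta}\ket{\phi_\infty} \bigr\|_{\mathrm{vec}}^{2} = 2\bigl(1 - |\braket{\phi_\infty}{\psi_T}|\bigr),
\end{equation}
so that $\mathcal{I}(T) = \tfrac{1}{2}\min_\theta\bigl\| \ket{\psi_T} - e^{i\theta}\ket{\phi_\infty} \bigr\|_{\mathrm{vec}}^{2}$. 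This recasts the amplitude-based infidelity as a squared vector distance, which is what ultimately produces the ``squared sum'' structure in the claimed bound.

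The second step is to apply the triangle inequality through the continuous-time intermediary $\ket{\phi_T}$:
\begin{equation}
 \min_\theta \bigl\| \ket{\psi_T} - e^{i\theta}\ket{\phi_\infty} \bigr\|_{\mathrm{vec}} \le \bigl\| \ket{\psi_T} - \ket{\phi_T} \bigr\|_{\mathrm{vec}} + \min_\theta \bigl\| \ket{\phi_T} - e^{i\theta}\ket{\phi_\infty} \bigr\|_{\mathrm{vec}}.
\end{equation}
The first summand is at most $\|U(T) - U^{(1)}(T)\|$ by the definition of the operator norm and unit normalization of $\ket{\psi}$. The second, using the same overlap-distance identity applied to the continuous-time problem, equals $\sqrt{2\bigl(1 - |\braket{\phi_\infty}{\phi_T}|\bigr)}$; the gapped adiabatic theorem of Refs.~\cite{Mackenzie06,Cheung11} gives $1 - |\braket{\phi_\infty}{\phi_T}|^{2} = \mathcal O(T^{-2})$, hence $1 - |\braket{\phi_\infty}{\phi_T}| = \mathcal O(T^{-2})$ and the second summand is $\mathcal O(T^{-1})$.

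Substituting these two bounds and squaring the triangle inequality yields
\begin{equation}
 \mathcal{I}(T) \le \tfrac{1}{2}\bigl(\|U(T) - U^{(1)}(T)\| + \mathcal O(T^{-1})\bigr)^{2} = \|U(T) - U^{(1)}(T)\|^{2} + \mathcal O\!\bigl(\|U(T) - U^{(1)}(T)\|\,T^{-1}\bigr) + \mathcal O(T^{-2}),
\end{equation}
where the leading $\tfrac{1}{2}$ has been absorbed into the implicit constants (which is allowed since $\tfrac12\le 1$).

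The only delicate step is the invocation of the gapped adiabatic theorem to translate a probability-level $\mathcal O(T^{-2})$ bound into the amplitude-level $\mathcal O(T^{-1})$ distance used above; this requires that the hypotheses on the spectral gap and smoothness of $u(t)$ asserted in the lemma line up with those of Refs.~\cite{Mackenzie06,Cheung11}. Everything else is standard Hilbert-space geometry: the overlap-distance identity, the triangle inequality in the vector norm, and the bound $\|A\ket{\psi}\|_{\mathrm{vec}}\le\|A\|$ for unit $\ket{\psi}$.
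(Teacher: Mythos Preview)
Your proof is correct and follows essentially the same route as the paper's: reduce $\mathcal{I}(T)$ to a squared Hilbert-space distance, split via the triangle inequality into a Trotter piece bounded by $\|U(T)-U^{(1)}(T)\|$ and a non-adiabatic piece bounded by $\mathcal O(T^{-1})$ via the gapped adiabatic theorem, then re-square. The only cosmetic difference is that you use the phase-optimized overlap--distance identity and stay at the state level throughout, whereas the paper passes through the real-part inequality and briefly escalates to operator norms before coming back; your version is arguably the cleaner of the two.
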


\begin{proof}
Let \(\{\ket{\phi_i}\}_i\) be the eigenstates of \(H[u(0)]\). For all \(\ket{\phi_j}\),
    \begin{align}
      \mathcal I(T) \le& 2 (1 - |\expval{\psi}{U^\dagger(\infty) U^{(1)}(T)}{\psi}|)\\
      \le& 2 (1 - |\Re \expval{\psi}{U^\dagger(\infty) U^{(1)}(T)}{\psi}|)\\
      \le& 2 (1 - \Re \expval{\psi}{U^\dagger(\infty) U^{(1)}(T)}{\psi})\\
         =& \max_i \expval{\phi_i}{(U(\infty) - U^{(1)}(T))^\dagger(U(\infty) - U^{(1)}(T))}{\phi_i}\\
      =& \|U(\infty) - U^{(1)}(T)\|^2\\
        \le& (\|U(T) - U^{(1)}(T)\| + \|U(\infty) - U(T)\|)^2\\
        \le& (\|U(T) - U^{(1)}(T)\| + \|(I - P_i)U(T)\ket{\phi_i}\|)^2\\
      \le& \|U(T) - U^{(1)}(T)\|^2 + \mathcal O(\|U(T) - U^{(1)}(T)\| T^{-1}) + \mathcal O(T^{-2}),
    \end{align}
    where we bounded \(\|U(\infty) - U(T)\|\) by the approximation error \(\|(I - P_i)U(T)\ket{\phi_i}\|\) for \(P_i\) the projector on the final \(i\)th eigenstate state, which we subsequently bounded by \(\mathcal O(T^{-1})\)~\cite{Mackenzie06,Cheung11} assuming \(T \rightarrow \infty\). The coefficients of this term depend on the minimal energy gap at points during the evolution but we omit these as we are only interested in its scaling with \(T\). This term is the bound on continuous-time evolved infidelity for finite time \(T\), \textit{i.e.,} when there is no Trotter error you obtain the same infidelity as from un-Trotterized evolution.

    \qed
\end{proof}

We combine Lemmas~\ref{le:first_order_trotter}-\ref{le:adiabatic_fidelity} for a time-dependent version of the first-order Trotter error:
\begin{theorem}
\label{th:layden_timedep}
Given a gapped Hamiltonian \(H[u(t)] = (1-u(t)) H_1 + u(t) H_2\) %
where \(U(T) = \mathcal T e^{-i\int_0^T H[u(t')]\text d t'}\), \(u(t) = s(t/T)\) and \(s\) is a function from \([0,1]\) to \([0,1]\), if this unitary is first-order Trotterized into \(H_1\) and \(H_2\),
\begin{eqnarray}U^{(1)}(T) &=& \prod_{k=1}^r \exp \left(-iH_2 \int_{(k-1) \delta t}^{k \delta t} \text d t' u_k(t')\right)\\ &\times&\exp \left(-iH_1 \int_{(k-1) \delta t}^{k \delta t} \text d t' (1-u_k(t'))\right),\nonumber \end{eqnarray}
where \(u_k = u((k-1)\delta t)\) and $\delta t = T/r$ is fixed, then in the limit that \(T \rightarrow \infty\) state infidelity \(\mathcal{I}(T)\equiv 1 - |\expval{\psi}{U^\dagger(\infty) U^{(1)}(T)}{\psi}|\) is upper bounded by
\begin{align}
  \min \{C_2 T^2\delta t^2, (C_1 \delta t + C_3 T \delta t^2)^2, 2 \|I\|\} + \mathcal O(\delta t) + \mathcal O(T^{-2}), \nonumber
\end{align}
where
\begin{eqnarray}
  C_1 &=& \min \{\|H_1\|, \|H_2\|\}, \\
  C_2 &=& \frac{1}{4} \|[H_1, H_2]\|^2, \\
  C_3 &=& \frac{1}{12}\left[ \min S + \frac{1}{2} \max S\right],
\end{eqnarray}
\begin{equation}
  S = \{\|[H_1,[H_1,H_2]]\|, \|[H_2,[H_2,H_1]]\|\}.
\end{equation}
The coefficients of the \(\mathcal O(\delta t)\) and \(\mathcal O(T^{-2})\) terms depend on the minimal energy gap at points during the evolution.
\end{theorem}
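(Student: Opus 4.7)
The plan is to reduce the infidelity question to an operator-norm question via Lemma~\ref{le:adiabatic_fidelity}, which turns any bound \(\|U(T)-U^{(1)}(T)\|\le \Delta\) into \(\mathcal I(T)\le \Delta^{2}+\mathcal O(\Delta\, T^{-1})+\mathcal O(T^{-2})\), and then to prove two separate operator-norm bounds whose squares yield the first two entries of the minimum. The third entry \(2\|I\|\) is just the trivial cap \(\mathcal I(T)\le 2\), useful only when both of the nontrivial bounds exceed it.

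The first of the two operator-norm bounds is immediate from Lemma~\ref{le:first_order_trotter}: \(\|U(T)-U^{(1)}(T)\|\le \tfrac{1}{2}\|[H_1,H_2]\|\,T\delta t+\mathcal O(T\delta t^{2})\), whose square gives the \(C_{2}T^{2}\delta t^{2}\) branch with \(C_{2}=\tfrac{1}{4}\|[H_{1},H_{2}]\|^{2}\). For the tighter bound I would borrow Layden's regrouping trick~\cite{Layden21} and adapt it to the time-dependent setting. Writing \(U^{(1)}(T)=\prod_{k=1}^{r} e^{-iH_{2}a_{k}}e^{-iH_{1}b_{k}}\) with \(a_{k}=\int_{(k-1)\delta t}^{k\delta t}u\,dt\) and \(b_{k}=\int_{(k-1)\delta t}^{k\delta t}(1-u)\,dt\), I would insert resolutions \(e^{+iH_{j}c}e^{-iH_{j}c}=I\) at each seam \(t=k\delta t\) chosen so that consecutive pairs of first-order steps combine into Strang-like second-order steps. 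After this telescoping, \(U^{(1)}(T)\) differs from a clean second-order Trotter product only by two uncancelled half-step exponentials at \(t=0\) and \(t=T\); one may peel off either an \(H_{1}\) or an \(H_{2}\) factor at each endpoint, and the freedom to pick the smaller in operator norm yields a boundary contribution bounded by \(C_{1}\delta t\) with \(C_{1}=\min\{\|H_{1}\|,\|H_{2}\|\}\). The bulk then matches the hypothesis of Lemma~\ref{le:second_order_trotter}, which in the adiabatic limit contributes \(C_{3}T\delta t^{2}\) with the stated constant. The triangle inequality yields \(\|U(T)-U^{(1)}(T)\|\le C_{1}\delta t+C_{3}T\delta t^{2}\), and squaring via Lemma~\ref{le:adiabatic_fidelity} produces the \((C_{1}\delta t+C_{3}T\delta t^{2})^{2}\) branch. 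The cross term \(\mathcal O(\Delta T^{-1})\) from that lemma evaluates to \(\mathcal O(\delta t/T)+\mathcal O(\delta t^{2})=\mathcal O(\delta t)\), giving the additive correction in the theorem; the \(\mathcal O(T^{-2})\) tail is inherited directly.

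The delicate step is the seam bookkeeping in the time-dependent case: because \(a_{k}\) and \(b_{k}\) vary with \(k\), the adjacent first-order steps one regroups are not exactly symmetric Strang steps---consecutive \(H_{1}\) (resp.\ \(H_{2}\)) factors across a seam carry slightly different weights \(b_{k},b_{k+1}\) (resp.\ \(a_{k},a_{k+1}\)). I expect the main technical obstacle to be reorganizing these mismatches into commutators whose prefactors scale with \(u'\) or \(u''\) and hence with \(T^{-1}\) under the adiabatic assumption, so that they are absorbed either into the leading second-order bound of Lemma~\ref{le:second_order_trotter} or into the \(\mathcal O(\delta t)\) remainder rather than degrading the scaling. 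Pulling this off cleanly is also what forces the hypothesis that the Trotterization has exactly two terms: further subdividing \(U_{1}\) or \(U_{2}\) would introduce new seams whose boundary corrections no longer fit into the \(C_{1}\delta t\) ansatz, which is precisely the restriction noted in the remark following Theorem~\ref{th:layden_timedep_informal}.
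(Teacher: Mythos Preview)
Your approach matches the paper's: Lemma~\ref{le:adiabatic_fidelity} reduces the infidelity to operator norm, Lemma~\ref{le:first_order_trotter} supplies the $C_2T^2\delta t^2$ branch, and Layden's regrouping together with Lemma~\ref{le:second_order_trotter} supplies the $(C_1\delta t+C_3T\delta t^2)^2$ branch, with the paper phrasing your ``peeled-off'' boundary piece as the commutator $\|[U(T),\exp(-iH_2\!\int u)]\|$ and bounding it via Kubo (Lemma~\ref{le:kubo}) to obtain $C_1\delta t$. Your seam-bookkeeping concern is exactly the right place to focus, and as you anticipate the mismatches between adjacent $a_k,b_k$ carry factors of $u'\sim T^{-1}$ and are absorbed into the subleading terms the paper glosses over.
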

\begin{proof}
  Since Lemma~\ref{le:first_order_trotter} and Lemma~\ref{le:second_order_trotter} are the same as their time-independent versions, the same proof strategy as employed in~\cite{Layden21} can be used. Specifically, it can be shown that \(\|U(T)-U^{(1)}(T)\| \le \|U(T) - U^{(2)}(T)\| + \max_r \|[U(T), \exp \left(-iH_2 \int_{(r-1) \delta t}^{r \delta t}\text dt' u_r(t')\right) ]\|\), where \(U^{(2)}(T)\) is the second-order Trotterization defined in Lemma~\ref{le:second_order_trotter}. The first term can be bounded by appealing to Lemma~\ref{le:second_order_trotter}. The second term can be upper bounded by applying Kubo's formula to produce \(\frac{\delta t}{2} \|[U(T), H_2]\|\), which can in turn be upper bounded by \(\delta t\|H_2\|\), \(\delta t\|H_1\|\), or \(\frac{T \delta t}{2} \|[H_1, H_2]\|\) (by another application of Kubo's formula and then an appeal to Lemma~\ref{le:first_order_trotter}). Repeating the same steps with \(H_1\) and \(H_2\) reversed produces the final result.

  This produces the following bound on the first-order Trotter error squared:
\begin{align}
  &\|U(T) - U^{(1)}(T)\|^2\\
  \leq& \min \{C_2 T^2\delta t^2, (C_1 \delta t + C_3 T \delta t^2)^2, 2 \|I\|\}, \nonumber
\end{align}
where
\begin{eqnarray}
  C_1 &=& \min \{\|H_1\|, \|H_2\|\}, \\
  C_2 &=& \frac{1}{4} \|[H_1, H_2]\|^2, \\
  C_3 &=& \frac{1}{12}\left[ \min S + \frac{1}{2} \max S\right],
\end{eqnarray}
\begin{equation}
  S = \{\|[H_1,[H_1,H_2]]\|, \|[H_2,[H_2,H_1]]\|\},
\end{equation}

We then appeal to Lemma~\ref{le:adiabatic_fidelity} to obtain the stated bound on the state infidelity \(\mathcal I(T)\) where the \(\mathcal O(\delta t)\) term comes from multiplying the \(\mathcal O(T \delta t)\) worst-case bound of \(\|U(T) - U^{(1)}(T)\|\) with the \(\mathcal O(T^{-1})\) non-adiabatic error.\qed

\end{proof}

Notice that the bound of \(2 \|I\|\) corresponds to the regime when the Trotter expansion diverges, and so this result is general (valid for any $\delta t, T$).

\section{Theorem 2 - Statement, proof, and numerical corroboration of short-time scaling}
\label{app:adiabatic_trotter_error}
\renewcommand{\theequation}{B\arabic{equation}}
\renewcommand{\thefigure}{B\arabic{figure}}
\setcounter{figure}{0}

\begin{theorem}
  Given a gapped Hamiltonian \(H[u(t)] = (1-u(t)) H_1 + u(t) H_2\) and a unitary \(U(T) = \mathcal T e^{-i\int_0^T H(t')\text d t'}\), where \(u(t) = s(t/T)\) and \(s\) is a function from \([0,1]\) to \([0,1]\) that is infinitely differentiable, if $U(t)$ is first-order Trotterized into \(H_1\) and \(H_2\) terms with fixed time steps \(\delta t \in \mathcal O(\min_t \|H[u(t)]\|^{-1})\), then in the limit that \(T \rightarrow \infty\)  with \(u(0) \rightarrow 0\) and \(u(T) \rightarrow 1\) the final state infidelity is bounded by \(\mathcal O (T^{-2} \delta t^2) + \mathcal O(T^{-2})\). Moreover, initially at a given fixed \(t/T\ll 1\), state infidelity increases as \(\mathcal O(t^2 \delta t^2)\).
The coefficient of the \(\mathcal O(T^{-2})\) term depends on the minimal energy gap at points during the evolution.
\label{th:adiabatic_trotter_error}
\end{theorem}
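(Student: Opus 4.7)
The strategy is (i) to express the Trotterized evolution as continuous dynamics under an effective Hamiltonian carrying a small, Hermitian perturbation of order $\delta t$; (ii) to evaluate first-order time-dependent perturbation theory in the instantaneous eigenbasis of $H[u(t)]$; and (iii) to reduce the transition amplitude to an oscillatory integral whose favorable decay follows from the fact that a \emph{complete} schedule forces the perturbation envelope to vanish at both $t=0$ and $t=T$.

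\emph{Effective Hamiltonian.} For $\delta t\in\mathcal{O}(\|H\|^{-1})$, the Baker--Campbell--Hausdorff series converges on each Trotter step. Using smoothness of $u$ (so that $\int_{(k-1)\delta t}^{k\delta t}u(t')dt'=u(t_k)\delta t+\mathcal{O}(\delta t^2)$), one step equals $\exp(-iH_\text{eff}(t_k)\delta t)+\mathcal{O}(\delta t^3)$ with $H_\text{eff}(t)=H[u(t)]+V(t)$ and
\[V(t)=-\tfrac{i}{2}\,u(t)(1-u(t))\,\delta t\,[H_1,H_2],\]
which is Hermitian since $[H_1,H_2]$ is anti-Hermitian. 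The structurally decisive feature is the envelope $u(1-u)$, which by hypothesis vanishes at both endpoints of a complete schedule.

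\emph{Perturbation theory in the adiabatic basis.} Write $\ket{\psi(t)}=\sum_n c_n(t)e^{-i\int_0^t E_n(s)ds}\ket{n(t)}$ with $c_0(0)=1$, where $\ket{n(t)}$ and $E_n(t)$ are the instantaneous eigenstates and energies of $H[u(t)]$ and $\omega_{n0}=E_n-E_0$. The unperturbed adiabatic dynamics already contribute the standard non-adiabatic amplitude of order $T^{-1}$ per $c_n$ (giving the $\mathcal{O}(T^{-2})$ term in the final bound via Lemma~\ref{le:adiabatic_fidelity}). First-order perturbation theory in $V$ adds
\[c_n^\text{Tr}(T)\approx -i\int_0^T V_{n0}(t)\,e^{i\int_0^t\omega_{n0}(s)ds}\,dt.\]
Since $\partial_u H=H_2-H_1$, a direct check gives $[H[u],\partial_u H]=[H_1,H_2]$, and a Hellmann--Feynman relation yields
\[\expval{n(t)}{[H_1,H_2]}{0(t)} = -\omega_{n0}^2(t)\,\braket{n(t)}{\partial_u 0(t)}.\]
Substituting and rescaling $\tau=t/T$ puts the amplitude in the oscillatory form
\[c_n^\text{Tr}(T) = \tfrac{\delta t\,T}{2}\int_0^1 G(\tau)\,e^{iT\Phi(\tau)}d\tau,\]
with $\Phi(\tau)=\int_0^\tau\omega_{n0}(s(\tau'))d\tau'$ strictly monotone by the gap hypothesis and $G(\tau)=s(1-s)\,\omega_{n0}^2\braket{n}{\partial_u 0}$ smooth, \emph{and satisfying $G(0)=G(1)=0$} by completeness.

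\emph{Oscillatory-integral decay, short-time bound, and main obstacle.} Integrate by parts twice against $d[e^{iT\Phi}]=iT\Phi'e^{iT\Phi}d\tau$. The first IBP has zero boundary contribution (since $G$ vanishes at endpoints) and produces $-\tfrac{1}{iT}\int_0^1(G/\Phi')'e^{iT\Phi}d\tau$; the second IBP generically has a non-vanishing boundary term of order $T^{-1}$, so the remaining integral is $\mathcal{O}(T^{-1})$ and the whole oscillatory integral is $\mathcal{O}(T^{-2})$. This yields $|c_n^\text{Tr}(T)|^2=\mathcal{O}(\delta t^2/T^2)$; summing over $n\neq 0$ and combining with the $\mathcal{O}(T^{-2})$ non-adiabatic term gives the stated bound. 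For $t\ll T$ the integration stops before $G$ returns to zero, so the endpoint cancellation is unavailable; a direct absolute-value bound on the amplitude integral gives $|c_n^\text{Tr}(t)|\leq Ct\,\delta t$, hence $\mathcal{I}(t)=\mathcal{O}(t^2\delta t^2)$. The principal obstacle is the oscillatory-integral bound: the extra factor of $T^{-1}$ beyond generic Riemann--Lebesgue decay is available \emph{only because} $G(0)=G(1)=0$, which is the mathematical encoding of the completeness of the adiabatic schedule and the precise source of the self-healing phenomenon.
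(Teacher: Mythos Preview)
Your argument is correct and rests on the same key insight as the paper's proof: the leading Trotter perturbation in the adiabatic basis has an envelope that vanishes at both $t=0$ and $t=T$ precisely because the schedule is complete, and this endpoint vanishing is what converts the naive $\mathcal{O}(T\,\delta t)$ amplitude into $\mathcal{O}(T^{-1}\delta t)$. The short-time $\mathcal{O}(t^2\delta t^2)$ statement is also obtained by the same mechanism (no endpoint cancellation yet).

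The execution, however, differs in a useful way. The paper works directly with the discrete one-step map, derives a recursion for the adiabatic-basis amplitudes, passes to a continuum Schr\"odinger equation via a central-difference argument, and then expands the offending coupling $\tilde{\mathcal R}_{ij}$ in a sine series on $[0,T]$ (which it can do because $\mathcal R_{ij}\propto\langle\phi_i|U^{(1)}|\phi_j\rangle$ vanishes at both endpoints); each sine mode is then treated as a harmonic perturbation and bounded via the Rabi/Lorentzian-tail formula. You instead obtain the $u(1-u)$ envelope explicitly from a single BCH step, rewrite the matrix element using $[H[u],\partial_uH]=[H_1,H_2]$ and Hellmann--Feynman, and extract the extra $T^{-1}$ by two integrations by parts on the oscillatory integral $\int_0^1 G(\tau)e^{iT\Phi(\tau)}d\tau$ with $G(0)=G(1)=0$ and $\Phi'>0$. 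Your route is more compact and makes the role of completeness transparent at the level of the envelope $u(1-u)$; the paper's route is more careful about the discrete-to-continuous passage (it separately tracks the $\mathcal S_{ij}$, $\mathcal Q_{ij}$ couplings of order $T^{-1}$ and discusses the finite-difference residual), which you have absorbed into the ``$+\mathcal O(\delta t^3)$ per step'' and ``standard $\mathcal O(T^{-2})$ non-adiabatic'' remarks. Both arrive at the same bound for the same reason.
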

\begin{proof}

  Consider a \(d\)-dimensional \(H[u(t)] = (1-u(t)) H_1 + u(t) H_2\), where \(u(t) \in [0,\,1]\).

A single first-order Trotter time step \(\delta t\) of \(\ket{\psi(t)}\) produces \(\ket{\psi(t+\delta t)} = U^{(1)}(t, \delta t, T) \ket{\psi(t)}\), where \(T\) is the total time. Therefore, in the basis of instantaneous eigenstates \(\{\ket{\phi_i(t/T)}\}_{i=1,\ldots,d}\) (with corresponding energies \(\{E_i(t/T)\}_{i=1,\ldots,d}\)), it has amplitude of remaining in the \(i\)th state at time \(t+\delta t\),
\begin{align}
  \label{eq:ground_state_amp}
    B_i(t+\delta t,t/T) = \sum_{j,k} A_{kj}(t, \delta t, T) B_{j}(t,T) C_{ik}(t, \delta t, T)
  \end{align}
  where
  \begin{align}
    A_{ij}(t, \delta t, T) =& \expval{\phi_i(t/T)}{U^{(1)}(t, \delta t,T)}{\phi_j(t/T)}, \\
    B_{i}(t,t/T) =& \braket{\phi_i(t/T)}{\psi(t)},\\
    C_{ij}(t, \delta t, T) =& \braket{\phi_i((t+\delta t)/T)}{\phi_j(t/T)},
  \end{align}
  and \(U^{(1)}(t,\delta t, T)\) is a first-order Trotterization for time step \(\delta t\) at time \(t\): \(U^{(1)}(t,\delta t, T) = U_2(t,t+\delta t) U_1(t, t+\delta t)\). (\(U^{(1)}(t, \delta t, T)\) corresponds to the \(k=t/\delta t+1\) term in Eq.~\ref{eq:Utrot}). Notice that we work in terms of the adiabatic basis here, which we paramaterize by the normalized time \(0\le t/T \le 1\).

  In the case of continuous evolution through avoided crossings, given an initial state in the ground state (\textit{i.e.,} \(B_1(0,0)=1\)), we can find from adiabatic theory~\cite{Mackenzie06,Cheung11} that the final amplitudes \(B_1(0,1)\) and \(B_1(T,1)\) are related by \(\mathcal O (T^{-1})\), whose prefactor depends on the time derivatives of the Hamiltonians and energy gap. This can be a loose bound.%

  We set \(\ket{\psi(T)} = U^{(1)}(T)\ket{\psi(0)}\) with \(\ket{\psi(0)} = \ket{\phi_1(0)}\). It follows that our definition of infidelity, \(\mathcal I(T)\), is bounded by \(\sum_{i\ne 1}|B_i(T,1)|^2\):
  \begin{align}
    \mathcal I(T) =& 1 - |\braket{\phi_1(1)}{\psi(T)}|\\
    =& \sum_{i} |\braket{\phi_i(1)}{\psi(T)}|^2 - |\braket{\phi_1(1)}{\psi(T)}|\\
    \le& \sum_{i \ne 1} |\braket{\phi_i(1)}{\psi(T)}|^2 + |\braket{\phi_1(1)}{\psi(T)}| \nonumber\\
    & - |\braket{\phi_1(1)}{\psi(T)}|\\
    =& \sum_{i \ne 1} |\braket{\phi_i(1)}{\psi(T)}|^2\\
    =& \sum_{i\ne 1}|B_i(T,1)|^2.
  \end{align}

  First let us note the scaling of the \(A_{ij}\) and \(C_{ij}\) terms.
  \begin{align}
    A_{ii}(t, \delta t, T) =& 1 - i E_i(t/T) \delta t + \mathcal O(\delta t^2),\\
    C_{ii}(t, \delta t, T) =& 1 + \mathcal O(\delta t/T),
  \end{align}
  and for \(j \ne i\)
  \begin{align}
    A_{ij}(t, \delta t, T) =& \mathcal O(\delta t^2),\\
    C_{ij}(t, \delta t, T) =&  \braket{\dot \phi_i(u)}{\phi_j(u)} \delta t/T + \mathcal O((\delta t/T)^2).
  \end{align}

  With this in hand, we proceed to redefine the terms in Eq.~\ref{eq:ground_state_amp} so as to rewrite it into a more manageable form.
  First we expand out
  \begin{align}
    &A_{ii}(t, \delta t,T) C_{ii}(t, \delta t,T) \nonumber\\
    =& 1 - i E_i(t/T) \delta t + \mathcal O (\delta t/T, \delta t^2),
  \end{align}
  and for \(j\ne i\)
  \begin{equation}
    A_{ji}(t, \delta t, T) C_{ij}(t, \delta t,T) \in \mathcal O(\delta t^3/T),
  \end{equation}
  and then combine them:
  \begin{align}
    &A_{ii}(t, \delta t,T) C_{ii}(t, \delta t,T) + \Sigma_{j\ne i} A_{ji}(t, \delta t,T) C_{ij}(t, \delta t,T) \nonumber\\
    =& 1 - i E_i(t/T) \delta t + \mathcal O(\delta t/T, \delta t^2),
  \end{align}
  where we define \(\mathcal O(f,g) \equiv \mathcal O(f) + \mathcal O(g)\).

  Considering \(i \ne j\), we proceed to define three more terms:
  \begin{align}
    & A_{jj}(t, \delta t,T) C_{ij}(u, \delta t,T) \nonumber\\
    =&  \braket{\dot \phi_i(t/T)}{\phi_j(t/T)} \expval{\phi_j(t/T)}{U^{(1)}(t,\delta t,T)}{\phi_j(t/T)} \delta t/T \nonumber\\
    & + \mathcal O(\delta t^2/T^2) \\
    \equiv& \mathcal S_{ij}(t, \delta t,T)\delta t + \mathcal O(\delta t^2/T^2),
  \end{align}
  \begin{align}
    &  A_{ij}(t, \delta t,T) C_{ii}(t, \delta t,T) \nonumber\\
    =&  \expval{\phi_i(\nicefrac{t}{T})}{U^{(1)}(t,\delta t,T)}{\phi_j(\nicefrac{t}{T})} (1 + 2 \Re \braket{\dot \phi_i(\nicefrac{t}{T})}{\phi_i(\nicefrac{t}{T})}\nicefrac{t}{T}) \nonumber\\
    & + \mathcal O(\delta t^4/T^2) \\
    \equiv& \mathcal R_{ij}(t, \delta t,T)\delta t + \mathcal O (\delta t^4/T^2),
  \end{align}
  and
  \begin{align}
    &\Sigma_k A_{kj}(t, \delta t,T) C_{ik}(t, \delta t,T) \nonumber\\
    =& \Sigma_k \expval{\phi_k(t/T)}{U^{(1)}(t,\delta t,T)}{\phi_j(t/T)} \braket{\dot \phi_i(t/T)}{\phi_k(t/T)} \delta t/T \\
    & + \mathcal O(\delta t^2/T^2) \nonumber\\
    \label{eq:Q}
    \equiv& \mathcal Q_{ij}(t, \delta t, T) \delta t + \mathcal O(\delta t^2/T^2).
  \end{align}

  Dropping their arguments from now on, it follows that \(\mathcal Q_{ij} \in \mathcal O(1/T)\), \(\mathcal R_{ij} \in \mathcal O (\delta t)\) and \(\mathcal S_{ij} \in \mathcal O(1/T)\).

  With these definitions we can now rewrite Eq.~\ref{eq:ground_state_amp}:
  \begin{align}
    & B_i(t +\delta t,t/T) \nonumber\\
    =& [1 - i E_i(t/T) \delta t + \mathcal O(\nicefrac{\delta t}{T}, \delta t^2) ] B_i(t,t/T)\\
     & + \sum_{j\ne i} [\mathcal Q_{ij}\delta t +\mathcal R_{ij} \delta t + \mathcal S_{ij} \delta t + \mathcal O(\nicefrac{\delta t^2}{T^2})] B_j(t,t/T). \nonumber
  \end{align}
  It follows that
    \begin{align}
      &2i \frac{B_i(t \pm \delta t/2,t/T) - B_i(t,t/T)}{\delta t }\nonumber\\
      =& [ E_i(t/T) + \mathcal O(1/T, \delta t) ] B_i(t,t/T)\\
      & \pm 2i \sum_{j\ne i} [\mathcal Q^\pm_{ij} + \mathcal R^\pm_{ij} + \mathcal S^\pm_{ij} + \mathcal O(\delta t/T^2)] B_j(t,t/T), \nonumber
  \end{align}
    where \(\mathcal Q^\pm_{ij} = \mathcal Q_{ij}(t, \pm \delta t, T)\) and similarly for \(\mathcal R^\pm_{ij}\) and \(\mathcal S^\pm_{ij}\).

  Hence,
    \begin{align}
      &i \frac{B_i(t + \delta t/2,t/T) - B_i(t - \delta t/2,t/T)}{\delta t }\nonumber\\
      =& [ E_i(t/T) + \mathcal O(1/T, \delta t) ] B_i(t, t/T)\\
      & + i \sum_{j\ne i} [\tilde{\mathcal Q}_{ij} + \tilde{\mathcal R}_{ij} + \tilde{\mathcal S}_{ij} + \mathcal O(\delta t/T^2)] B_j(t,t/T), \nonumber\\
      \label{eq:Schrodinger}
      =& i \partder{B_i(t,t/T)}{t} + \mathcal O(\delta t^2),
    \end{align}
    where \(\partder{B_i(t,t/T)}{t}\) is a derivative w.r.t.~the first argument of \(B_i\) only (\textit{i.e.,} \(t/T\) is fixed as another independent variable) and where \(\tilde{\mathcal Q}_{ij} = \mathcal Q^+_{ij} + \mathcal Q^-_{ij}\) and similarly for \(\tilde{\mathcal R}_{ij}\) and \(\tilde{\mathcal S}_{ij}\). The \(\mathcal O(\delta t^2)\) term above is due to the finite difference error of approximating a first-order derivative using the central difference.

    Up to the given errors, Eq.~\ref{eq:Schrodinger} is the Schr\"odinger equation written in terms of the coefficients, \(B_i(t)\), of \(\ket{\psi(t)}\) in terms of the adiabatic basis \(\ket{\phi_i(t)}\). We see that the \(\tilde Q_{ij}\), \(\tilde R_{ij}\), and \(\tilde S_{ij}\) terms are responsible for coupling between energy eigenstates.

    We note that these terms scale the same as their untilde'd versions: \(\tilde{\mathcal Q}_{ij} \in \mathcal O(1/T)\), \(\tilde{\mathcal R}_{ij} \in \mathcal O (\delta t)\) and \(\tilde{\mathcal S}_{ij} \in \mathcal O(1/T)\).

    We set \(B_1(0,0) = 1\) and \(B_j(0,0) = 0\) for all \(j\ne 1\).

    In the following we will be taking the limit \(T\rightarrow \infty\) to bound the effect of these coupling terms in the adiabatic limit.

    Let us assume that the leading order error is from \(\tilde{\mathcal R}_{ij}\). It looks like the most likely candidate since it is only dependent on \(\delta t\). Adding up \(r = T/\delta t\) time steps with the triangle inequality would be expected to produce cumulative error \(\mathcal O(T)\), a quantity that increases with \(T\).

    \(\mathcal R_{ij} \propto \expval{\phi_i(t)}{U^{(1)}(t,\delta t,T)}{\phi_j(t)} = 0\) when \(t=0\) or \(t=T\) because \(U^{(1)}(0,\delta t,T) = \exp(-i H_1 \delta t)\) and \(U^{(1)}(T,\delta t,T) = \exp(-i H_2 \delta t)\). Hence, it can be expanded in a sine series, which is thus also true of \(\tilde{\mathcal R}_{ij}\):
    \begin{equation}
    \tilde{\mathcal R}_{ij} = \sum_m c_m R \sin (\pi m t/T),
    \end{equation}
    where \(R \in \mathcal O(\delta t)\) and \(\sum_m |c_m|^2 = 1\).

    Let us consider one of its sine terms as a leading-order (harmonic) perturbation, rewritten as \(V_{ij}(u) = V e^{i \pi m t/T} + V^\dagger e^{-i\pi m t/T}\) where \(V = -i R/2\), which is turned on at \(t/T \in [0,1]\), on the rest of the Hamiltonian in Schr\"odinger's equation above. It follows from time-dependent perturbation theory that the first-order correction to the \(j\)th eigenstate coefficient is
    \begin{align}
      &B^{(1)}_j(t,t/T) \nonumber\\
      =& \frac{-i R}{2} \left[ \frac{1 - e^{i(m \pi + T\Delta_{j1})t/T}}{m\pi/T + \Delta_{j1}} - \frac{1 - e^{i(-m \pi + T\Delta_{j1})t/T}}{-m\pi/T + \Delta_{j1}} \right],
      \label{eq:Rerror}
    \end{align}
    where \(\Delta_{j1} = E_j(t) - E_1(t)\).
    \begin{align}
      \implies& B^{(1)}_j(T,1)\nonumber\\
      =& \frac{-i R}{2} \left[ \frac{1 - (-1)^m e^{i T\Delta_{j1}}}{\nicefrac{m\pi}{T} + \Delta_{j1}} - \frac{1 - (-1)^m e^{i T\Delta_{j1}}}{-\nicefrac{m\pi}{T} + \Delta_{j1}} \right]\\
      =& -i \frac{m R \pi}{T \Delta^2_{j1}} (1 - (-1)^m e^{i \Delta_{j1} T}) + \mathcal O(\delta t, T^{-2})\\
      \in& \mathcal O(T^{-1}\delta t ).
    \end{align}

    The second-last line was produced by taking the \(T \rightarrow \infty\) limit. As a reminder, the \(\mathcal O(\delta t)\) dependence comes from \(R \in \mathcal O(\delta t)\).

    The same result will occur for any other sine component in the normalized decomposition and so the overall scaling is \(\mathcal O(T^{-1}\delta t )\).

    Therefore, if this is the leading-order error, then the infidelity will scale as \(\sum_{i\ne 1} |B^{(1)}_i(T,1)|^2 \in \mathcal O(T^{-2}\delta t^2)\). Notice that, unlike our intuition from applying the triangle inequality, we found that the cumulative error is decreasing with \(T\). This is because the coefficient of this \(\mathcal O(\delta t^2)\) error term is zero at \(t=0\) and \(t=1\) and so is akin to a harmonic perturbation, like a Rabi oscillation. The frequency of this Rabi oscillation, compared to the energy gap \(\Delta_{j1}\) becomes infinitesimally small when \(T \rightarrow \infty\) and so the probability of excitation decreases as the Lorentzian tail associated with Rabi oscillations: \(\mathcal O(T^{-2}\delta t^2)\).

    Perhaps this means that we did not choose the correct leading-order term. Let us instead assume that the leading order error is from \(\tilde{\mathcal S}_{ij}\). This still looks like a troublesome term since it scales as \(1/T\). Adding up \(r = T/\delta t\) time steps with the triangle inequality would be expected to produce cumulative error \(\mathcal O(1/\delta t)\), a quantity that not only does not decrease with \(T\), but increases with \(\delta t^{-1}\)!

    Unlike \(R_{ij}\), \(S_{ij}\) and is not zero at \(t=0\) or \(t=T\) since it is proportional to a ``diagonal'' expectation value \(\expval{\phi_j(t/T)}{U^{(1)}(t,\delta t,T)}{\phi_j(t/T)}\) instead of an ``off-diagonal'' one. In fact, we can bound \(S_{ij}\) by \(c/T\), where \(c \in \mathbb R\), a quantity independent of \(t\).

    Therefore, let us consider this as a leading-order (constant) perturbation, \(V_{ij}(u) = c/T\), which is turned on at \(t/T \in [0,1]\). It follows from time-dependent perturbation theory that the first-order correction to the \(j\)th eigenstate population is
    \begin{equation}
      |B_j^{(1)}(T,1)|^2 = \frac{4 |c/T|^2}{\Delta^2_{j1}} \sin^2 (\Delta_{j1} T/2) \in \mathcal O(T^{-2}).
    \end{equation}

    This is the same as the scaling found in continuous adiabatic evolution, as discussed earlier.

    Let us finally consider that the leading order error is from the remaining coupling term, \(\tilde{\mathcal Q}_{ij}\). Again, like \(\tilde{\mathcal S}_{ij}\) this can perhaps be troublesome because it scales as \(\mathcal O( 1/T)\) and so after \(r = T/\delta t\) time steps can potentially produce \(\mathcal O(1/\delta t)\) error.

    Like \(S_{ij}\), \(Q_{ij}\) is also not zero at \(t=0\) or \(t=T\) because in its sum (indexed by \(k\) in Eq.~\ref{eq:Q}) contains one ``diagonal'' expectation value \(\expval{\phi_j(t/T)}{U^{(1)}(t,\delta t,T)}{\phi_j(t/T)}\). We can proceed in the same manner as for \(S_{ij}\) and bound it by \(c/T\) and find overall scaling \(\mathcal O(T^{-2})\).

    Note that we can get away considering the untilde'd coefficients above since the tilde'd ones have the same properties: \(\tilde R_{ij}\) is equal to zero at \(t=0\) and \(t=T\) and the others are still upper bounded by \(c/T\), for some \(c \in \mathbb R\). So including them instead does not change our concluded scaling.

    We now discuss the \(\mathcal O(\delta t^2)\) error from the central-difference approximation to the first derivative in Eq.~\ref{eq:Schrodinger}. Going to higher orders in \(\delta t\) in approximating the first derivative of \(B_i\) would raise the power on the \(\delta t\) and produce new coefficients that would be weighted differences of the existing coefficients between their values at \(t\) and \(t+\delta t\) for various values of \(\delta t\). Any such new term replacing \(R_{ij}\) would still necessary be equal to zero at \(t=0\) and \(t=T\) and the other terms would still be upper bounded by \(c/T\), for some \(c \in \mathbb R\). So the conclusions would not change for any higher order approximation to the first derivative, and if the derivative is an analytic function, the error term could be made arbitrarily small by going to sufficiently high-order. In particular, for time \(T\), the order of approximation could be set at \(\Omega(\lceil -2\log_{\delta t}T \rceil)\) (the minus sign is necessary since \(\delta t \|H(t)\| < 1\) to lie within the radius of convergence, \textit{i.e.,} for the Baker-Campbell-Haussdorff (BCH) expansion to be convergent). This procedure would change this error from \(\mathcal O(\delta t^2)\) to \(\mathcal O(\delta t^{\log_{\delta t} T^{-2}}) \in \mathcal O(T^{-2})\) and so be dominated by the other terms.

    Therefore, we can neglect this term and conclude that the final state infidelity is bounded by \(\mathcal O(T^{-2} \delta t^2) + \mathcal O(T^{-2})\).

    Let us now consider a timespan early in the adiabatic evolution at fixed $t/T \ll 1$ as $T \rightarrow \infty$ (i.e.~ $t/T$ is very small but constant as $T$ goes to infinity). %

    We return back to Eq.~\ref{eq:Rerror} to consider the error from \(\mathcal R_{ij}\) but now evaluate it for \(t \ne T\):
    \begin{align}
      &B^{(1)}_j(t,t/T) \nonumber\\
      =& \frac{-i R}{2} \left[ \frac{1 - e^{i(m \pi + T\Delta_{j1})t/T}}{m\pi/T + \Delta_{j1}} - \frac{1 - e^{i(-m \pi + T\Delta_{j1})t/T}}{-m\pi/T + \Delta_{j1}} \right]\\
      =& -im\pi RT t^2/2 T^2 + \mathcal O((t/T)^3)\\
      \in& \mathcal O(\delta t T, (t/T)^3).
    \end{align}
    As before, the second-last line was produced by taking the \(T \rightarrow \infty\) limit.

    Therefore, fixing \(t/T\ll 1\) at some initial value, the leading-order error will scale as \(\sum_{i \ne 1} |B_i^{(1)}(t,t/T)|^2 \in \mathcal O(T^2 \delta t^2 ) = \mathcal O(t^2 \delta t^2)\).

    We can similarly consider the contributions from \(\tilde{\mathcal S}_{ij}\) and \(\tilde{\mathcal Q}_{ij}\) but since we bounded these by \(t/T\)-independent quantities, there is no change from considering intermediate timepoints \(t/T \ll 1\).

    Therefore, initially at a given \(\delta t/T\ll 1\), state infidelity increases as \(\mathcal O(T^2 \delta t^2)\). Note that this bound is different than the ``na\"ive'' first-order Trotter bound of \(\mathcal O(\delta t T)\) that you could expect at any given \(\delta t/T\).
    \qed

  \end{proof}

  The initial \(\mathcal O(T^2 \delta t^2 )\) scaling of Theorem~\ref{th:adiabatic_trotter_error} for fixed \(t/T\ll 1\) is numerically verified in the following Figure~\ref{fig:instantaneous_eigenstate_fidelity}.

  \begin{figure}[H]
  \centering
\includegraphics[scale=0.37]{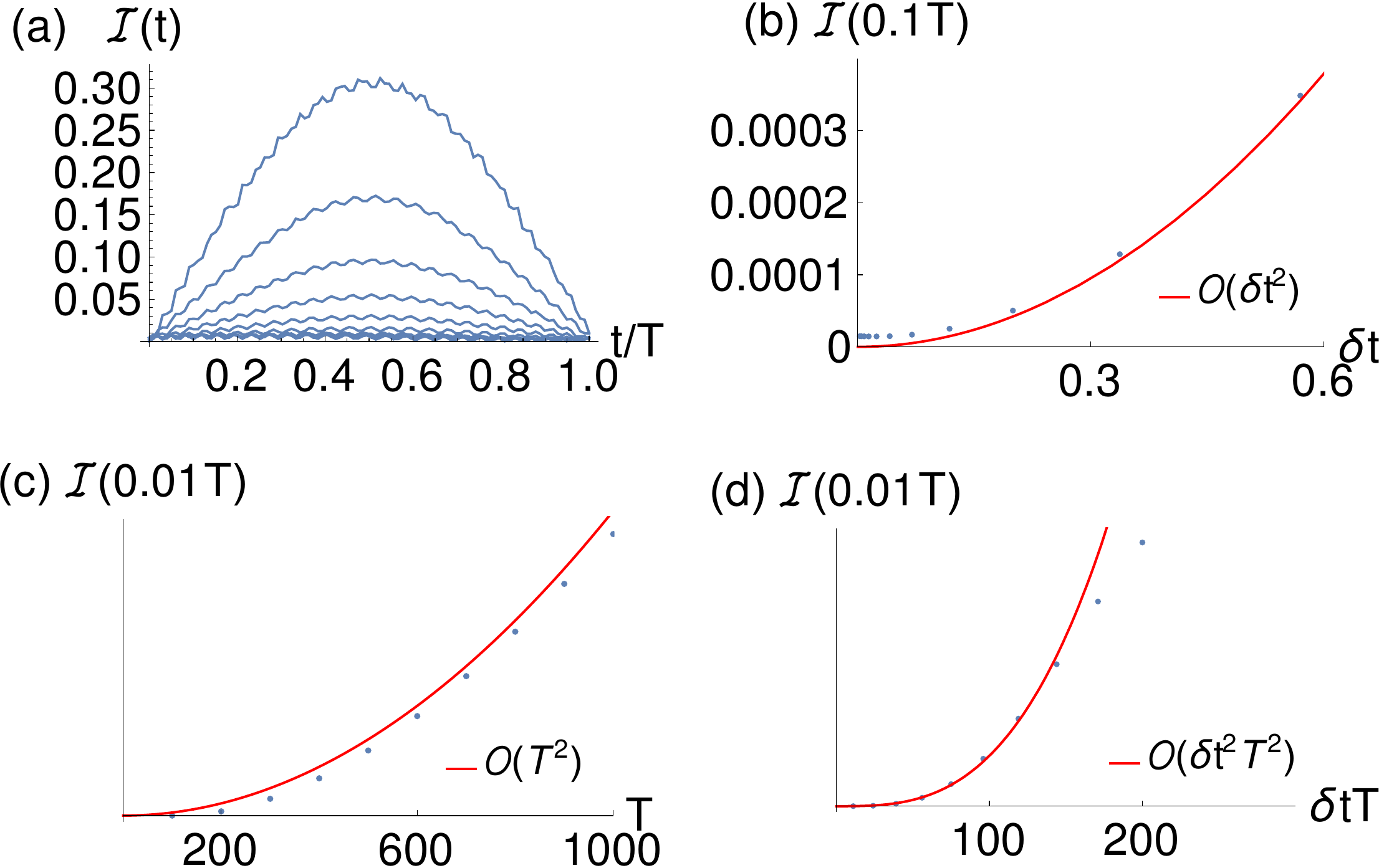}
\caption{(a) Infidelity of Trotterized initial ground state for a simple two-level system (\(H_1 = X\) and \(H_2 = Z\))  at different time steps for full ramp. The larger the time step, the lower the fidelity dips at \(u=1/2\). This figure shows how large Trotter fidelity error incurred at early points in the control curve is cancelled by Trotter error at later points in the control curve. In particular, Theorem~\ref{th:adiabatic_trotter_error} states that the initial scaling is \(\mathcal O(T^2 \delta t^2 )\) for a fixed \(t/T\). This is illustrated (b) for fixed \(t/T=0.1\) and \(T=100\), where the initial infidelity with respect to \(\delta t\) is quadratic, (c) for fixed \(t/T = 0.1\) and \(\delta t=0.1\), where it is found to be quadratic with respect to \(T\) too, and (d) just for fixed \(t/T\), where it is found to be quadratic with respect to \(\delta t T\). For larger than two-level systems, the same behavior is observed between the ground state and any excited state with fully traversed coupling (see Appendix~\ref{app:numerics}).}
\label{fig:instantaneous_eigenstate_fidelity}
\end{figure}

\section{Exploration of variable-time-step Trotterization}
\label{app:counterdiabatic}
\renewcommand{\theequation}{C\arabic{equation}}
\renewcommand{\thefigure}{C\arabic{figure}}
\setcounter{figure}{0}

Here we consider first-order Trotterization of adiabatic unitaries, with the additional freedom that we permit the time steps of the Trotter steps to be variable.

The leading error from first-order Trotterization can be used as a natural counter-diabatic term for reduced infidelity in Trotterized adiabatic evolution, which can correct diabatic errors present at finite \(T\)~\cite{wurtz2022counterdiabaticity}. The magnitude of this correction term can be optimized by allowing for variable Trotter time steps. In this way, the regime of applicability of Theorem~\ref{th:layden_timedep}'s scaling \(\mathcal O(\delta t) + \mathcal O(T \delta t^2)\) can be extended to reach further for both small and large \(\delta t\), corresponding to the green and red regimes in Figure~\ref{fig:partialramps}. A bound on the amount that this regime can be increased for these two extremes of \(\delta t\) is given by the following Theorem:

\begin{theorem}
\label{th:bound}
If Trotterization is allowed to have variable time step, for total time \(T\), the state infidelity \(\mathcal I(T)\) under first-order Trotterization is less than or equal to \(C_2 T^2 \delta t^2\) for a regime in \(\delta t\) that is larger than Theorem~\ref{th:layden_timedep}'s by \(o(\sqrt{|H|})\) for large time steps and \(o(T^{-1})\) for small time steps.
\end{theorem}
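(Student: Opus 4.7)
The plan is to generalize the first-order Trotterization in Eq.~\eqref{eq:Utrot} to variable step sizes $\{\delta t_k\}_{k=1}^r$ with $\sum_k \delta t_k = T$. First I would redo Lemma~\ref{le:first_order_trotter} step-by-step in this setting, obtaining a per-step error proportional to $\|[H_1,H_2]\|\, u(t_{k-1})(1-u(t_{k-1}))\,\delta t_k^2$ plus higher-order corrections in $\delta t_k$ and $1/T$. Summing over steps and applying Cauchy--Schwarz recovers an envelope of the form $C_2 T^2 \delta t^2$ where $\delta t = \max_k \delta t_k$, so the squared Trotter error slots into Lemma~\ref{le:adiabatic_fidelity} just as before. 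The useful regime for this bound is delimited at small $\delta t$ by the crossover with $(C_1\delta t+C_3 T\delta t^2)^2$ in Theorem~\ref{th:layden_timedep}, and at large $\delta t$ by the Trotter non-convergence condition $\delta t\,\|H(t)\|\sim 1$.

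The central observation, following~\cite{wurtz2022counterdiabaticity}, is that the leading Trotter-error operator in each step is proportional to $[H_1,H_2]$ with amplitude $u'(t_k)\,\delta t_k^2/2$, which has the structure of a counter-diabatic correction to $H[u(t)]$. Because variable $\delta t_k$ modulate this amplitude pointwise in time, I would recast the choice of schedule as a control problem: minimize the cumulative operator-norm error subject to $\sum_k \delta t_k = T$ and the local convergence constraint $\delta t_k\,\|H(t_k)\|\le 1$. The step-wise bound can then be re-expressed as an integral, and the optimal schedule balances the integrand via a Karush--Kuhn--Tucker condition on the simplex of time-step choices.

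The extension at large $\delta t$ would follow by setting $\delta t_k\propto \|H(t_k)\|^{-1}$, which satisfies the convergence bound \emph{pointwise} rather than uniformly; the average step is then larger than the fixed-step threshold $1/\max_t\|H(t)\|$, and the slack---bounded by the variation of $\|H(t)\|$ along the schedule---gives the claimed $o(\sqrt{|H|})$ improvement, with the little-$o$ reflecting that the gain vanishes when $\|H(t)\|$ is constant. The extension at small $\delta t$ would come from using the remaining freedom to partially annihilate the endpoint contribution $(C_1\delta t)^2$ in Theorem~\ref{th:layden_timedep}, by placing compensating short steps near $u\in\{0,1\}$ where $u(1-u)=0$ so the counter-diabatic-like term is inactive and only the endpoint commutator survives; the cancellation is controlled by $u'/T$, yielding an $o(T^{-1})$ shift of the crossover in $\delta t$.

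The main obstacle I anticipate is preserving the cancellation structure used in Lemma~\ref{le:second_order_trotter}, which underwrites Theorem~\ref{th:layden_timedep}'s tight bound, once the steps are non-uniform. The symmetric split $U_1 U_2 U_1$ on each interval becomes step-dependent, and the Kubo-formula manipulations in that Lemma introduce commutators whose bounds depend on $\max_k\delta t_k/\min_k\delta t_k$. Closing the argument without degrading to a weaker power of $\delta t$ will require a uniform bound on this ratio for the optimal schedule, which I expect is precisely why the statement carries little-$o$ rather than big-$O$ qualifiers on the two extensions.
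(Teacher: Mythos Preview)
Your proposal has a genuine gap in the central mechanism. You correctly identify that the leading per-step Trotter error is proportional to $[H_1,H_2]$ and cite \cite{wurtz2022counterdiabaticity} for its counter-diabatic character, but you then recast the variable-step choice as ``minimize the cumulative operator-norm error subject to \ldots''. That is not what the paper does, and it is not what the theorem is about: the statement concerns the \emph{state infidelity} $\mathcal I(T)$, and the paper's argument exploits the fact that the $[H_1,H_2]$ term \emph{helps} the fidelity by acting as a counter-diabatic drive that suppresses diabatic transitions. The paper matches the BCH cross term $-\gamma_j\beta_j\,\delta t^2/2\,[H_1,H_2]$ to the counter-diabatic Magnus term $\tau(\bar s+\dot\gamma\,\bar\alpha)[H_1,H_2]$; solving this matching yields an effective step $\tau\neq\delta t$, and a perturbation $\tau=\delta t\pm\epsilon$ changes the counter-diabatic amplitude by $\mathcal O(\epsilon)$ while the Trotter error moves only by $\mathcal O(\epsilon^2)$. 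Requiring the latter to be subleading to the $\mathcal O(\delta t^2)$ Trotter scale gives $\epsilon\in o(\sqrt{\delta t})$, and evaluating this at the BCH breakdown scale and at the continuous-adiabatic plateau yields the $o(\sqrt{|H|})$ and $o(T^{-1})$ extensions, respectively.

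Consequently, your two regime-specific mechanisms are off target. For large $\delta t$ you propose $\delta t_k\propto\|H(t_k)\|^{-1}$ to satisfy convergence pointwise; the paper instead takes $\delta t<\tau$ so the counter-diabatic amplitude \emph{increases}, making the effective evolution more adiabatic and delaying the BCH/Magnus breakdown --- nothing to do with the variation of $\|H(t)\|$. For small $\delta t$ you aim to cancel the endpoint contribution $(C_1\delta t)^2$, but the relevant small-$\delta t$ floor is the continuous-time diabatic error $\mathcal O(T^{-2})$, not the Theorem~\ref{th:layden_timedep} endpoint term; the paper takes $\delta t>\tau$ so the counter-diabatic amplitude remains large enough to cancel residual diabatic excitation, pushing the plateau down. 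Your worry about preserving Lemma~\ref{le:second_order_trotter}'s cancellation under non-uniform steps, and the KKT machinery on the simplex of $\{\delta t_k\}$, are not needed once the objective is corrected from operator norm to fidelity. (A minor inconsistency: your per-step amplitude is $u(1-u)\,\delta t_k^2$ in paragraph one but $u'\,\delta t_k^2/2$ in paragraph two; the former is what enters the BCH matching.)
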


\begin{proof}
We initially follow~\cite{wurtz2022counterdiabaticity}.

An adiabatic Hamiltonian can be supplemented, in the short time limit when the Magnus expansion is convergent, with a leading-order counter-diabatic term, producing the following counter-diabatic Hamiltonian:
\begin{equation}
H_{CD} = \gamma(t) H_1 + \beta(t) H_2 + i(s + \dot \gamma(t) \alpha(\gamma)) [H_1, H_2],
\end{equation}
where \(\alpha\) parametrizes an adiabatic gauge potential \(A(\alpha)\)~\cite{wurtz2022counterdiabaticity} with value chosen to minimize
\begin{equation}
  \|\partial_\lambda H_A - i[A(\alpha), H_A]\|_{\mathcal P},
\end{equation}
\(\lambda\) is the time, \(s\) is an added auxiliary counter-diabatic field, and \(\mathcal P\) is a positive semidefinite projector~\cite{Kolodrubetz17}.

We match the leading order Trotter (BCH) error with the counter-diabatic Magnus term in \(\mathcal T e^{-i \int^t_0 H_{CD}(t')\mbox d t'} \) for time step \(j\) corresponding to time \(t'\):
\begin{eqnarray}
&&\gamma_j H_1 \delta t {+} \beta_j H_2 \delta t {-} i \frac{\gamma_j \beta_j \delta t^2}{2} [H_1, H_2]\\
&=& \tau \bar \gamma(t') H_1 {+} \tau \bar \beta(t') H_2 + i \tau (\bar s + \dot \gamma(t') \bar \alpha) [H_1, H_2], \nonumber
\end{eqnarray}
where \(\delta t \equiv T/r\) and
\begin{equation}
    \alpha(t')  = \frac{\|[H, \partial_{t'} H] \|_{\mathcal P}}{\|[[H, \partial_{t'} H], H]\|_{\mathcal P}}.
\end{equation}

We set the respective terms on the left-side equal to the right-side:
\begin{eqnarray}
\label{eq:gammaj}
\gamma_j \delta t &=& \tau \bar \gamma(t'),\\
\label{eq:betaj}
\beta_j \delta t &=& \tau \bar \beta(t'),\\
\label{eq:crossterm}
-\gamma_j \beta_j \delta t^2 /2 &=& \tau (\bar s + \dot \gamma(t') \bar \alpha(t')).
\end{eqnarray}
The overbar indicates a time-averaged variable over the interval \([0, \tau]\).

The unique solution for \(\tau\) from this system of equation is
\begin{equation}
    \tau = -2\frac{\bar s + \dot \gamma(t')\bar \alpha(t')}{\bar \gamma \bar \beta}.
\end{equation}
This solution exists as long as \(\bar s\) is not too large~\cite{wurtz2022counterdiabaticity}.

Using this, for a given \(\delta t\), the auxiliary field \(\bar s(t')\) can be found from Eq.~\ref{eq:crossterm},
\begin{equation}
\label{eq:s}
    \bar s(t') = -\bar \gamma(t') \bar \beta(t') \frac{\delta t^2}{2 \tau} - \dot \gamma \bar \alpha(t'),
\end{equation}
which can be used to find \(\gamma_i\) and \(\beta_i\) through Eqs~\ref{eq:gammaj}-\ref{eq:betaj}:
\begin{equation}
    \gamma_j = \frac{\bar \gamma(t') \delta t}{\tau},
\end{equation}
and
\begin{equation}
    \beta_j = \frac{\bar \beta(t') \delta t}{\tau}.
\end{equation}
So the only free variable we have is \(\delta t\).

Setting \(\delta t = \tau\) makes \(\gamma_j = \gamma(t')\), \(\beta_j = \beta(t')\), and corresponds to regular Trotterization. Therefore, by choosing an optimum \(\delta t\), we are guaranteed to produce a Trotter error, and thereby a bound on fidelity error, at most as large as regular Trotterization.

(This optimal time step \(\delta t \ne \tau\) in general, hence the description of this process as a ``variable-time-step Trotterization''.)

Eq.~\ref{eq:s} shows that setting the time step to a more optimal value will further cancel fidelity decline by either increasing or decreasing the magnitude of the counter-diabatic auxiliary term, which corresponds to decreasing or increasing \(\delta t\) w.r.t.~\(\tau\), respectively, since the term with \(\delta t^2\) is negative while the term with \(\bar \alpha(t')\) is positive (because \(\bar \alpha(t') < 0\)).

For fixed total time, when \(\tau \rightarrow \infty\), the magnitude of the counter-diabatic term should increase to prevent diabatic excitation, and so \(\delta t < \tau\). On the other hand, when \(\tau \rightarrow 0\), the magnitude of the counter-diabatic term should decrease since diabatic excitation decreases, and so \(\delta t > \tau\).

The former regime corresponds to the breakdown of the BCH and Magnus expansion, which occurs when \(\tau \in \Omega (H)\). Decreasing \(\delta t\) to be less than \(\tau\) will therefore stave off when this regime begins.

On the other hand, the latter regime corresponds to the infidelity limit of continuous adiabatic annealing due to the finite total time, which occurs when \(\tau \in \mathcal O((\Delta^2 T)^{-1})\) for gap \(\Delta\) and total time \(T\). Increasing \(\delta t\) to be greater than \(\tau\) will therefore stave off and decrease this regime.

Since \(\bar \gamma(t'), \bar \beta(t'), \dot \gamma(t'), -\bar \alpha(t') > 0\), it follows that there must exist \(\epsilon >0\) such that \(\tau = \delta t - \epsilon\) (\(\tau = \delta t + \epsilon\)) and \(s > 0\) (\(s < 0\)). This will make the counter-diabatic Magnus term, \(i \tau (\bar s {+} \dot \gamma(t') \bar \alpha) [H_1, H_2]\), larger (smaller) by \(\mathcal O(\epsilon)\) since \(\bar s \in \mathcal O (\epsilon)\), while changing the magnitude of the first-order Trotter error by \(\mathcal O(\epsilon^2)\) since it is \(\mathcal O(\gamma \beta)\) and \(\gamma,\,\beta \in \mathcal O(\epsilon)\). This means that the effective evolution will be in the BCH/Magnus convergence regime (or more adiabatic) and will not appreciably effect the Trotter error for small enough \(\epsilon\).

Since the first-order Trotter error is \(\mathcal O(\delta t^2)\), it follows that we want \(\epsilon \in o(\sqrt{\delta t})\). The breakdown of the BCH/Magnus expansion corresponds to \(\epsilon \in o(\sqrt{|H|})\) and the breakdown of the continuous adiabatic regime occurs when \(\epsilon \in o(T^{-1})\).

A more adiabatic evolution by \(o(\epsilon)\) with negligible Trotter error change means that \(|\expval{\psi_f}{(U(t) - U_1)}{\psi_i}|\) will decrease by \(o(\epsilon)\).

This has reduced the first-order Trotter error \emph{for each time step}. We can then consider all the time steps together, apply the same technique as in Theorem~\ref{th:layden_timedep}, to obtain a tighter bound by \(o(\epsilon)\).\qed
\end{proof}

Instances of this scaling for this larger domain of \(\delta t\) can be found in Fig.~\ref{fig:Ising}-\ref{fig:3_App_E} in the subsequent Appendix.

We further note that this counter-diabatic effect arises naturally to improve upon the \(\mathcal O(T^{-2})\) bound of Theorem~\ref{th:adiabatic_trotter_error} as well, in the regime corresponding to intermediate values of $\delta t$. As seen in the intermediate region of Fig.~\ref{fig:scaling}, $\mathcal{I}(T)$ can be significantly smaller than this asymptotic bound suggested by Theorem~\ref{th:adiabatic_trotter_error} because the finite first-order Trotter error cancels or augments diabatic error according to Theorem~\ref{th:bound} that would not otherwise occur in a continuous evolution. This diabatic error's effect is necessarily dependent on the unitary order of the first-order Trotterization (\textit{i.e.,} \(U_1 U_2\), or vice-versa).

\clearpage

Here we present some numerical results demonstrating the increased regime of applicability in \(\delta t\) that Theorem~\ref{th:bound} provides compared to Theorem~\ref{th:layden_timedep}, by allowing for variable Trotter time step.

The increase of the infidelity scaling in Th.~\ref{th:layden_timedep} to the small and large time step regime (corresponding to the green and red regions in Fig.~\ref{fig:scaling}, respectively) is shown here in Fig.~\hyperref[fig:Ising]{\ref*{fig:Ising}(a)} and Fig.~\hyperref[fig:3_App_E]{\ref*{fig:3_App_E}(a)} for an Ising system and a transverse-field Ising model (TFIM) system, respectively. They serve as demonstrations of an application of Theorem~\ref{th:bound}. These systems still have the same \(\delta t\), as in their ``normal Trotter'' implementations, but their linear ramps, \(u(t)\), are no longer constrained to be in \([0,1]\). As a result, the effective time step is no longer proportional to \(\delta t\).

The insets of the figures in (b) show the difference between adjacent time step \(\gamma\)s (a time step-independent quantity for a linear ramp) w.r.t.~\(\delta t\) for ordinary Trotterization compared to under the variable time step Trotterization plotted in the main figures. This quantity is proportional to the effective time step. It shows how the variable effective time step is relatively shorter for long time steps (red region) and relatively longer for short time steps (green region), which is why the scaling is able to extend into these regions (see proof of Th.~\ref{th:bound}).

\begin{figure}[H]
\centering
\includegraphics[scale=0.65]{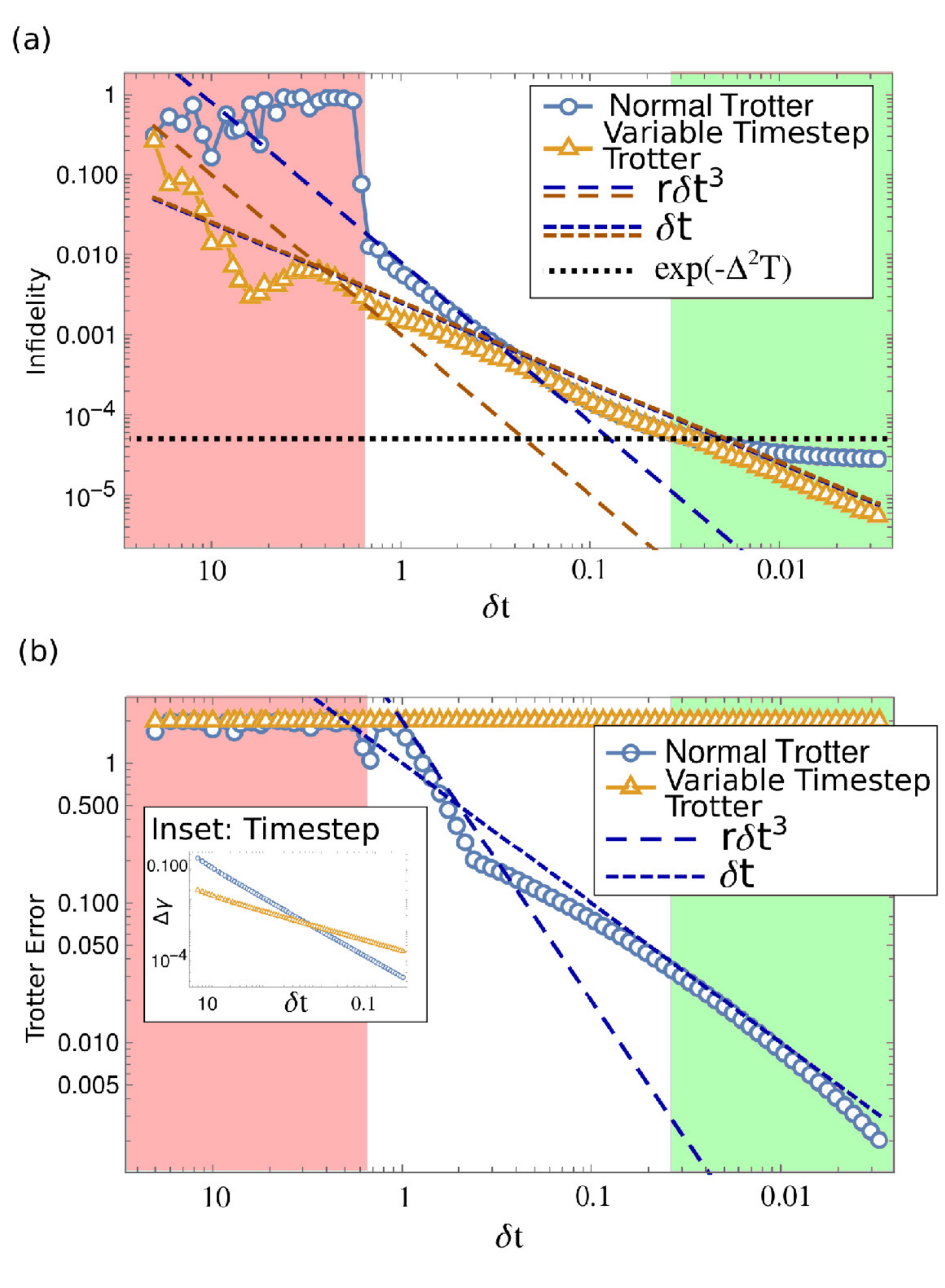}
\caption{Ising \(n=8\) system a) infidelity and b) Trotter error for an initial ground state with total time \(T=100\) evolved to \(0.9\) of the full ramp. Note that the infidelity of the state is w.r.t.~the same initial state evolved under the un-Trotterized continuous unitary, not the ground state of the Hamiltonian at the final \(t=T\).}
\label{fig:Ising}
\end{figure}

Part (b) of these figures plot the corresponding operator norm error and demonstrate how the counter-diabatic effect of variable Trotter time steps specifically targets only one energy transition, and so does not in general improve operator norm error too; counter-diabatic terms for one energy level frequently serve as pro-diabatic terms of others.

\begin{figure}[hbt]
\includegraphics[scale=0.6]{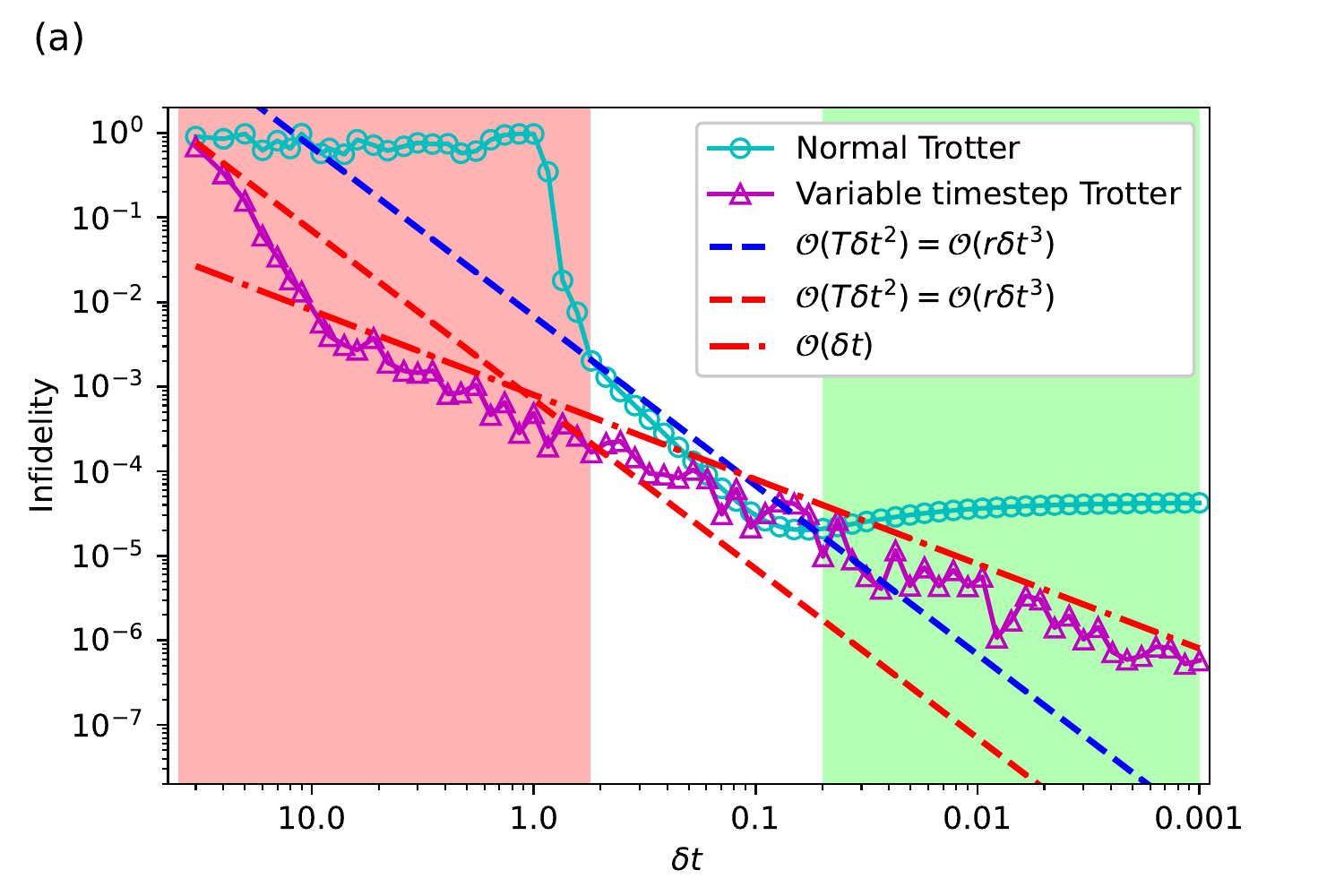}
\includegraphics[scale=0.6]{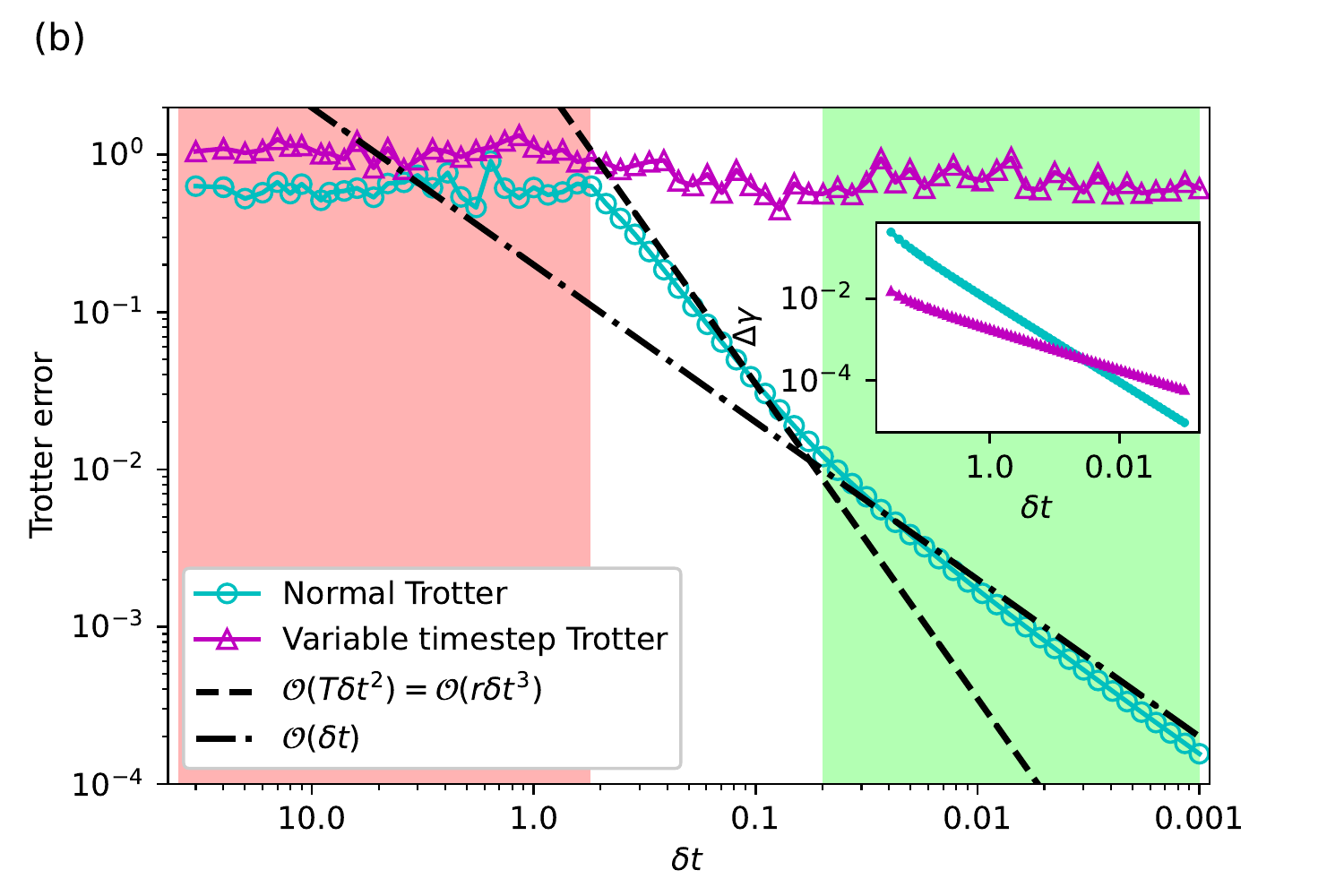}
\caption{(a) State infidelity and (b) Trotter error for the 6-qubit TFIM ground state w.r.t. the Trotterized states obtained with normal Trotter and variable time step Trotter. Both TFIM ground state and Trotterized states were obtained with a total time $T=100$ and evolved to $0.9$ of the full ramp. The TFIM system Hamiltonian is $H_2=\sum_i(Z_iZ_{i+1}+0.8Z_i+0.9X_i)$ and the mixer Hamiltonian is $H_1=\sum_i X_i$.}
\label{fig:3_App_E}
\end{figure}

\clearpage

\section{Correspondence between QAOA and Trotterized annealing}
\label{app:QAOA}
\renewcommand{\theequation}{D\arabic{equation}}
\renewcommand{\thefigure}{D\arabic{figure}}
\setcounter{figure}{0}

Here we provide details on the methods and parameters used to produce the QAOA curves, their corresponding anneal curves, and their final Trotterizations of the \(3\)-regular MAXCUT problem shown in Figure.~\ref{fig:QAOA_anneal}. We also show these intermediate curves and compare them (see Figs.~\ref{fig:optimal_anneal_curves_3reg}-\ref{fig:anneal_trotter_diff_3reg}). We also show similar results for a \(2\)-regular MAXCUT instance below in Fig.~\ref{fig:QAOA_anneal_2reg}.

The intention of illustrating these results is to show how at large enough depth, QAOA can become a constant time step digitization of an underlying set of quantum annealing curves with similar total integrated time. This is a limit that differs from the vanishing time step digitization of the adiabatic limit that has traditionally been considered~\cite{Farhi14}.

\subsection{QAOA Curve Generation}

To begin, \(P=2\) and \(P=3\) optimal QAOA angles are found by optimizing starting with a seed that is an increasing linear ramp with a particular slope and intersection. Specifically, for \(2\)-regular MAXCUT, the initial seed   is \(\gamma_1=0\) and \(\gamma_2\approx 1\). Similarly, for \(3\)-regular MAXCUT, the initial seed is \(\gamma_1=\pi/16\) and   \(\gamma_2=3\pi/16\). Optimal sets of angles for \(P+1\) are then found by ``bootstrapping,'' \textit{i.e.,} seeding the optimization with interpolated optimal angles found for \(P\) \cite{Pichler18, Mbeng19, Mbeng19_2}. This generates a set of optimal angles that approach a universal curve, as shown in Fig.~\hyperref[fig:QAOA_anneal]{\ref*{fig:QAOA_anneal}(a)}.

To generate the initial family of \(2\)-regular QAOA curves, a fixed pairwise total time \(\gamma_i+\beta_i=1.1\) was imposed. This was made to address the fact that the \(2\)-regular ring of disagrees has a continuum of reachable optimal angles for \(P \ge N/2\), which was the \(P\)-regime we explored with \(N=8\) and \(P\in \{8,\,\ldots,\l,32\}\). This constraint was not necessary for the \(3\)-regular example shown in Figure~\ref{fig:QAOA_anneal}.

\subsection{Optimal Control Anneal Curves}

In this section, we provide additional detail on the procedure used to generate the optimal anneal curves plotted in the main text in Fig.~\hyperref[fig:QAOA_anneal]{\ref*{fig:QAOA_anneal}(b)}. Each anneal curve $u(t)$, $t\in[0,T]$, is sought by solving the quantum optimal control problem
\begin{equation}
\min_{u(t)} J[T,u(t)].
\label{Eq:QOC}
\end{equation}
with the objective functional
\begin{equation}
  J[T,u(t)] = \expval{\Psi(T)}{H_2}{\Psi(T)},
  \label{Eq:controlobjfun}
\end{equation}
where $|\Psi(T)\rangle$ denotes the solution to the Schr\"odinger equation at time $T$ with Hamiltonian
\begin{equation}
H[u(t)] = H_1+(H_2-H_1)u(t).
\end{equation}
It can be shown that optimal control solutions of Eq. (\ref{Eq:QOC}) must satisfy the corresponding Euler-Lagrange equations:
 \begin{align}
 	|\dot{\Psi}{(t)}\rangle &= -i H[u(t)] ~|\Psi{(t)}\rangle, \mbox { with } |\Psi{(0)}\rangle = |\Psi_0\rangle, \label{Eq:TDSEinQOC}
 \end{align}
 \begin{align}
 	|\dot{\lambda}{(t)}\rangle &= -i H[u(t)] ~|\lambda{(t)}\rangle, \mbox { with }|\lambda{(T)}\rangle = H_2|\Psi(T)\rangle,
 \end{align}
 \begin{align}
\frac{\delta J}{\delta u(t)} = 2\big(\mbox{Im}\left[\langle\lambda(t)|(H_2-H_1)|\Psi(t)\rangle\right]\big)=0.
 \end{align}
 where $|\lambda(t)\rangle$ is a Lagrange multiplier introduced to enforce that the dynamics of $|\Psi(t)\rangle$ obey the  Schr\"odinger Eq. (\ref{Eq:TDSEinQOC}) \cite{stengel_optimal_1994,PhysRevA.37.4950}.

 In order to find optimal solutions $u(t)$, a variety of methods can be employed \cite{khaneja2005optimal,reich2012monotonically,ho2010accelerated}. In this work, we consider local optimization methods, which are fed an initial seed curve, and proceed to seek a solution $u(t)$ by iteratively updating $u(t)$ in the direction of the negative gradient until $\frac{\delta J}{\delta u(t)}=0$, indicating a local minimum is reached.

Here, we carry out this procedure for several different seed curves. In particular, we utilize the optimal QAOA angles obtained for a range of different $p$ values in order to generate corresponding seed curves for $u(t)$ associated with each value of $p$. These seed curves are produced by performing a cubic spline interpolation between the QAOA angles (shown in Fig.~\ref{fig:optimal_anneal_curves_3reg}) to generate curves that are discretized into a series of small time steps \(\Delta t \ll 1\) with total integrated time equal to the sum of the QAOA angles, \textit{i.e.,} $T = \sum_i(\gamma_i + \beta_i)$. The seed curves are then further modified by explicitly including an initial ``bang'' of the problem Hamiltonian \(H_2\) and a final ``bang'' of the mixer Hamiltonian \(H_1\) of duration a single time step \(\Delta t\), such that $u(0) = 1$ and $u(T)=0$. These initial and terminal values of $u(t)$ are fixed during the subsequent optimization of the anneal curves in order to preferentially find the closest optimal anneal curve with such initial and final bangs, as per~\cite{Brady21_2}, whose findings indicate that the optimal anneal curves should obey this bang-anneal-bang structure.

 After these seed curves have been generated, they are each subsequently optimized in order to obtain optimal annealing curves minimizing Eq. (\ref{Eq:controlobjfun}). At each iteration of the optimization algorithm, the state $|\Psi(t)\rangle$ is evolved via forward simulation, by approximating the full quantum time evolution operator $U(T)$ as a time-ordered product of propagators over $N$ small time steps $\Delta t$, \textit{i.e.,} as
\begin{equation}
  U(T) \approx U_N U_{N-1} \cdots U_1
\end{equation}
where $H(t)$ is approximated to be time-independent over each small time step such that
\begin{equation}
  U_k = e^{-i(H_1 + (H_2-H_1)u_k)\Delta t},
\end{equation}
where in the above, we have introduced the abbreviated notation \(u_k \equiv u(t_k)\).
Following the forward simulation of $|\Psi(t)\rangle$, the dynamics of $|\lambda(t)\rangle$ are computed in the same manner but via backwards propagation from the terminal condition $|\lambda(T)\rangle=H_2|\Psi(T)\rangle$. The gradients $\frac{\delta J}{\delta u_k}$ are then evaluated according to
\begin{equation}
  \frac{\delta J}{\delta u_k} =  2\Delta t \big( \mbox{Im}\left[\langle\lambda_k|(H_2-H_1)|\Psi_k\rangle\right]\big)
\end{equation}
for each value of $k$, where \(\ket{\Psi_k} = U_k \cdots U_1 \ket{\Psi_0}\) and $\ket{\lambda_k} = U_{k-1}^\dagger \cdots U_N^\dagger \ket{\lambda(T)}$, where \(\ket{\Psi_0}\) is taken to be the ground state of \(H_1\). These gradients are then passed to the gradient algorithm L-BFGS-B, which we utilize to perform the optimization with \(u(t)\) constrained to lie between \(0\) and \(1\). The solutions obtained using L-BFGS-B correspond to the results plotted in Fig.~\hyperref[fig:QAOA_anneal]{\ref*{fig:QAOA_anneal}(b)}.

A \(3\)-regular MAXCUT example is shown in Fig.~\ref{fig:QAOA_anneal} in the main text. More detailed data that underlie this figure are shown in Figures~\ref{fig:optimal_anneal_curves_3reg}-\ref{fig:anneal_trotter_diff_3reg}.

\begin{figure}[h]
    \includegraphics[scale=0.5]{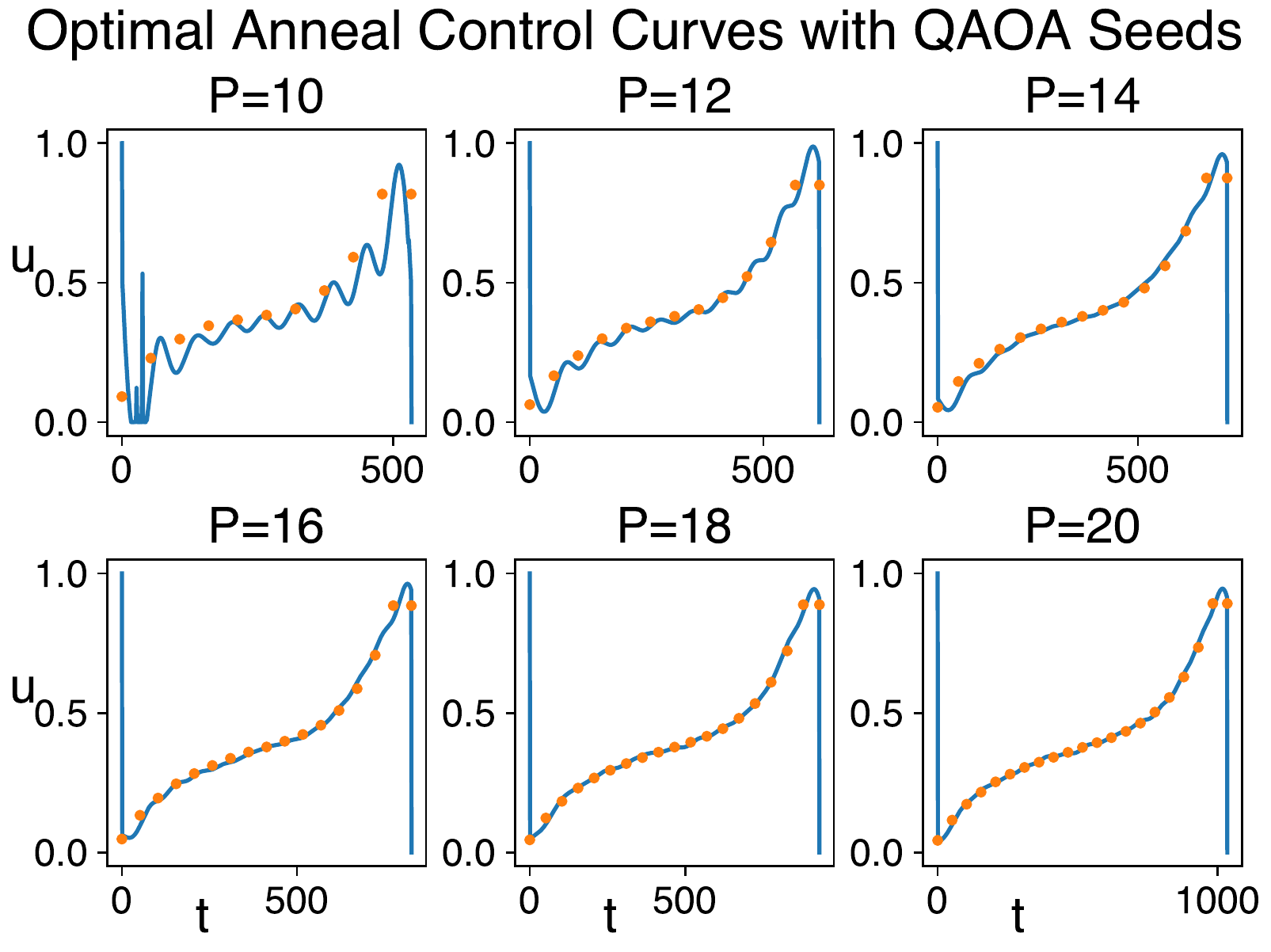}
\caption{Optimal anneal control curves with QAOA seeds (last seed repeated) for Fig.~\ref{fig:QAOA_anneal}.}
  \label{fig:optimal_anneal_curves_3reg}
\end{figure}

\begin{figure}[h]
    \includegraphics[scale=0.5]{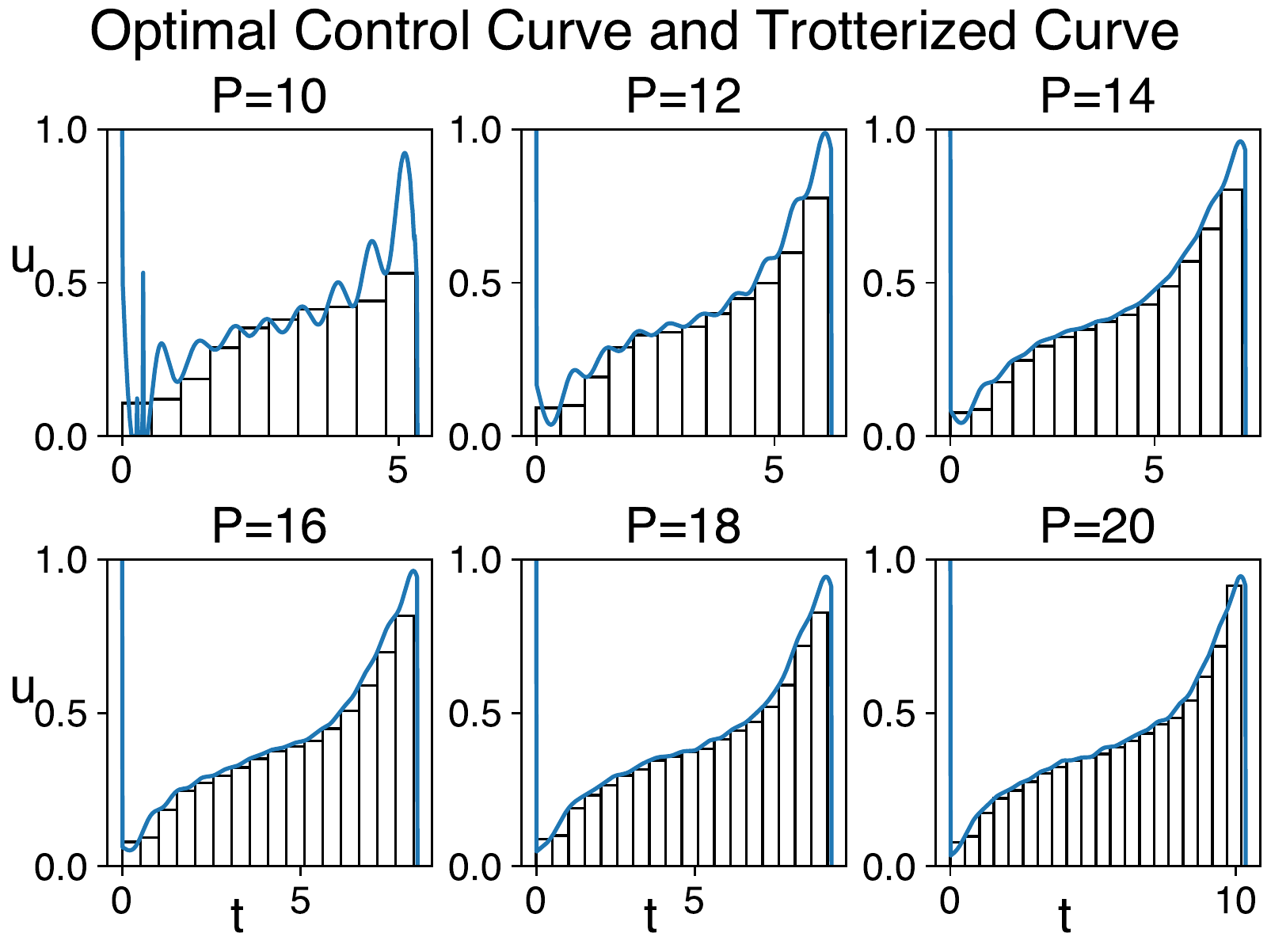}
\caption{Comparison between optimal anneal control curves and their Trotterization for Fig.~\ref{fig:QAOA_anneal}.}
  \label{fig:anneal_trotter_comp_3reg}
\end{figure}

\begin{figure}[h]
    \includegraphics[scale=0.5]{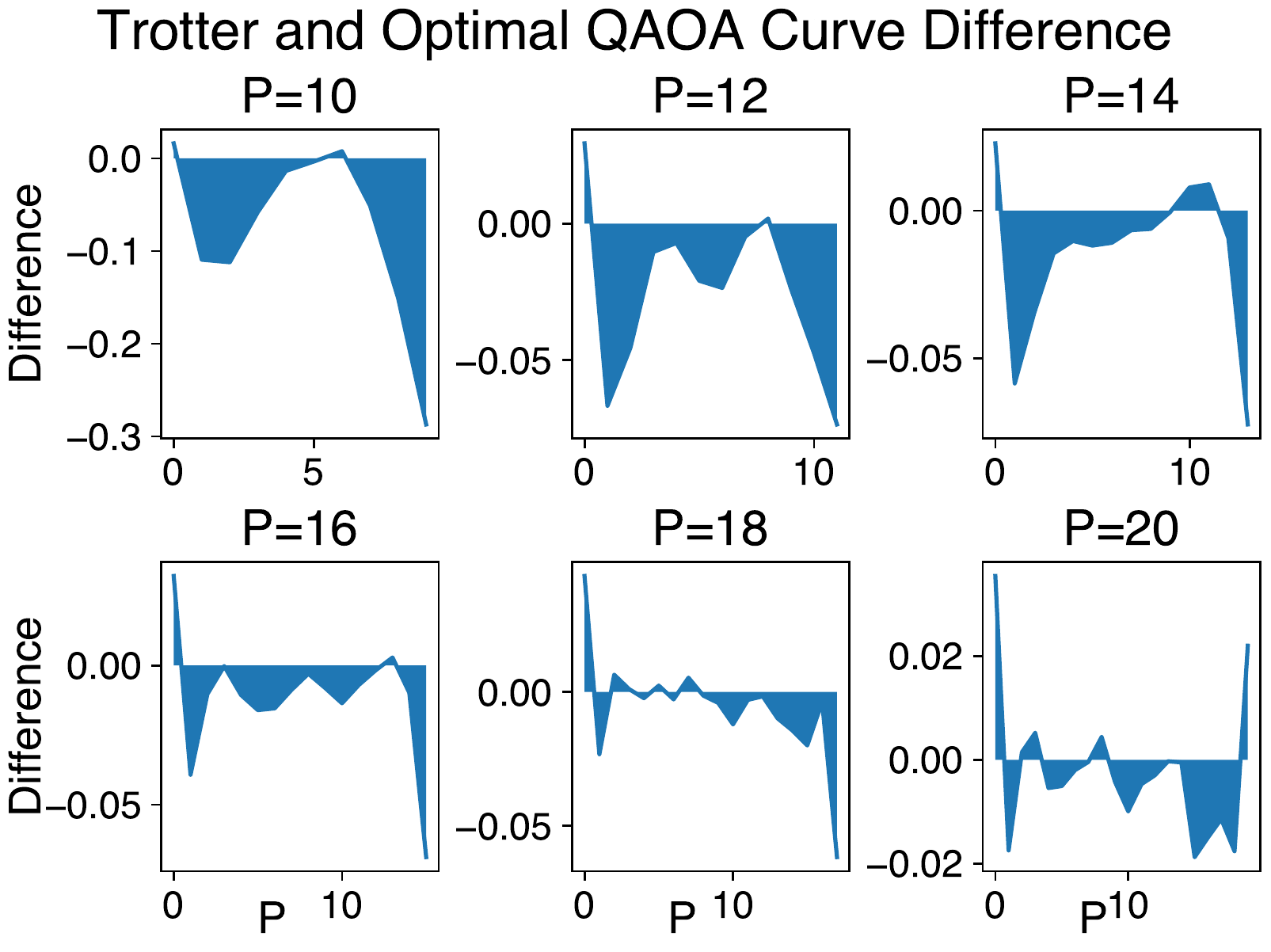}
\caption{Difference between optimal anneal control curves and their Trotterization for Fig.~\ref{fig:QAOA_anneal}.}
  \label{fig:anneal_trotter_diff_3reg}
\end{figure}

We also present a \(2\)-regular MAXCUT example below in Fig.~\ref{fig:QAOA_anneal_2reg}.

\begin{figure}[h]
\includegraphics[scale=0.6]{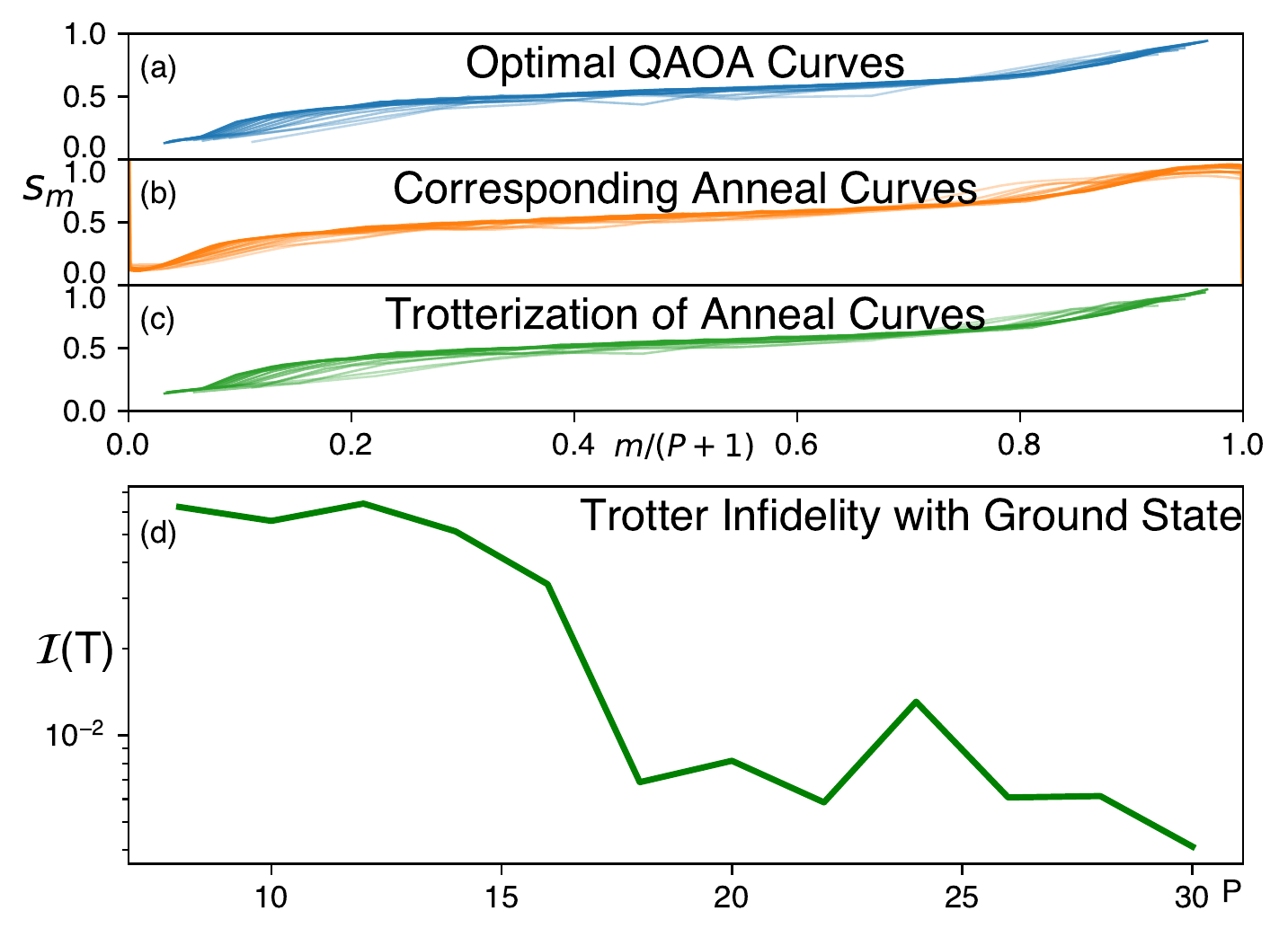}
\caption{A problem Hamiltonian of \(N{=}8\) \(2\)-regular periodic Ising is examined (MAXCUT). Shown are an optimal set of (a) QAOA curves found for \(P=\{8,\,\dots,\, 30\}\) by bootstrapped seeding~\cite{Mbeng19,Mbeng19_2} from lower \(P\) (darker curves correspond to larger \(P\)), (b) the optimal anneal curves found by seeding from these corresponding QAOA curves, (c) the Trotterization of these anneal curves, and (d) the infidelity with the problem Hamiltonian ground state after evolution under the QAOA, anneal, and Trotter curves. See this Appendix for more details about the numerics.}
\label{fig:QAOA_anneal_2reg}
\end{figure}

More detailed data that underlie Figure~\ref{fig:QAOA_anneal_2reg} are shown in Figures~\ref{fig:optimal_anneal_curves_2reg}-\ref{fig:anneal_trotter_diff_2reg}.

\begin{figure}[h]
    \includegraphics[scale=0.5]{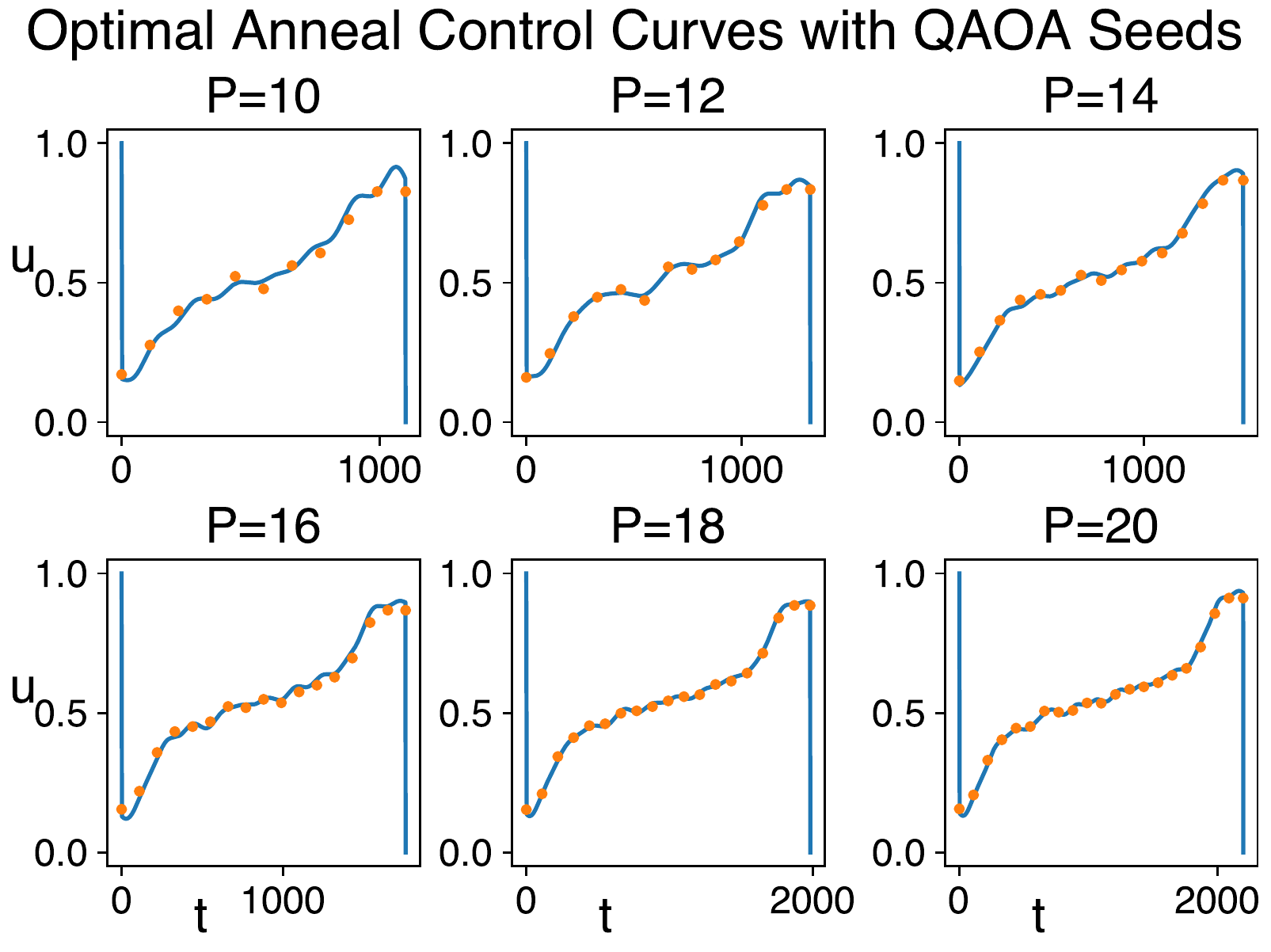}
\caption{Optimal anneal control curves with QAOA seeds (last seed repeated) for Fig.~\ref{fig:QAOA_anneal_2reg}.}
  \label{fig:optimal_anneal_curves_2reg}
\end{figure}

\begin{figure}[ht]
    \includegraphics[scale=0.5]{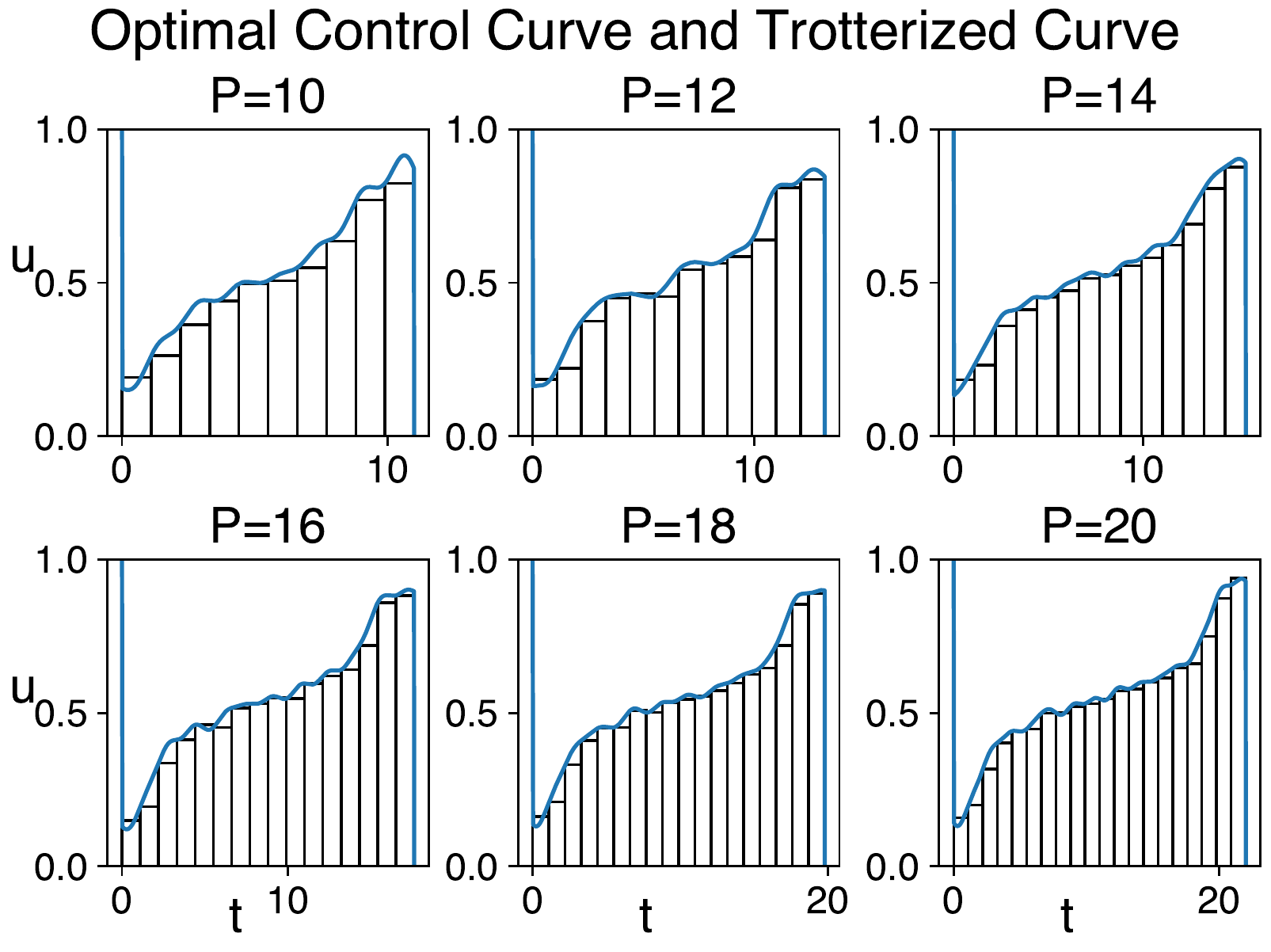}
\caption{Comparison between optimal anneal control curves and their Trotterization for Fig.~\ref{fig:QAOA_anneal_2reg}.}
  \label{fig:anneal_trotter_comp_2reg}
\end{figure}

\begin{figure}[ht]
    \includegraphics[scale=0.5]{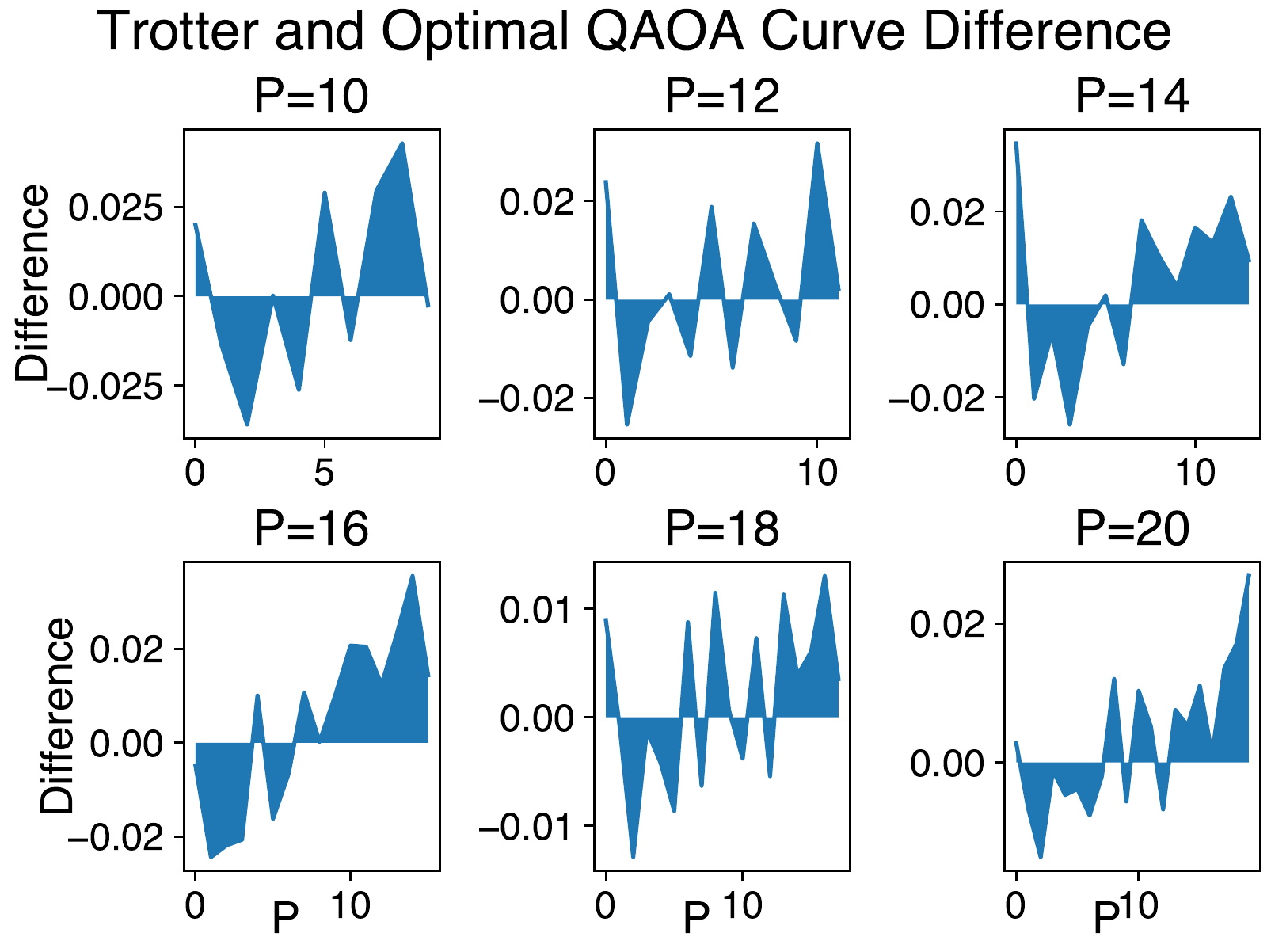}
\caption{Difference between optimal anneal control curves and their Trotterization for Fig.~\ref{fig:QAOA_anneal_2reg}.}
  \label{fig:anneal_trotter_diff_2reg}
\end{figure}
\clearpage
\subsection{Trotterization of Anneal Curves}

The optimal anneal curves are subsequently Trotterized with a time step equal to the average integrated time of the corresponding QAOA bangs in two manners. First, the initial and final time steps, where a bang is enforced, obtain values equal to the integrated area under the curve. Second, the remaining middle time steps obtain values equal to the continuous curve at the beginning of their respective ``bins''. This Trotterization is visualized in Fig.~\ref{fig:anneal_trotter_comp_3reg} and Fig.~\ref{fig:anneal_trotter_comp_2reg}. The corresponding Trotterized curves (see Fig.~\hyperref[fig:QAOA_anneal]{\ref*{fig:QAOA_anneal}(c)} and Fig.~\hyperref[fig:QAOA_anneal_2reg]{\ref*{fig:QAOA_anneal_2reg}(c)}) produce final ground state infidelity that is decreasing with \(T\) (or, equivalently, with \(P\)) as shown in Fig.~\hyperref[fig:QAOA_anneal]{\ref*{fig:QAOA_anneal}(d)} and Fig.~\hyperref[fig:QAOA_anneal_2reg]{\ref*{fig:QAOA_anneal_2reg}(d)}. This is contrary to traditional Trotter error scaling which indicates infidelity should increase with \(T\) as \(\mathcal O (T \delta t^2)\) for such a fixed time step Trotterization~\cite{Layden21}.

\section{Numerical examples that corroborate the predicted scaling}
\label{app:numerics}
\renewcommand{\theequation}{E\arabic{equation}}
\renewcommand{\thefigure}{E\arabic{figure}}
\setcounter{figure}{0}

The figures in the main text present numerical results for a simple two-level system.
Here we present similar results extended to the Ising model and transverse field Ising model, as well as a molecular Hamiltonian.
Our intent here is to demonstrate the general applicability of Theorems~\ref{th:layden_timedep}-\ref{th:bound} and the results presented in the main text.

Fig.~\ref{fig:1_App_E} is similar to Fig.~\ref{fig:scaling} and Fig.~\ref{fig:partialramps}, except that it plots infidelity for a TFIM system $(n=6)$ at various fractions of the full ramp and both operator ordering in the Trotterization. It demonstrates how for an evolution over \(90\%\) of the full ramp, the infidelity for the TFIM system scales according to Th.~\ref{th:layden_timedep}, and then transitions to scaling according to Th.~\ref{th:adiabatic_trotter_error} when the evolution control curve traverses the full ramp from \(H_1\) to \(H_2\).

\begin{figure}[hbt]
\centering
\includegraphics[scale=0.6]{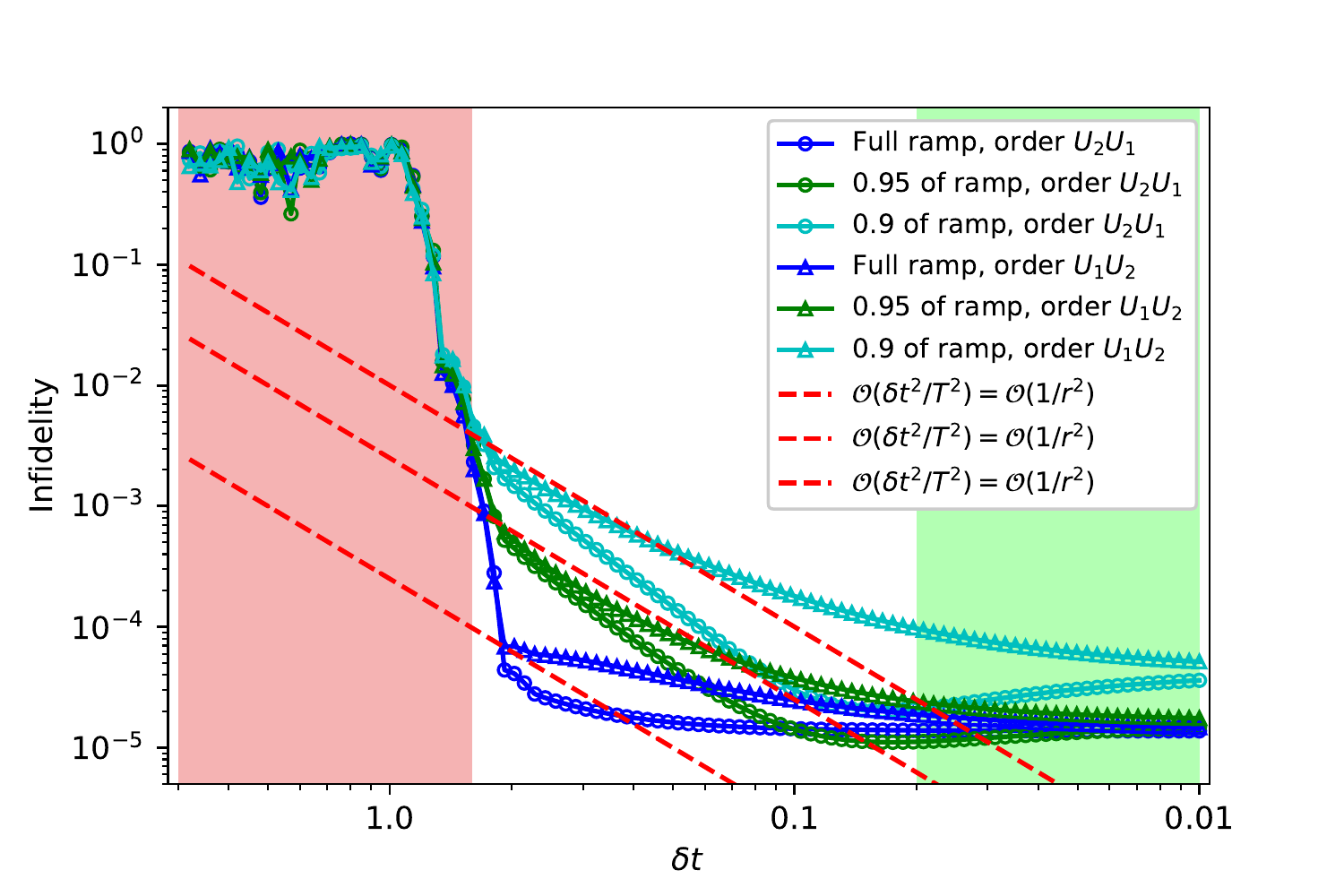}
\caption{State infidelity between the Trotterized ground state and the target ground state of a TFIM ($n=6$), with a total time $T=100$ over different fractions of the full ramp. The target state is defined to be the ground state of the TFIM Hamiltonian evaluated at the fraction of the ramp evolved to. The TFIM system Hamiltonian is $H_2=\sum_i(Z_iZ_{i+1}+0.8Z_i+0.9X_i)$ and the mixer Hamiltonian is $H_1=\sum_i X_i$.}
\label{fig:1_App_E}
\end{figure}

Fig.~\ref{fig:2_App_E} is similar to Fig.~\hyperref[fig:scaling]{\ref*{fig:scaling}(b)}, except that it plots infidelity for the same TFIM system of Fig.~\ref{fig:1_App_E}. It exhibits the same \(\mathcal O(T^{-2})\) scaling expected from Th.~\ref{th:adiabatic_trotter_error}.

\begin{figure}[hbt]
\centering
\includegraphics[scale=0.6]{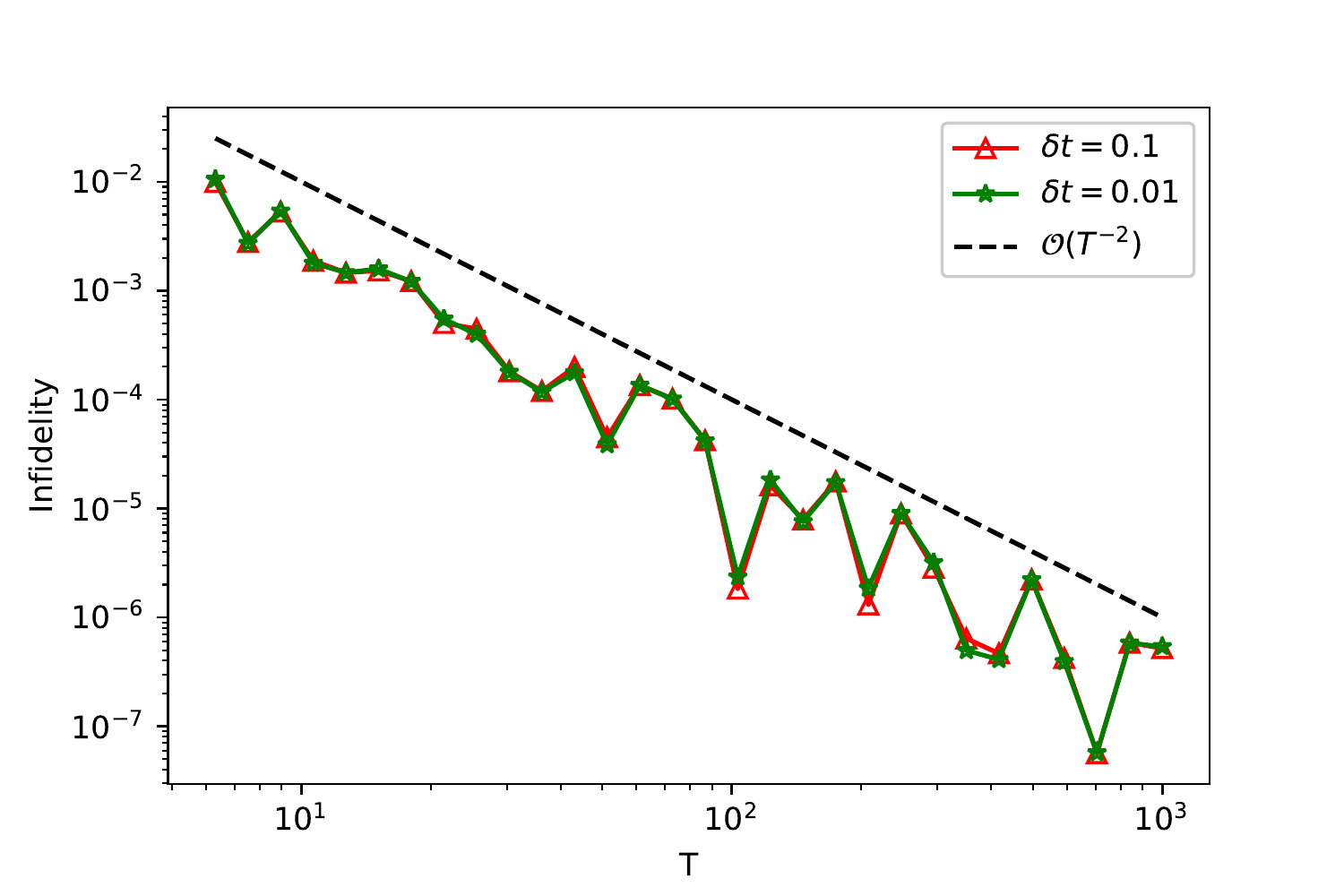}
\caption{State infidelity between the Trotterized ground state and the TFIM ground state with different time steps over the full ramp for different total times $T$. The 6-qubit TFIM system Hamiltonian is $H_2=\sum_i(Z_iZ_{i+1}+0.8Z_i+0.9X_i)$ and the mixer Hamiltonian is $H_1=\sum_i X_i$.}
\label{fig:2_App_E}
\end{figure}

Fig.~\hyperref[fig:5_App_E]{\ref*{fig:5_App_E}(a)} is similar to Fig.~\ref{fig:scaling}, except that it plots infidelity for the ground state of a final Hamiltonian corresponding to an H$_2$ molecule's bond length, using a minimal basis in four different fermionic encodings: Jordan-Wigner (J-W), Bravyi-Kitaev (B-K), Checksum (C), and Up-Down (UD).
\begin{equation}\label{eq:J-K_Hamiltonian}
\begin{aligned}
H_{\text{J-W}}&=-0.106 I+0.045 (X_0 Y_1 Y_2 X_3- X_0 X_1 Y_2 Y_3+\\
&Y_0 X_1 X_2 Y_3-Y_0 Y_1 X_2 X_3) +0.17 (Z_0+Z_1) +\\
&0.168 Z_0 Z_1 + 0.12 (Z_0 Z_2+Z_1 Z_3) +\\
& 0.166 (Z_0 Z_3 + Z_1 Z_2)  -0.22 (Z_2+Z_3) + 0.174 Z_2 Z_3
\end{aligned}
\end{equation}
\begin{equation}\label{eq:B-K_Hamiltonian}
\begin{aligned}
H_{\text{B-K}}&= -0.106 I +0.045( X_0 Z_1 X_2 + X_0 Z_1 X_2 Z_3 +\\
& Y_0 Z_1 Y_2 + Y_0 Z_1 Y_2 Z_3) -0.22 (Z_1 Z_2 Z_3+Z_2)+\\
&0.17 (Z_0 + Z_0 Z_1) + 0.166 (Z_0 Z_1 Z_2 + Z_0 Z_1 Z_2 Z_3) +\\
&0.12 (Z_0 Z_2 + Z_0 Z_2 Z_3) + 0.168 Z_1  + 0.174 Z_1 Z_3,
\end{aligned}
\end{equation}
\begin{equation}\label{eq:C_Hamiltonian}
\begin{aligned}
H_{\text{C}}&=-0.106 I + 0.091 (X_0 Y_1 Y_2 - Y_0 Y_1 X_2) +\\
&0.17 (Z_0 + Z_1)+ 0.342 Z_0 Z_1 + 0.24 Z_0 Z_2 + 0.331 Z_1 Z_2 -\\
&0.22 (Z_0 Z_1 Z_2 + Z_2),
\end{aligned}
\end{equation}
\begin{equation}\label{eq:U-D_Hamiltonian}
\begin{aligned}
H_{\text{U-D}}&=-0.347I + 0.182 X_0 X_1 + 0.011 Z_0 Z_1 +\\
&0.39 (Z_0 + Z_1).
\end{aligned}
\end{equation}
The encodings were done using OpenFermion~\cite{mcclean_openfermion_2020}. The Jordan-Wigner encoding, Eq.~\ref{eq:J-K_Hamiltonian}, is on 4 qubits and it involves a relatively large number of 4-local terms. The Bravyi-Kitaev encoding, Eq.~\ref{eq:B-K_Hamiltonian}, is also on 4 qubits and it involves fewer 4-local terms. For the last two encodings we use checksum codes~\cite{steudtner_fermion--qubit_2018,mcclean_openfermion_2020} to reduce the number of qubits. We use an even-weight checksum code to reduce one qubit in  Eq.~\ref{eq:C_Hamiltonian}. Finally, we use an odd-weight checksum code on both spin-up and -down modes to reduce two qubits in Eq.~\ref{eq:U-D_Hamiltonian}.

Similarly, Fig.~\hyperref[fig:5_App_E]{\ref*{fig:5_App_E}(b)} plots infidelity for the ground states of the Hamiltonian for the water molecule (H$_2$O) in its equilibrium geometry and the Hamiltonian for lithium hydride (LiH) with a diatomic bond length of 1.45 $\text{\normalfont\AA}$. Both Hamiltonians were obtained with the Bravyi-Kitaev encoding using OpenFermion. For H$_2$O we use the STO-3G basis and, since for the purposes of this work we do not need the exact value of the ground state energy $E_0$, we approximate $E_0$ by reducing the active space to 4 orbitals and 8 qubits (it usually requires 14 qubits~\cite{tranter_comparison_2018}). Similarly, for LiH we use the STO-6G basis and reduce the active space from 6 orbitals and 12 qubits to 3 orbitals and 6 qubits~\cite{maupin_variational_2021}.

In Fig.~\ref{fig:Molecular_H_Infid_vs_T}, which is similar to Figs.~\ref{fig:2_App_E} and~\hyperref[fig:scaling]{\ref*{fig:scaling}(b)}, we plot infidelity for the same molecular systems of Fig.~\ref{fig:5_App_E}, but using only the Bravyi-Kitaev encoding for the three of them. For sufficiently large  total time $T$, they exhibit a $\mathcal{O}(T^{-2})$ scaling, as expected from Th.~\ref{th:adiabatic_trotter_error}.
\begin{figure}[H]
\includegraphics[scale=0.6]{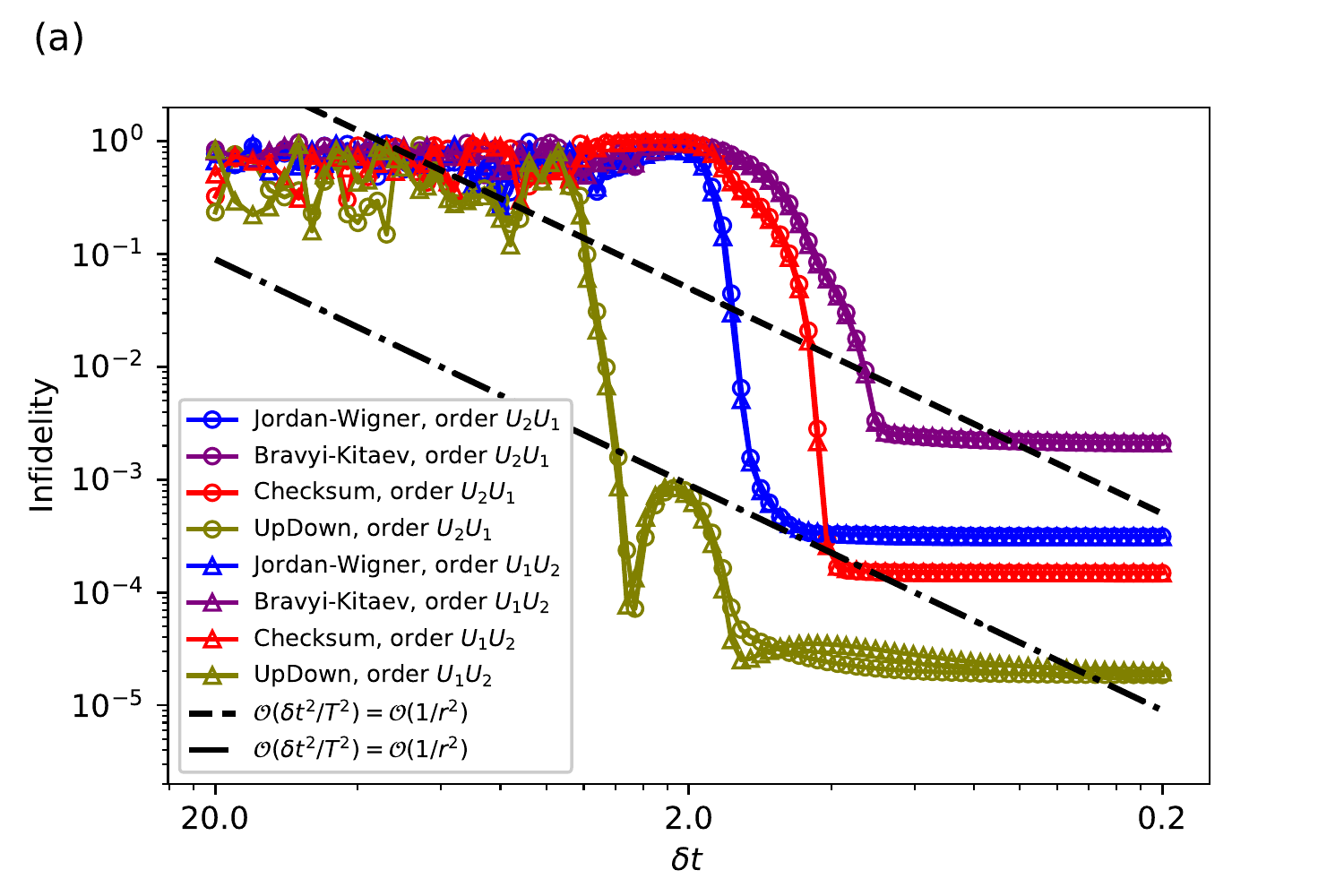}
\includegraphics[scale=0.6]{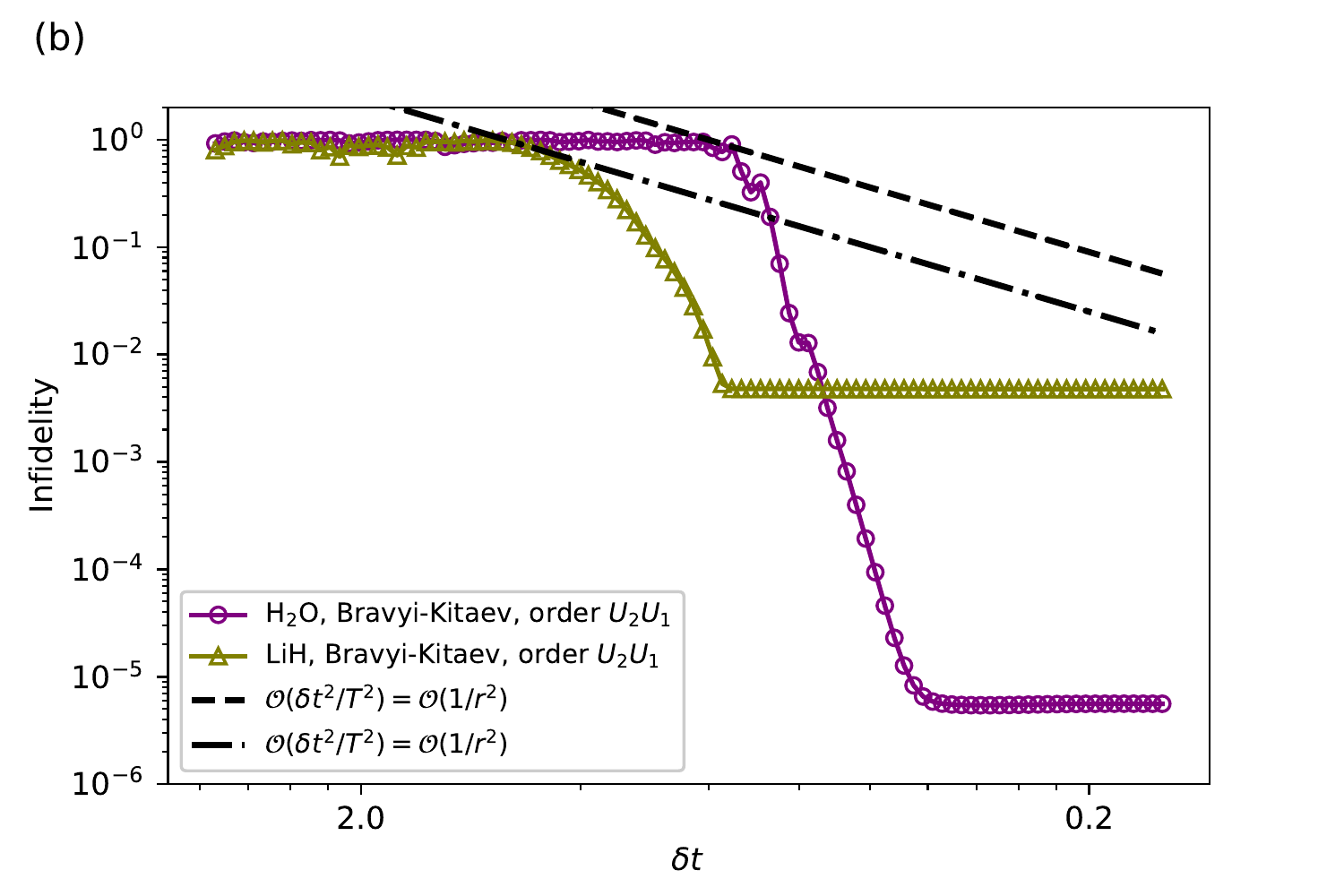}
\caption{ State infidelity between the Trotterized ground state and the target ground state of (a) the Hamiltonian corresponding to an H$_2$ molecule's bond length (different encodings), (b) the Hamiltonian corresponding to the water molecule (H$_2$O) in its equilibrium geometry (purple curve, Bravyi-Kitaev encoding) and the Hamiltonian for lithium hydride (LiH) with diatomic bond length equal to 1.45 $\text{\normalfont\AA}$ (dark gold color, Bravyi-Kitaev encoding). In both (a) and (b), the evolution is over the full ramp and total times (a) $T=200$ and (b) $T=2000$. The same mixer Hamiltonian, $H_1=\sum_i X_i$, is used in both (a) and (b).}
\label{fig:5_App_E}
\end{figure}

\begin{figure}[H]
\includegraphics[scale=0.6]{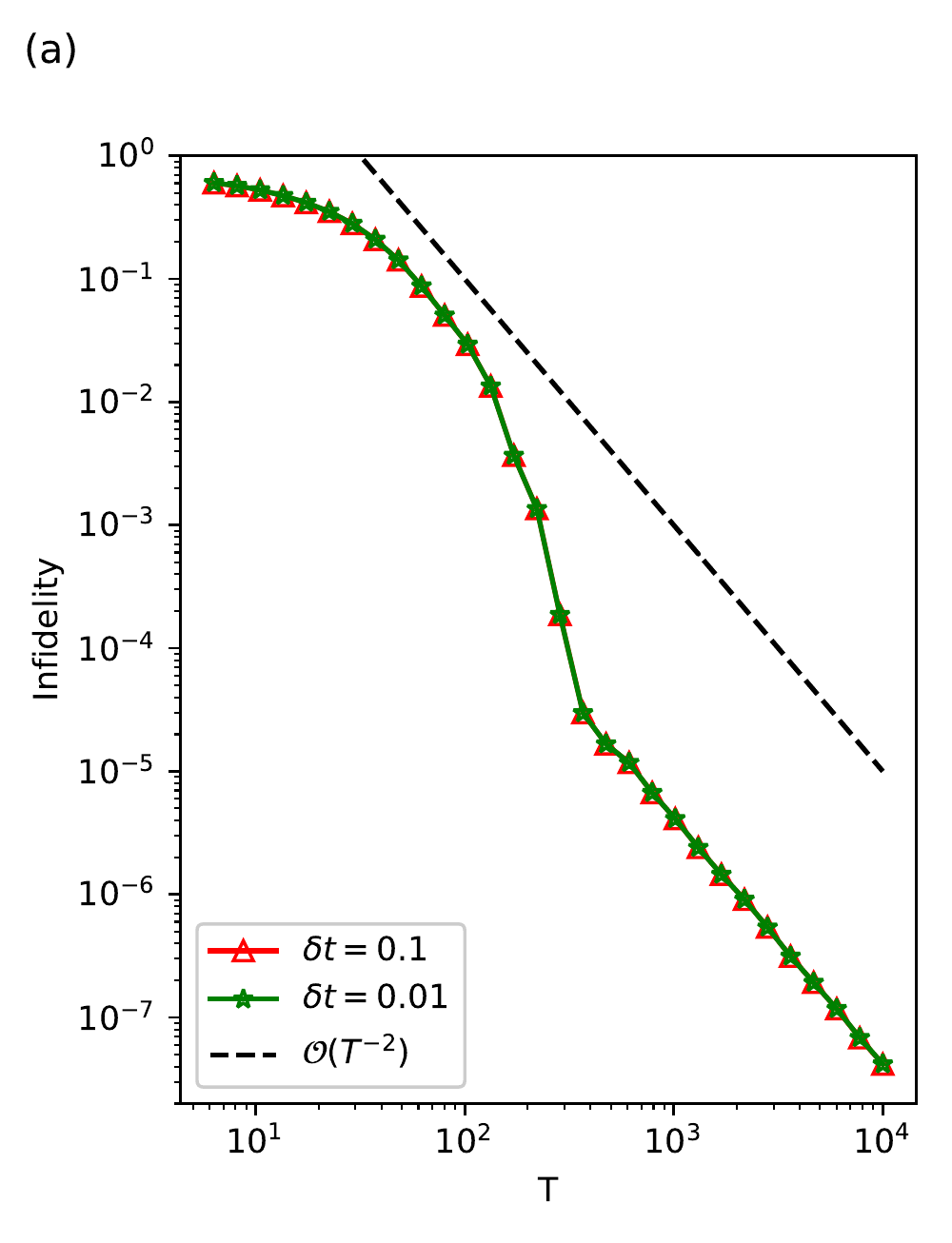}
\includegraphics[scale=0.6]{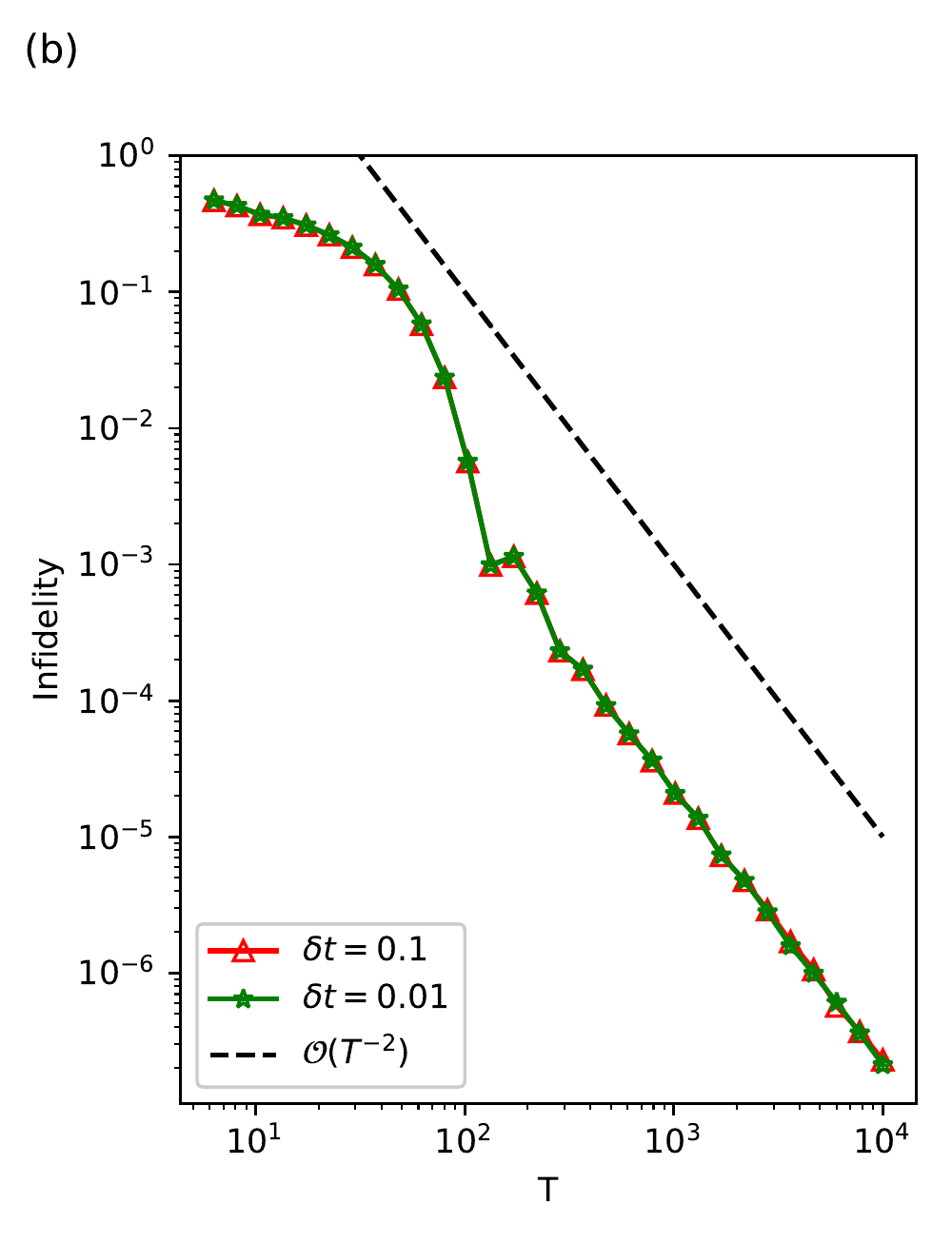}
\includegraphics[scale=0.6]{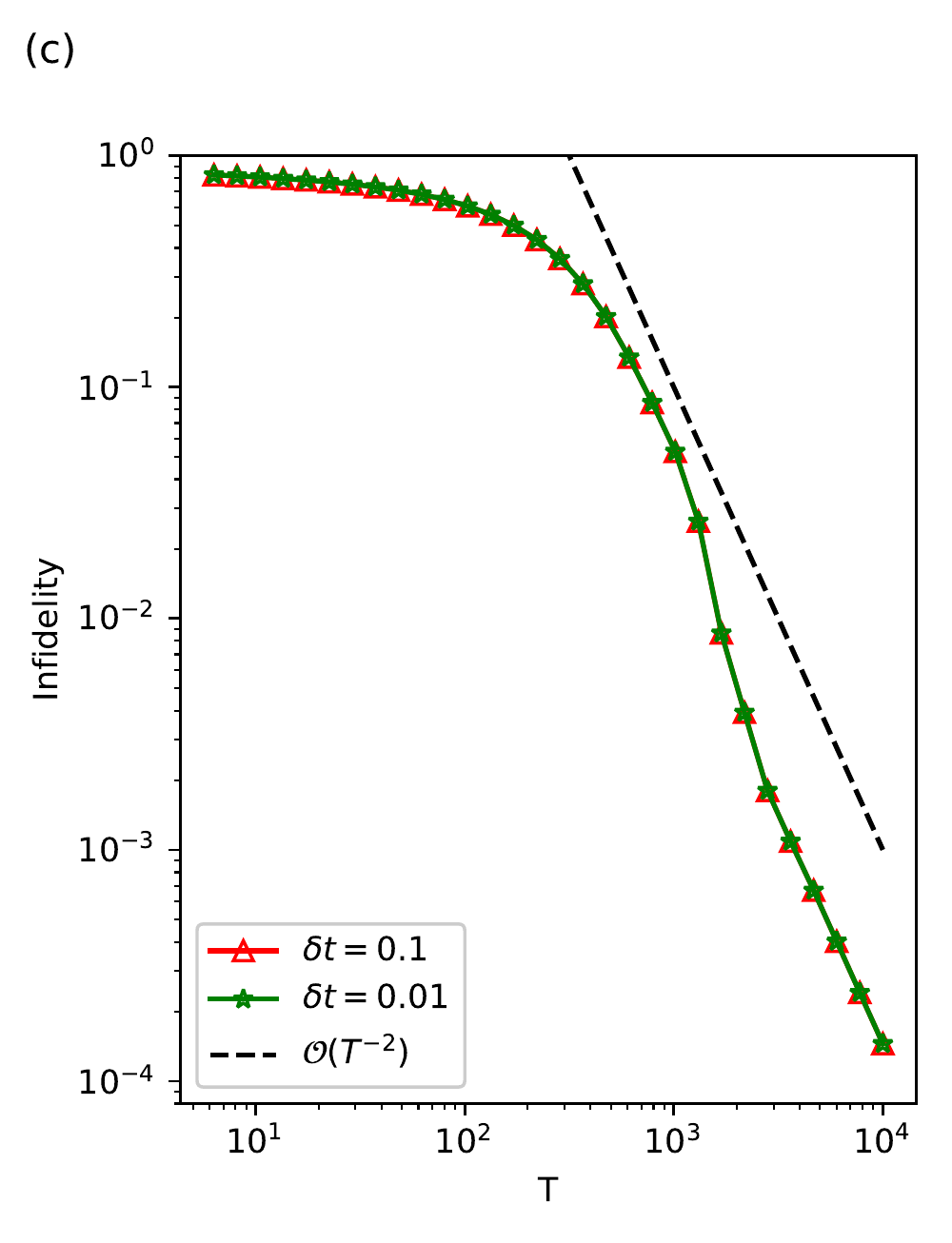}
\caption{ With different time steps $\delta t$ over the full ramp for different total times $T$, we  plot the state infidelity between the Trotterized ground state and the target ground state of (a) the Hamiltonian corresponding to an H$_2$ molecule's bond length, (b) the Hamiltonian corresponding to the water molecule (H$_2$O) in its equilibrium geometry, and (c) the Hamiltonian for lithium hydride (LiH) with diatomic bond length equal to 1.45 $\text{\normalfont\AA}$. For these three cases, we use the Bravyi-Kitaev encoding and the same mixer Hamiltonian $H_1=\sum_i X_i$.}
\label{fig:Molecular_H_Infid_vs_T}
\end{figure}

The \emph{self-healing} or self-correcting characteristic property of the infidelity scaling in Th.~\ref{th:adiabatic_trotter_error} can be observed in the TFIM system as well. Fig.~\ref{fig:TFIM_fidelity} and Fig.~\ref{fig:4_App_E} are similar to Fig.~\ref{fig:instantaneous_eigenstate_fidelity}, except for a TFIM system with $16$ and $64$ energy levels, respectively. In the smaller system, it is easiest to see how the error builds up and then self-cancellation occurs after the corresponding excited state energy levels diverge with respect to the ground energy level.

\begin{figure}[hbt]
\includegraphics[scale=0.3]{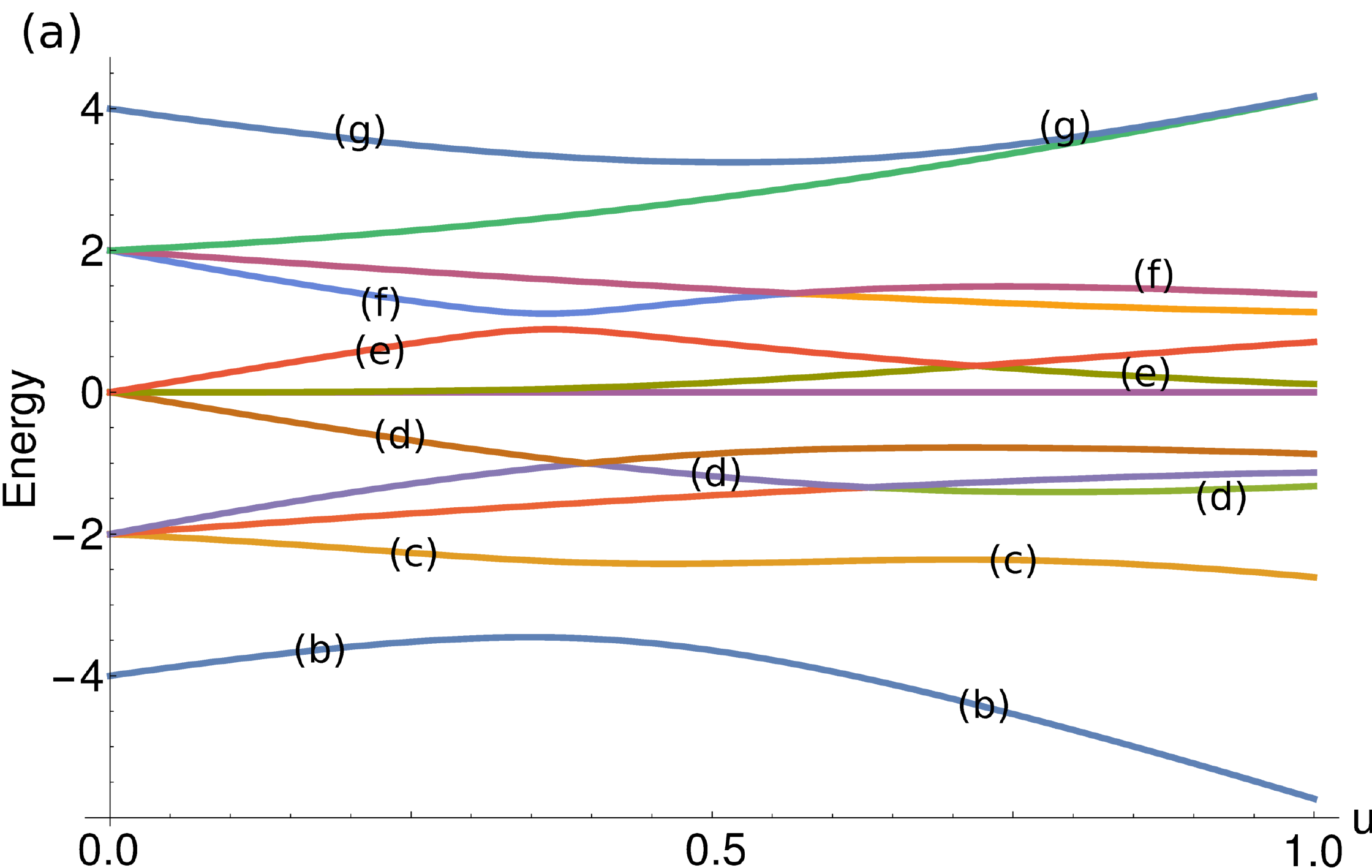}
\includegraphics[scale=0.19]{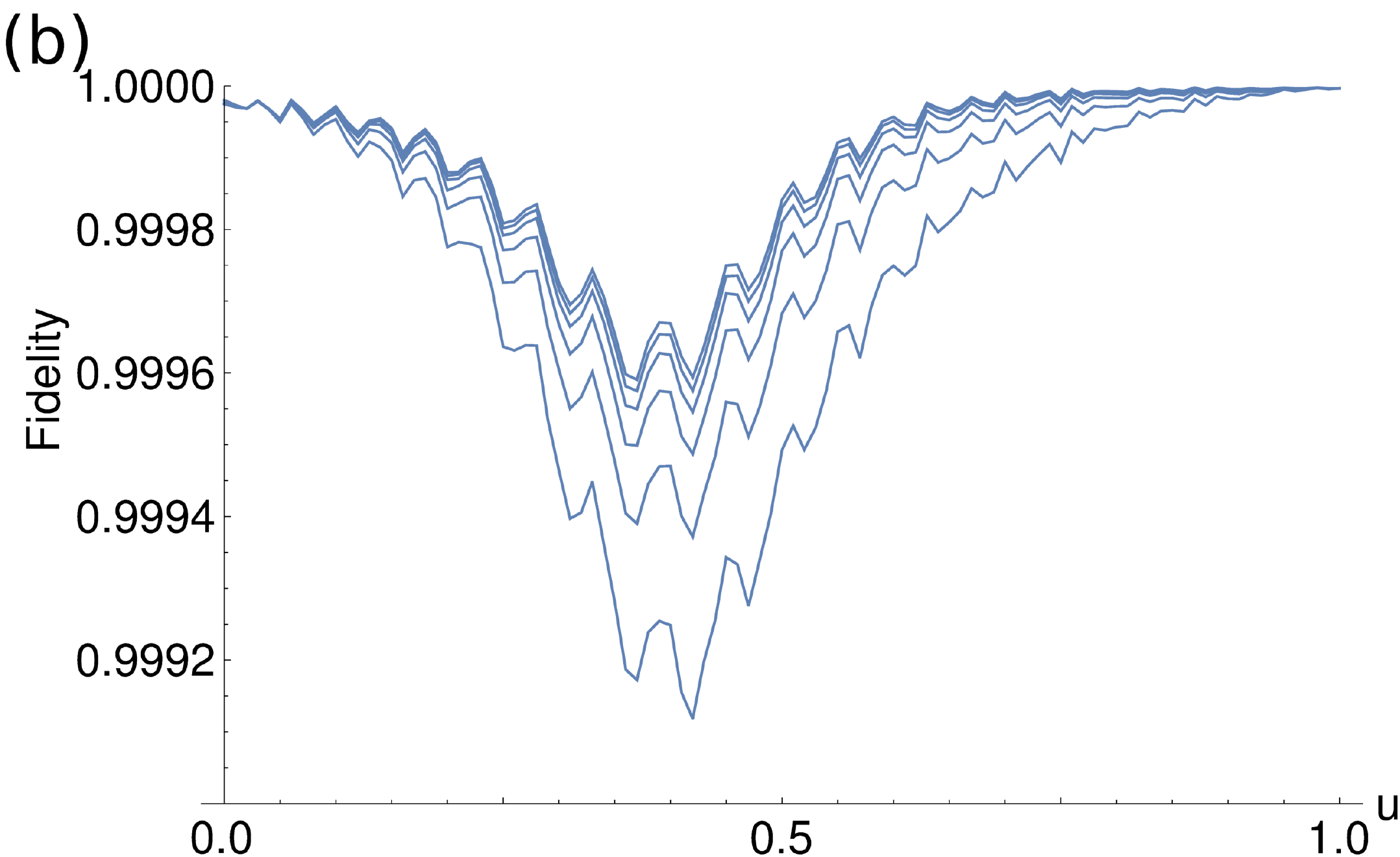}
\includegraphics[scale=0.19]{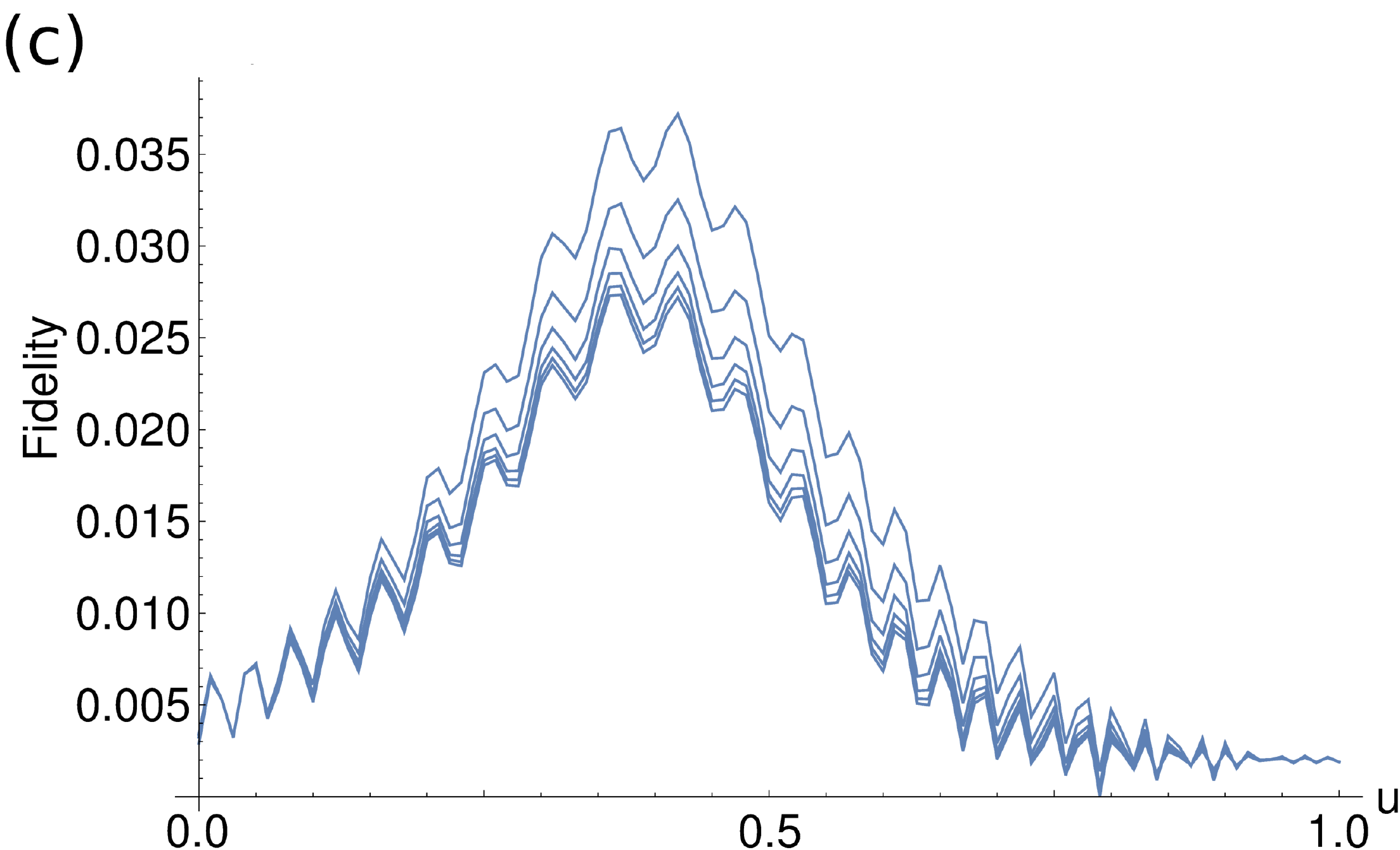}
\includegraphics[scale=0.18]{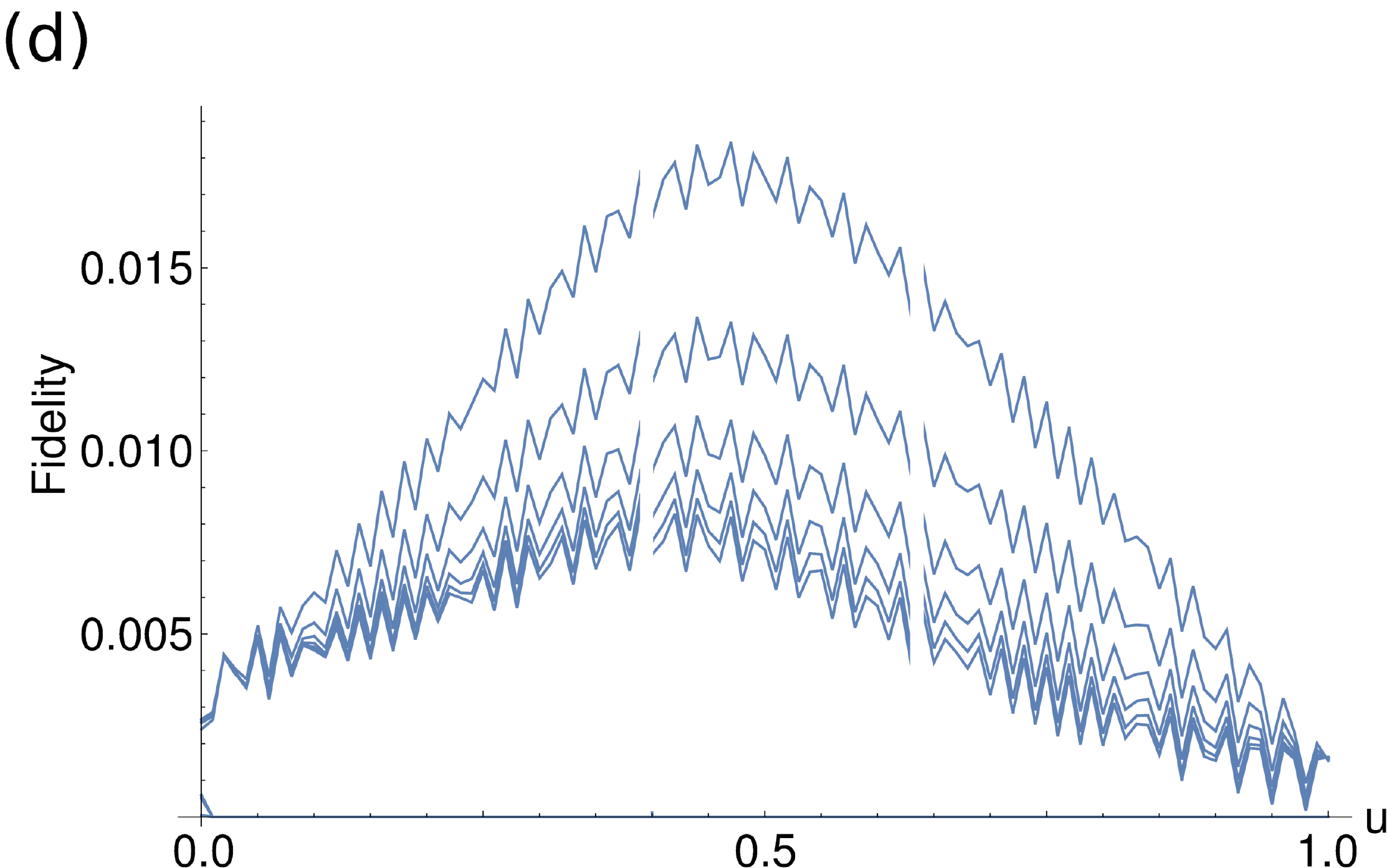}
\includegraphics[scale=0.18]{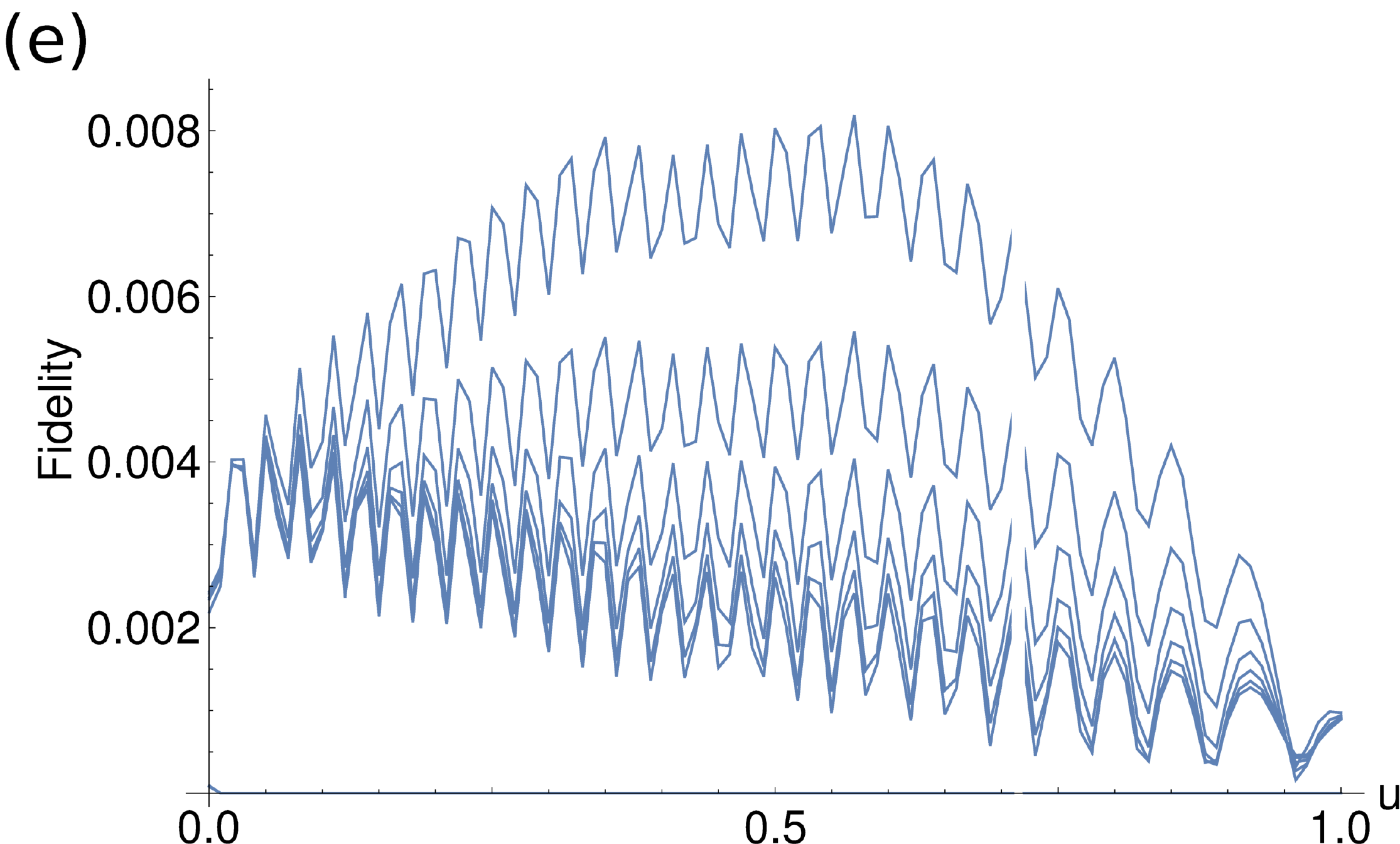}
\includegraphics[scale=0.18]{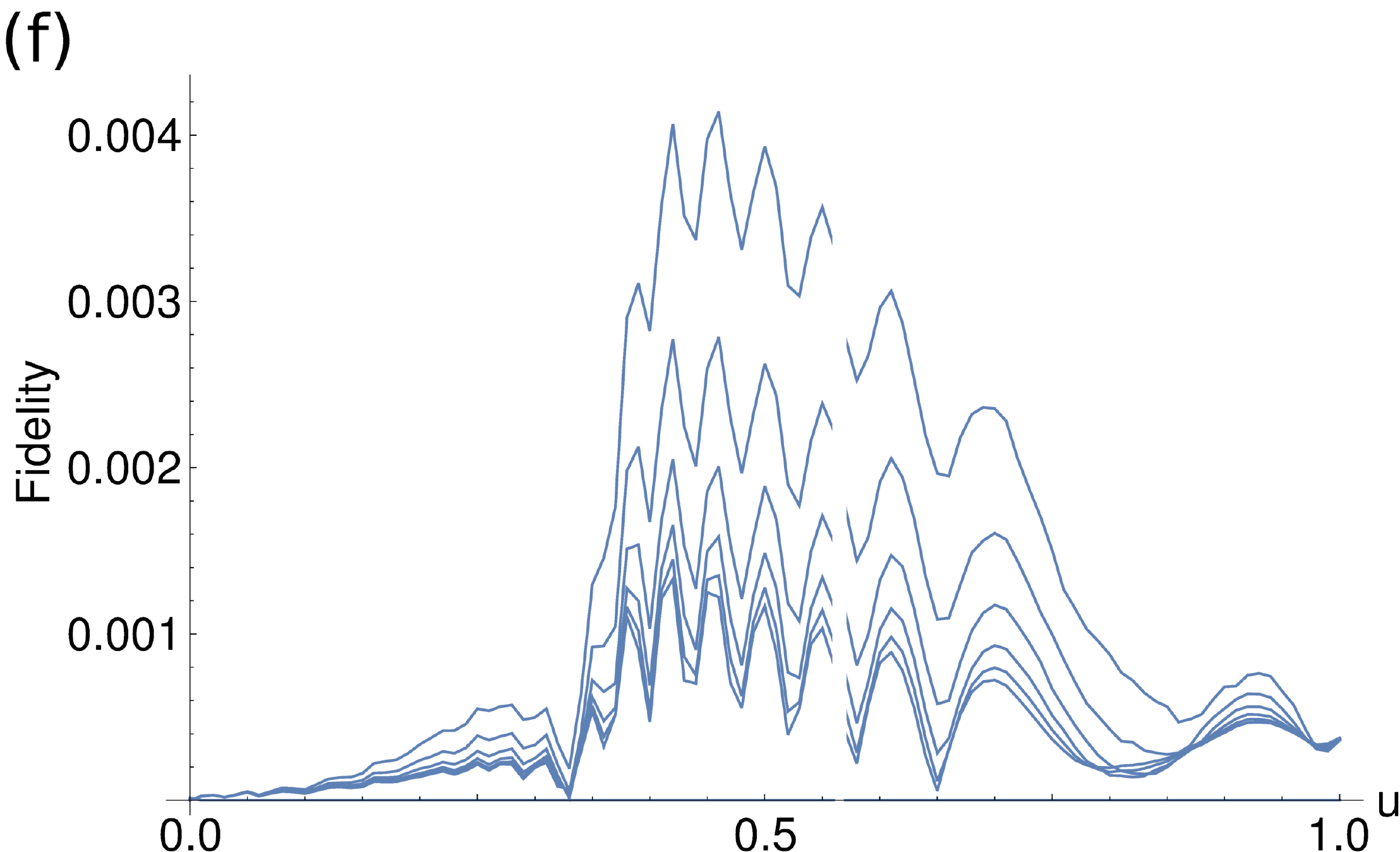}
\includegraphics[scale=0.19]{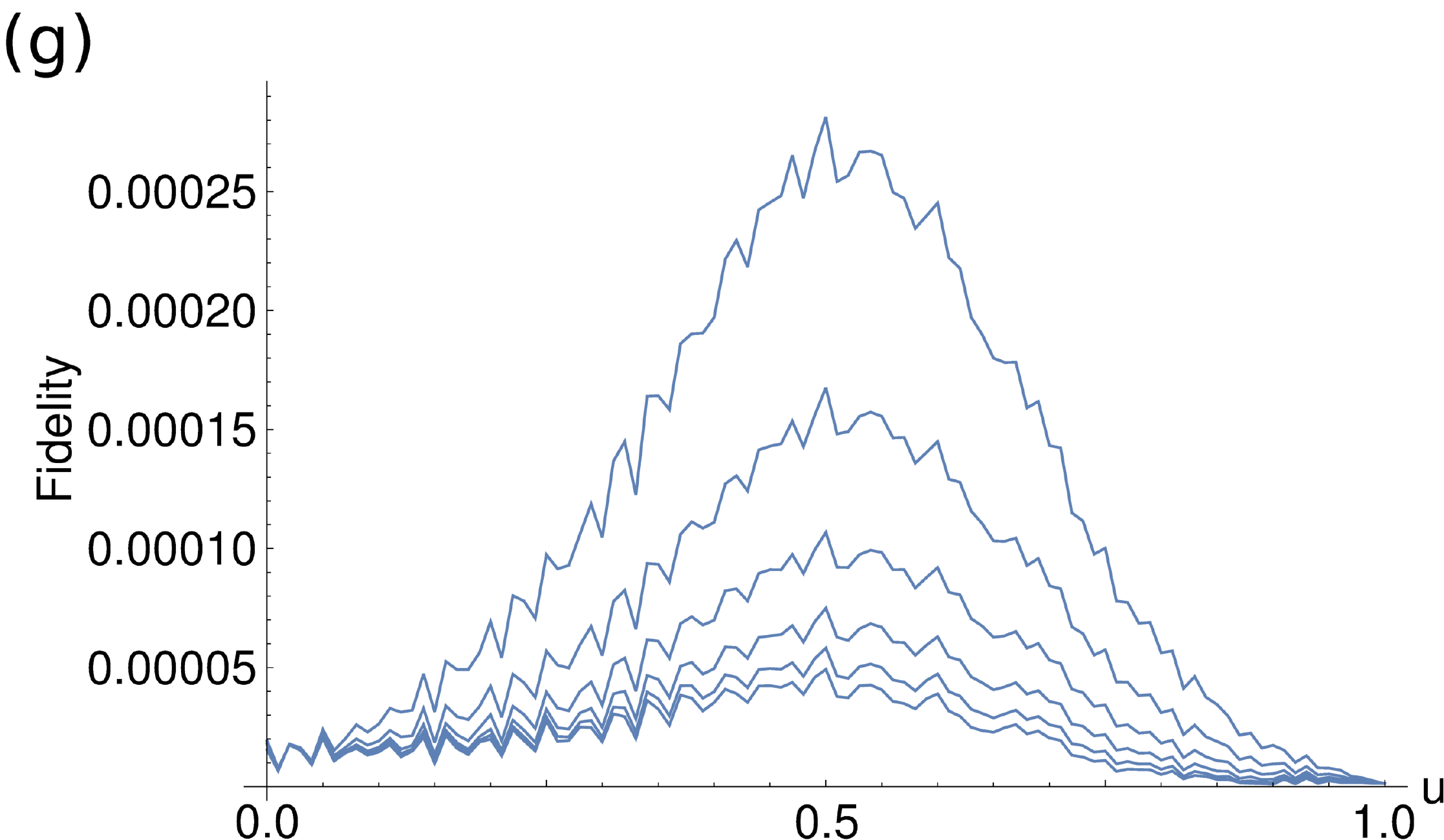}
\caption{(a) TFIM spectrum (\(n=4\)) and state fidelity between Trotterized state and (b) instantaneous true ground state, (c) first excited state, (d) 2/4/5th excited state, (e) 9/10th excited state, (f) 11th/13th excited state, and (g) 15th excited state. Specifically, the Hamiltonian is \(H_1 = \sum_i (-Z_i Z_{i+1} + 0.4 Z_i +0.4 X_i)\) and \(H_2 = \sum_i X_i\) and the total time \(T = 100\) and \(0.05 \le \delta \le 0001 t\). The corresponding energy levels of (b-g) are indicated in (a).}
\label{fig:TFIM_fidelity}
\end{figure}

\begin{figure}[H]
\includegraphics[scale=0.6]{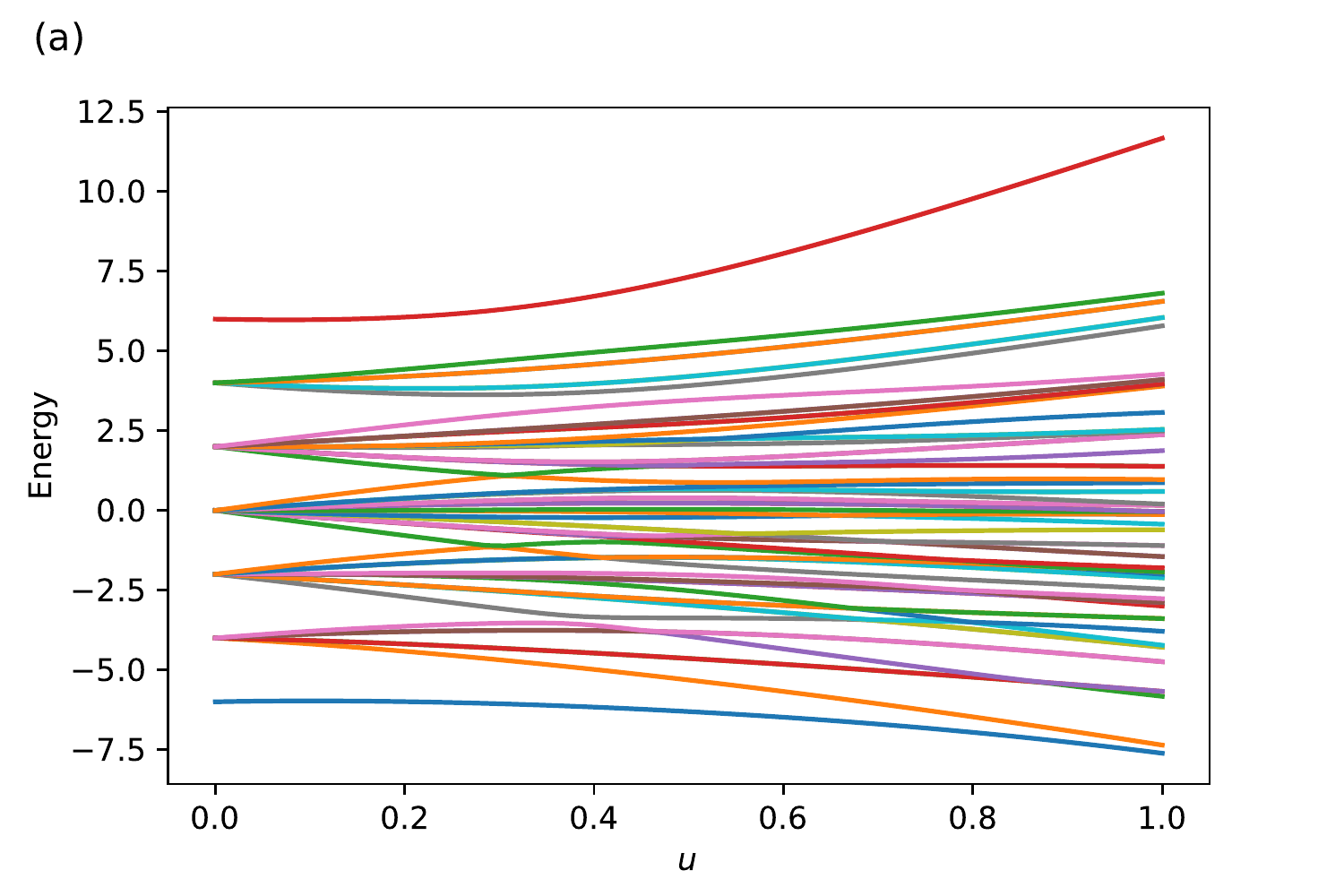}
\includegraphics[scale=0.28]{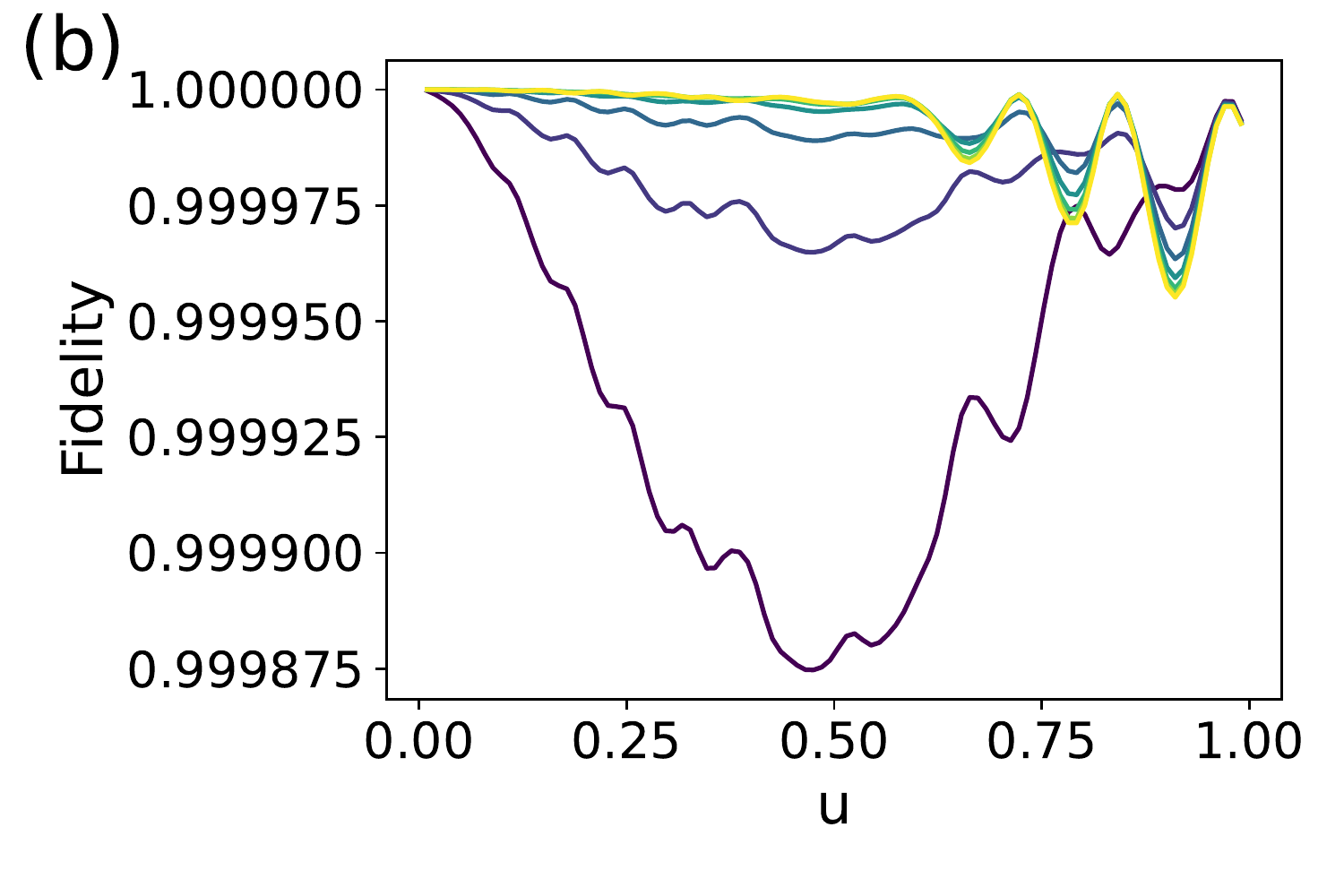}
\includegraphics[scale=0.28]{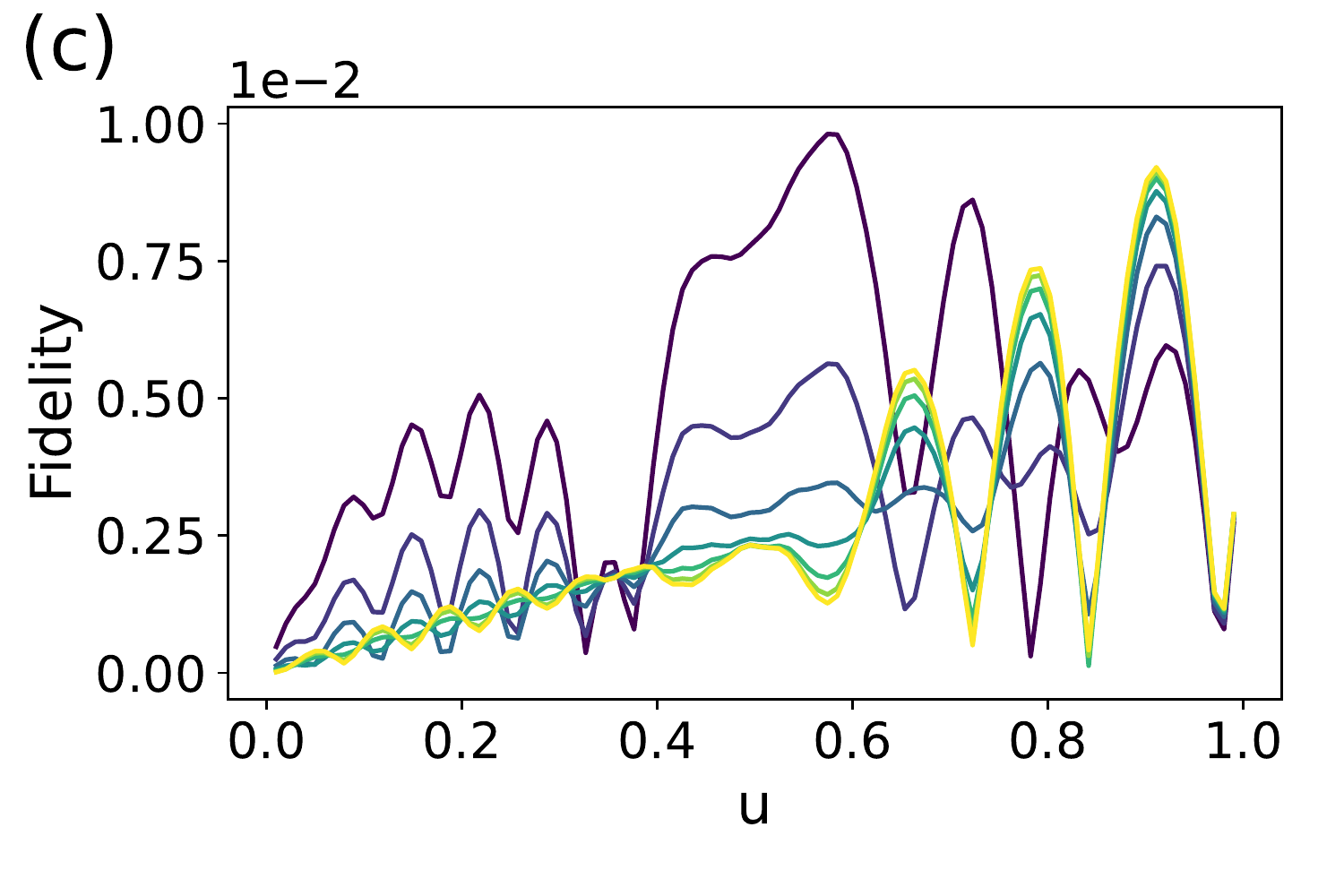}
\includegraphics[scale=0.28]{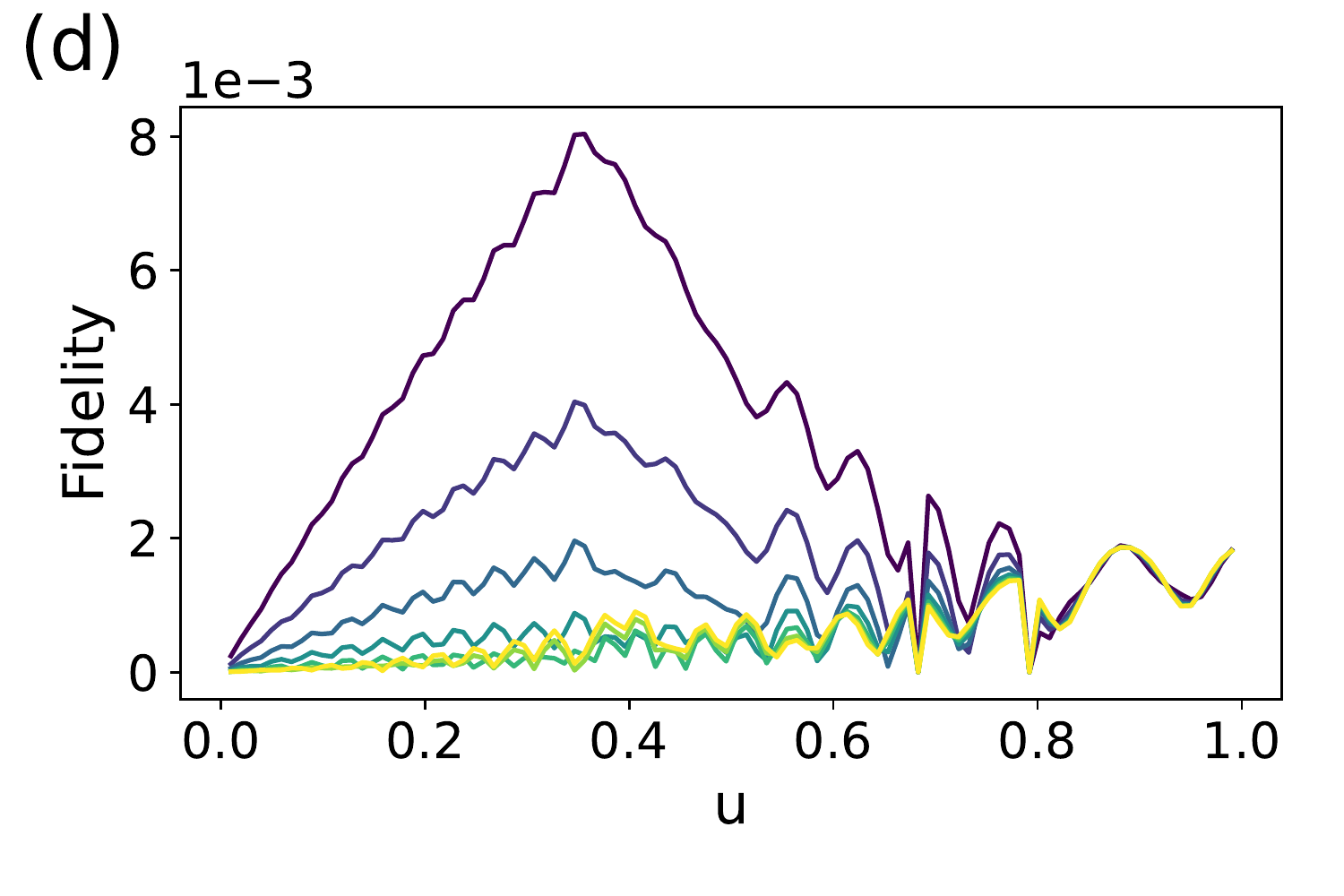}
\includegraphics[scale=0.28]{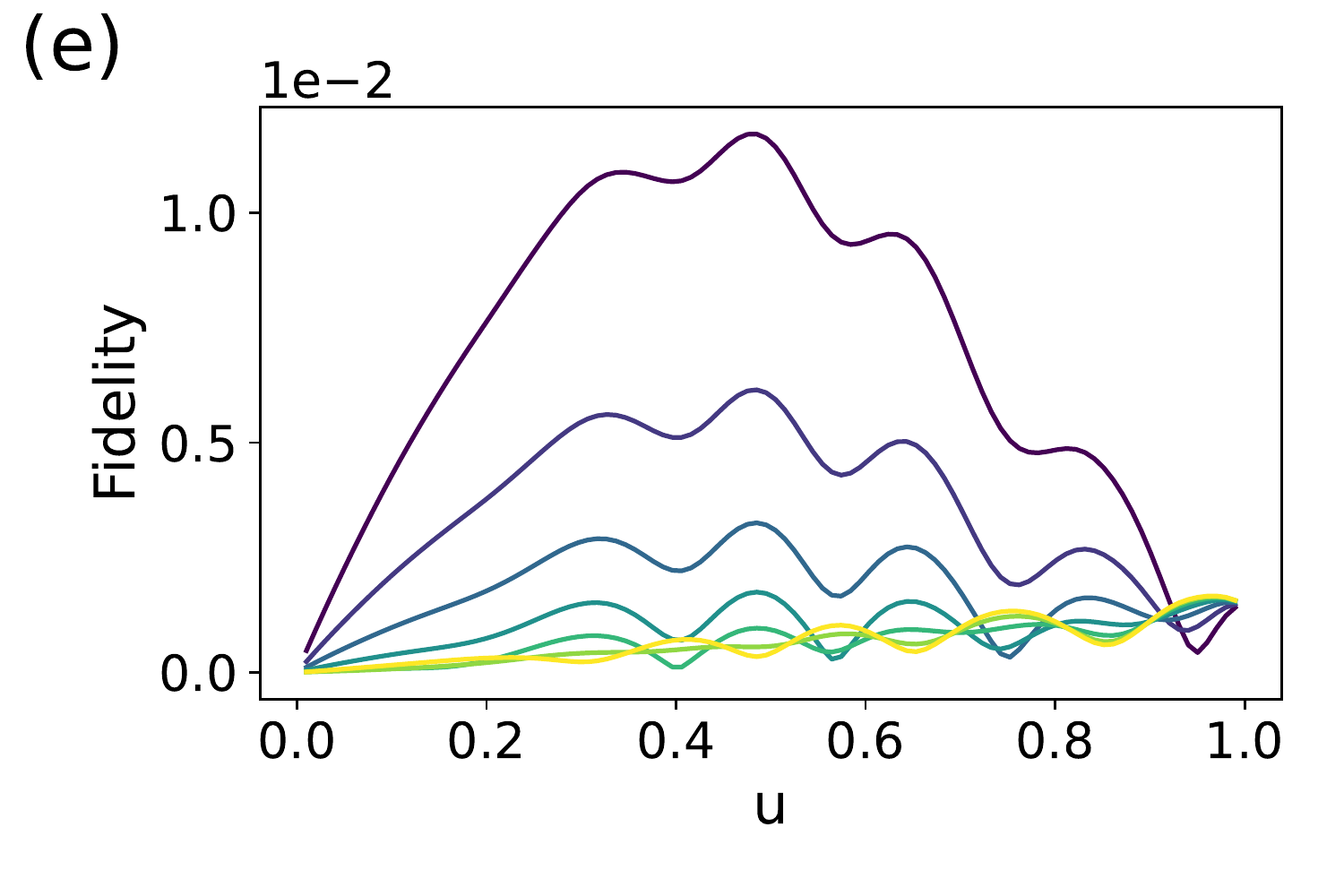}
\includegraphics[scale=0.28]{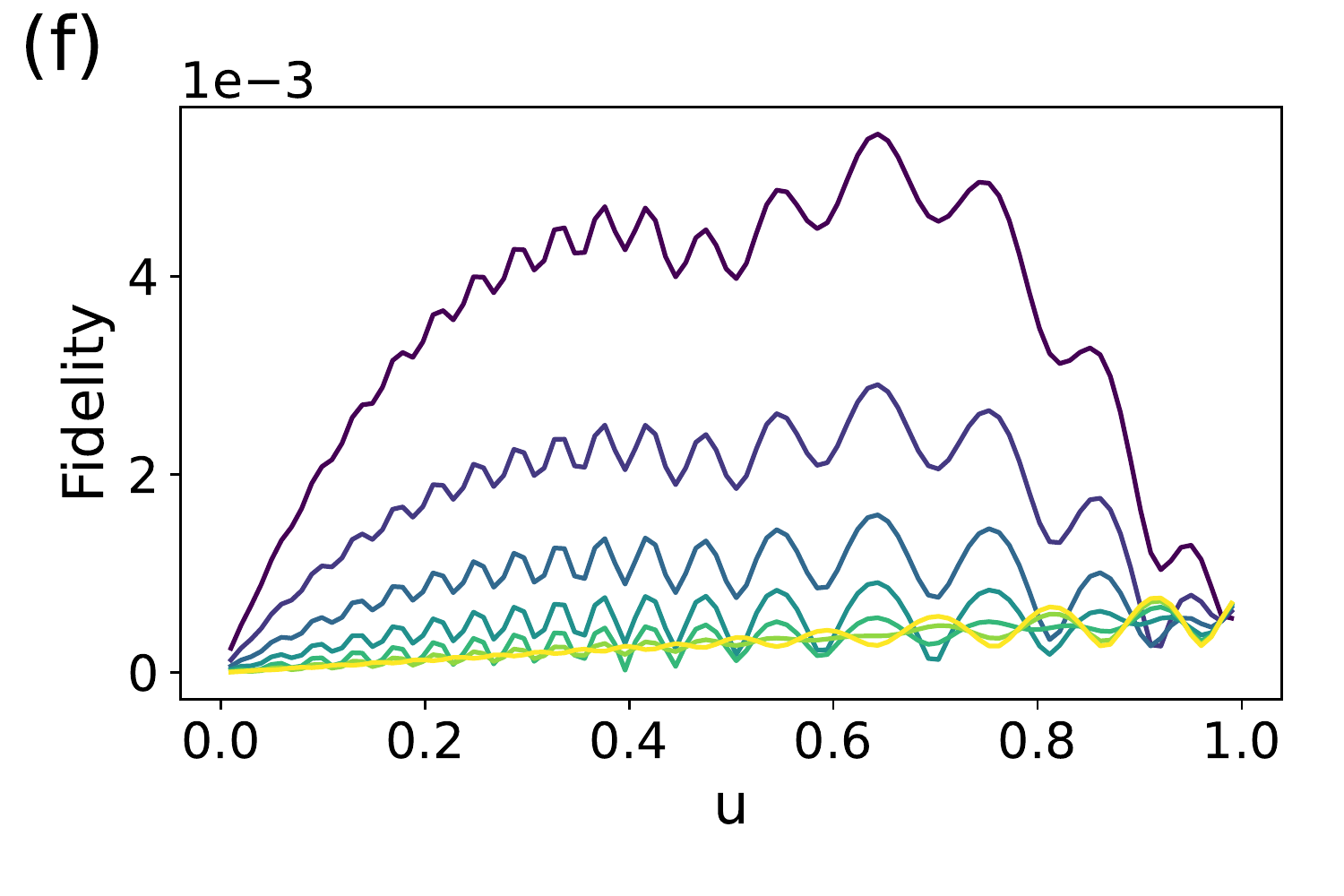}
\includegraphics[scale=0.28]{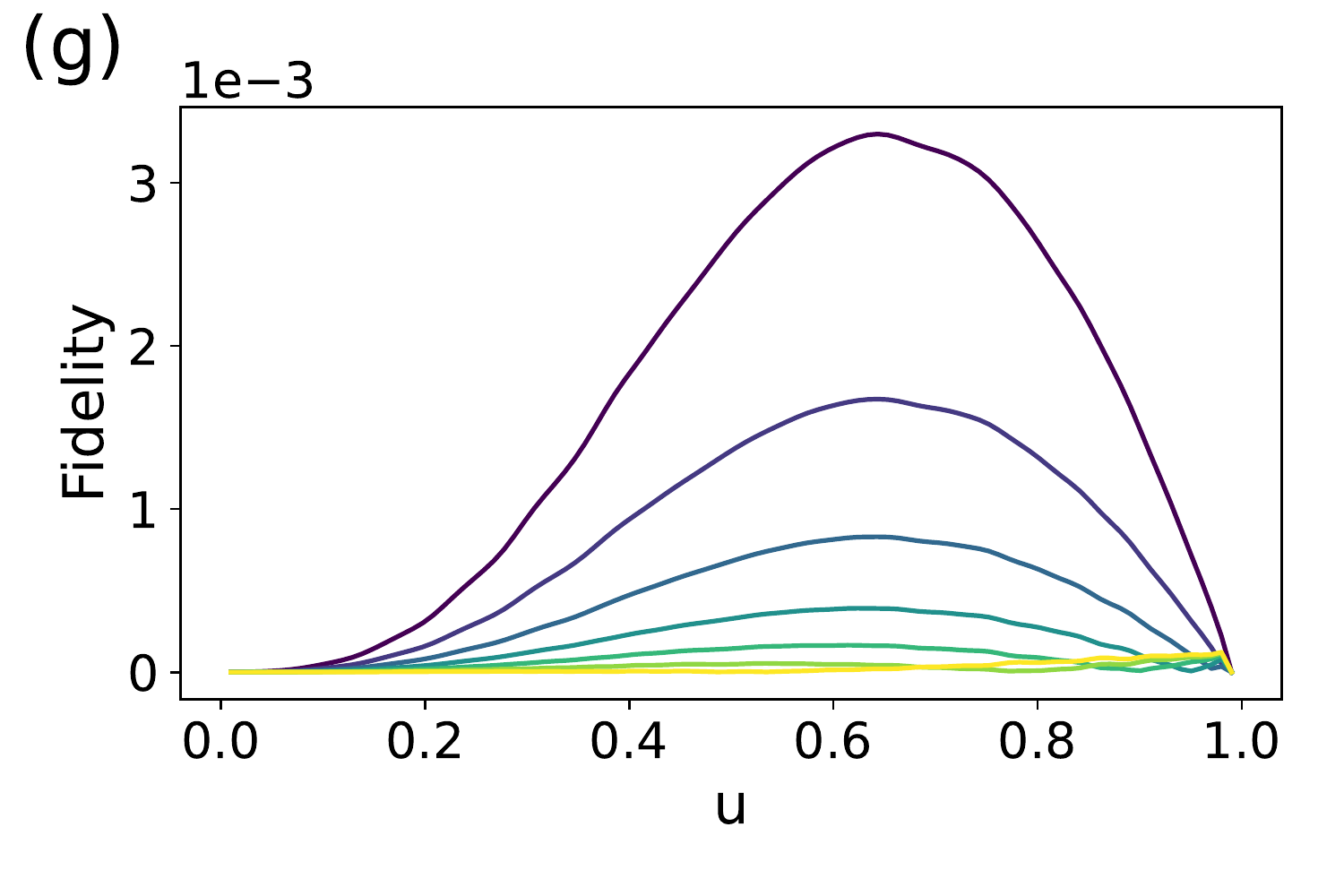}
\caption{(a) TFIM spectrum ($n=6$) and state fidelity between Trotterized state and (b) instantaneous true ground state, (c) 2nd/4th/6th excited state, (d) 7th/9th/10th excited state, (e) 13th/15th/16th excited state, (f) 21st/22nd/23rd excited state, and (g) 32nd/34th excited state. Specifically, the TFIM system Hamiltonian is $H_2=\sum_i(Z_iZ_{i+1}+0.8Z_i+0.9X_i)$, the mixer Hamiltonian is $H_1=\sum_i X_i$, total time $T=100$, and $0.05\leq\delta t\leq 0.001$.}
\label{fig:4_App_E}
\end{figure}

\section{Digitized State Preparation Resource Estimates}
\label{app:resource_estimates}
\renewcommand{\theequation}{F\arabic{equation}}
\renewcommand{\thefigure}{F\arabic{figure}}
\setcounter{figure}{0}

Here we will consider our result in the context of higher-order Trotterization.
Generic \(p\)th-order Trotter error bounds are proportional to the \(p\)th power of \(\delta t\). We will find that this implies that the circuit depth of an adiabatic evolution with a target infidelity \(\epsilon\) will scale as \(\mathcal O(\epsilon^{-\nicefrac{1}{2} - \nicefrac{1}{p}})\).
However, Theorem~\ref{th:adiabatic_trotter_error}'s bound scales independently of \(\delta t\) in the worst case.
We will find that this suggests a circuit depth scaling as \(\mathcal O(\epsilon^{-1/2})\), which is the same as the $p\rightarrow \infty$ limit of the generic bound.
In this way, generic bounds suggest that a reduction in $\epsilon$ scaling comparable to Theorem~\ref{th:adiabatic_trotter_error}'s would require infinite depth, which is clearly far too pessimistic.

Given a target maximum infidelity \(\epsilon\), in the adiabatic limit we want \(\epsilon \in \mathcal O(T^{-2}) \Leftrightarrow T \in \mathcal O(\epsilon^{-1/2})\).
We also want to pick a small enough time step \(\delta t\) such that the first-order Trotter error's contribution to the infidelity is on the same order as the infidelity.
Luckily, from Theorem~\ref{th:adiabatic_trotter_error} this means that we want \(\epsilon \in \mathcal O(T^{-2}) + \mathcal O(T^{-2} \delta t^2) \in \mathcal O(T^{-2})\), which means that the time step is independent of the target infidelity for small enough time step.
This implies that the number of time steps \(r = T \delta t^{-1} \in \mathcal O(T) \in \mathcal O(\epsilon^{-1/2})\).

On the other hand, from the na\"ive \(p\)th-order Trotter bound, we want to pick a small enough time step \(\delta t\) such that the \(p\)th-order Trotter error's contribution to the infidelity is on the same order as the infidelity. Thus we want \(\delta t\) such that \(\epsilon \in \mathcal O(T^2 \delta t^{2p})\).
Since \(T \in \mathcal O(\epsilon^{-1/2})\) from before, this implies that \(\delta t^{-1} \in \mathcal O(\epsilon^{-\nicefrac{1}{p}})\).
This means that the number of time steps \(r' = T \delta t^{-1} \in \mathcal O(\epsilon^{-1/2 -\nicefrac{1}{p}})\).

Notice that \(r' \rightarrow r\) when \(p \rightarrow \infty\), \textit{i.e.,} Theorem~\ref{th:adiabatic_trotter_error}'s favorable scaling with respect to target fidelity is only attainable from na\"ive Trotter bounds if the order of the Trotter expansion goes to infinity.

The cost of digitized state preparation is equal to the number of time steps, \(r\), multiplied by the cost of implementing a constant time step, \(C(N, p)\).
Generally, \(C(N, p)\) scales with system size (number of qubits), \(N\), and with the order of the Trotterization, \(p\).
For instance, for first-order Trotterization of a second-quantized quantum chemistry Hamiltonian, \(C(N, 1) \in \mathcal O(N^5)\)~\cite{Whitfield11}.
Higher-order Trotterizations of a second-quantized Hamitonian would require ever higher powers of \(N\).

This allows us to state a comparison between the digitized state preparation cost scaling of Theorem~\ref{th:adiabatic_trotter_error} with the na\"ive Trotter cost scaling in the following manner: Theorem~\ref{th:adiabatic_trotter_error}'s favorable cost scaling with respect to target fidelity is only attainable from prior Trotter bounds that have infinite power dependence in system size.

\end{document}